\newcommand{\bin}{\{0,1\}}
\newcommand{\N}{\mathbb{N}}                        
\newcommand{\R}{\mathbb{R}}                        
\newcommand{\Z}{\ensuremath{\mathbb{Z}}\xspace} 
\newcommand{\group}{\mathbb{G}}
\newcommand{\mat}[1]{\ensuremath{\mathbf{#1}}}
\newcommand{\ppt}{\ensuremath{\mathsf{PPT}}\xspace}
\newcommand{\PPT}{\ppt}
\newcommand{\cF}{\ensuremath{\mathcal{F}}\xspace}
\newcommand{\D}{\mathcal{D}}
\newcommand{\cR}{\mathcal{R}}
\newcommand{\cE}{\ensuremath{\mathcal{E}}\xspace}
\newcommand{\cS}{\mathcal{S}}
\newcommand{\thetab}{\boldsymbol{\theta}}
\newcommand{\PRF}{\ensuremath{\mathsf{PRF}}\xspace}
\newcommand{\RF}{\ensuremath{\mathsf{RF}}\xspace}
\newcommand{\Enc}{\ensuremath{\mathsf{Enc}}\xspace}
\newcommand{\Dec}{\ensuremath{\mathsf{Dec}}\xspace}
\newcommand{\Setup}{\ensuremath{\mathsf{Setup}}\xspace}
\newcommand{\KeyGen}{\ensuremath{\mathsf{KeyGen}}\xspace}
\newcommand{\EKeyGen}{\ensuremath{\mathsf{EKeyGen}}\xspace}
\newcommand{\poly}{\mathsf{poly}}
\newcommand{\mmax}{m_{\max}}
\newcommand{\nmax}{n_{\max}}
\newcommand\funcFam{\cF_S^{m_\ell}}
\newcommand{\resSpace}{\mathcal{Y}}
\newcommand{\msk}{\ensuremath{\mathsf{msk}}\xspace}
\newcommand{\ct}{\ensuremath{\mathsf{ct}}\xspace}
\newcommand{\ctil}{\ensuremath{\mathsf{ct}_{i,\ell}}\xspace}
\newcommand{\sk}{\ensuremath{\mathsf{ek}}\xspace}
\newcommand{\ski}{\ensuremath{\mathsf{ek}_i}\xspace}
\newcommand{\dk}{\ensuremath{\mathsf{dk}}\xspace}
\newcommand{\pp}{\ensuremath{\mathsf{pp}}\xspace}
\newcommand{\ep}{\ensuremath{T}\xspace}
\newcommand{\cM}{\ensuremath{\mathcal{M}}\xspace}
\newcommand{\sampler}{\ensuremath{\overset{\$}{\leftarrow}}\xspace}
\newcommand{\noise}{\ensuremath{\nu}\xspace}						 
\newcommand{\qk}{\ensuremath{q_{k}}\xspace}
\newcommand{\qkl}{\ensuremath{q_{k,\ell}}\xspace}
\newcommand{\ql}{\ensuremath{q_{\ell}}\xspace}
\newcommand{\qkkey}{\ensuremath{\qkl^{\KeyGen}}}
\newcommand{\qkckey}{\ensuremath{\qkl^{\mathsf{CKeyGen}}}}
\newcommand{\qcil}{\ensuremath{q_{c,i, \ell}}\xspace}
\newcommand{\qcjl}{\ensuremath{q_{c,j, \ell}}\xspace}
\newcommand{\qcenc}{\ensuremath{\qcil^{\Enc}}\xspace}
\newcommand{\qccenc}{\ensuremath{\qcil^{\mathsf{CEnc}}}\xspace}
\newcommand{\distr}{\ensuremath{\mathbb{D}}\xspace} 
\newcommand{\mife}{\ensuremath{\mathsf{MIFE}}\xspace}
\newcommand{\mcfe}{\ensuremath{\mathsf{MCFE}}\xspace}
\newcommand{\nmcfe}{\ensuremath{\mathsf{NMCFE}}\xspace}
\newcommand{\dynmcfe}{\ensuremath{\mathsf{DyNMCFE}}\xspace}
\newcommand{\lnmcfeot}{\ensuremath{\mathsf{DyNo}}\xspace}
\newcommand{\lnmcfe}{\ensuremath{\lnmcfeot}\xspace}
\newcommand{\nmifeot}{\ensuremath{\mathsf{NMIFE}^{\textnormal{ot}}}\xspace}
\newcommand{\mifeot}{\ensuremath{\mathsf{MIFE}^{\textnormal{ot}}}\xspace}
\newcommand{\yb}{\ensuremath{\mathbf{y}}\xspace}
\newcommand{\xb}{\ensuremath{\mathbf{x}}\xspace}
\newcommand{\HS}{\ensuremath{\mathcal{HS}}\xspace}
\newcommand{\CS}{\ensuremath{\mathcal{CS}}\xspace}
\newcommand{\Labels}{\ensuremath{\mathcal{L}}\xspace}
\newcommand\GM{\mathsf{GM}}
\newcommand{\QEnc}{\ensuremath{\mathsf{QEnc}}\xspace}
\newcommand{\QEKeyGen}{\ensuremath{\mathsf{QEKeyGen}}\xspace}
\newcommand{\CQEnc}{\ensuremath{\mathsf{CQEnc}}\xspace}
\newcommand{\QCor}{\ensuremath{\mathsf{QCor}}\xspace}
\newcommand{\QKeyGen}{\ensuremath{\mathsf{QKeyGen}}\xspace}
\newcommand{\CQKeyGen}{\ensuremath{\mathsf{CQKeyGen}}\xspace}
\newcommand{\game}[1]{\ensuremath{\mathsf{G}_{#1}}\xspace}
\newcommand{\Win}[1]{\ensuremath{\mathsf{P}(\adv, \game{#1})}\xspace}
\newcommand{\advantage}{\ensuremath{\mathsf{Adv}}\xspace}
\newcommand{\adv}{\ensuremath{\mathcal{A}}\xspace}					 
\newcommand{\cB}{\ensuremath{\mathcal{B}}\xspace}
\newcommand{\negl}{\ensuremath{\mathtt{negl}}\xspace}
\newcommand{\challenger}{\ensuremath{\mathcal{C}}\xspace}
\newcommand{\ww}{\textnormal{ww}}
\newcommand{\xx}{\textnormal{xx}}
\newcommand{\yy}{\textnormal{yy}}
\newcommand{\zz}{\textnormal{zz}}
\newcommand{\sta}{\textnormal{sta}}
\newcommand{\ad}{\textnormal{ad}}
\newcommand{\mh}{\textnormal{mh}}
\newcommand{\nh}{\textnormal{nh}}
\newcommand{\anyot}{\ensuremath{\textnormal{any}^{\textnormal{ot}}}}
\newcommand{\pos}{\textnormal{one}}
\newcommand{\posp}{\textnormal{pos}}
\newcommand{\lab}{\textnormal{lab}}
\newcommand{\all}{\textnormal{all}}
\newcommand{\sel}{\textnormal{sel}}
\newcommand{\IND}{\ensuremath{\mathsf{IND}}\xspace}
\renewcommand{\cref}{\Cref}
\newcommand*{\centerfloat}{%
  \parindent \z@
  \leftskip \z@ \@plus 1fil \@minus \textwidth
  \rightskip\leftskip
  \parfillskip \z@skip}
\newcommand{\graybox}[1]{\colorbox{gray!48}{#1}}
\newcommand{\gbox}[1]{\colorbox{gray!18}{#1}}
\newcommand{\grayindex}[1]{\colorbox{gray!48}{$\scriptstyle #1$}}
\newcommand{\gindex}[1]{\colorbox{gray!18}{$\scriptstyle #1$}}
\definecolor{jas-color}{cmyk}{0.0, 0.72, 0.72, 0.28}
\definecolor{notes-color}{cmyk}{.06, 0, 1.0, .5}
\definecolor{linda-color}{cmyk}{0.8, 0.2, 0.5, 0.0}
\definecolor{erkan-color}{cmyk}{0.0, 0.42, 1.0, 0.16}
\definecolor{cb1}{rgb}{0.0, 0.0, 0.502}
\definecolor{cb2}{rgb}{1.0, 0.502, 0.0}
\definecolor{cb3}{rgb}{0.0, 0.502, 0.0}
\definecolor{cb4}{rgb}{0.663, 0.663, 0.663}
\newcounter{construction} 
\renewcommand{\theconstruction}{\arabic{construction}}
\newenvironment{construction}[1][]{%
  \refstepcounter{construction}%
  \par\medskip
  \noindent{\bfseries Construction~\theconstruction.%
  \ifx&#1&\else~(#1)\fi}%
  \itshape
}{\par\medskip}
\begin{document}

\title{Enhancing Noisy Functional Encryption for Privacy-Preserving Machine Learning}
\titlerunning{Enhancing Noisy FE for PPML}
\author{Linda Scheu-Hachtel \and
Jasmin Zalonis
}
\authorrunning{L. Scheu-Hachtel and J. Zalonis}
%
\institute{Univerity of Mannheim, Mannheim, Germany \\
\email{\{linda.scheu-hachtel, zalonis\}@uni-mannheim.de}}

\maketitle

\begin{abstract}
	Functional encryption (FE) has recently attracted interest in privacy-preserving machine learning (PPML) for its unique ability to compute specific functions on encrypted data.
	A related line of work focuses on noisy FE, which ensures differential privacy in the output while keeping the data encrypted.
		We extend the notion of  noisy multi-input functional encryption (NMIFE) to (dynamic) noisy multi-client functional encryption ((Dy)NMCFE), which allows for more flexibility in the number of data holders and analyses, while protecting the privacy of the data holder with fine-grained access through the usage of labels.

		Following our new definition of DyNMCFE, we present $\mathsf{DyNo}$, a concrete inner-product DyNMCFE scheme.
		Our scheme captures all the  functionalities previously introduced in noisy FE schemes, while being significantly more efficient in terms of space and runtime and fulfilling a stronger security notion by allowing the corruption of clients.

		To further prove the applicability of DyNMCFE, we present a protocol for PPML based on $\mathsf{DyNo}$.
		According to this protocol, we train a privacy-preserving logistic regression.

\end{abstract}

\keywords{ (Noisy) Multi-Client Functional Encryption \and Differential Privacy \and Privacy-Preserving Analysis.}

\section{Introduction}\label{sec:intro}
Privacy-preserving machine learning (PPML) is crucial for extracting insights from large datasets in sensitive domains like finance, healthcare, and social research.
The necessity to protect the privacy of the data holders sparked an extensive research in PPML.

In this work, we consider a medical-inspired scenario, applicable to other contexts as well.
We assume the presence of a trusted authority, e.g., an ethics committee, that provides a platform for PPML, for example as a sub-feature of an eHR app \cite{eHR}.
On this platform, data holders with sensitive information can voluntarily register to participate in analyses.
Analysts can also register, initiate new studies, and invite data holders to contribute their data securely.

As an example, assume an analyst wants to train a model to predict risk factors for lung cancer.
They can open an analysis with requested attributes "age", "smoker", "BMI", "chronic disease" and "lung cancer".
Then, the data holders can submit their personal data for these requested attributes to the analyst, who stores the data under some pre-determined label, or decline/ignore the request.
Once enough data is provided, the analyst and other interested parties in the same attributes can run their analysis. 
However, we assume all of the data holders will only participate if:
\begin{itemize}
	\item[i)] They only have to provide their data and not actively engage in the analysis.
	\item[ii)] Their privacy remains protected.
\end{itemize}
Additionally, we assume that the authority has limited resources and should not receive the plaintexts or learn the result of the analysis.
To address ii), we must not only protect the plain data, ensuring \emph{input} privacy, but also require additional measures to safeguard \emph{output} privacy, as the analysis results may leak information \cite{carpov2020illuminating}.

The only common approach to obtain both directly is local differential privacy (LDP)~\cite{ldp}.
LDP perturbs data at its source, either through randomized response or addition of noise, such that the original values cannot be determined.
This way, the perturbed data can be collected on a public sever, where any analyst can use it without further user involvement.
However, this approach requires an enormous amount of perturbation, which severely impacts  the utility of any model trained on such data \cite{ploss1, ploss2}.

Therefore, a lot of research focuses on techniques combining mechanisms providing input privacy with global differential privacy (GDP) \cite{Dwork14}.
Unlike LDP, GDP perturbs the analysis output rather than the raw data, resulting in lower utility loss.
A widely used GDP-based approach for PPML is federated learning (FL) \cite{li2020federated}, but since the raw data remains untouched, additional protection is required when data is outsourced.
Other common approaches for PPML combining input privacy with GDP are based on multi-party computation (MPC) \cite{goldreich1998secure, keller2020mp, zhao2019secure} and (fully) homomorphic encryption ((F)HE)~\cite{gentry2009fully,kim2018logistic,kim2019secure},
each offering different trade-offs discussed in \cref{sec:approaches}.

Recently,  functional encryption (FE)~\cite{panzade2023fenet,xu2019cryptonn,chang2023privacy}, has been widely discussed to be helpful in PPML.
FE overcomes the all-or-nothing decryption limitations of standard encryption schemes, as it provides designated decryption keys.
A decryption key $\dk_f$ is associated with a specific function $f$  and instead of decrypting directly to the plain input, the decryption key applied to the ciphertext only reveals the decryption of $f(x)$ and nothing more.
Further, it allows to outsource some of the computational overhead to the analyst, which is necessary if the trusted party has limited resources and the data holders are not actively involved, i.e. i) is fulfilled.

Zalonis et al.~\cite{ZASH24} introduced the notion of noisy multi-input functional encryption (NMIFE), which is a method to combine GDP and multi-input functional encryption (MIFE).
The idea is to equip the functional decryption key $\dk_f$ with a noise value $\noise$ sampled according to some distribution to obtain $f(x_1, \dots, x_n)+ \noise$ instead of $f(x_1, \dots, x_n)$.
By choosing an appropriate distribution, this is enough to realize GDP.
The security of NMIFE schemes ensures that the decryption key does not leak anything about the noise value.

Unfortunately, while a lot of research in FE aims to realize arbitrary functions, i.e., general circuits, these techniques require heavy tools such as indistinguishability obfuscation or polynomial hardness of assumptions on multilinear maps~\cite{TCC:BoyChuPas14, garg2016candidate, waters2015punctured}.
As a result, these schemes are far from being practical.
Currently, efficient (N)MIFE schemes are only feasible for specific families of functions, namely linear functions~\cite{C:AgrLibSte16, PKC:DatOkaTom18, PKC:MKMS22} and quadratic functions~\cite{baltico2017practical, agrawal2022multi}.
However, all of the latter require pairings on bilinear groups, making them rather slow.
Thus, we focus on linear functions, which can already be used for many applications.
Specifically, we present a protocol leveraging ideas from FL, which allows the training of, e.g., linear or logistic regression models.

To meet the requirements of the use case presented above,
we extend the definition of NMIFE to noisy multi-client functional encryption (NMCFE), similar as from MIFE to multi-client functional encryption (MCFE)~\cite{AC:CDGPP18}, to achieve a stronger security notion and inherit additional benefits relevant to our scenario.
Unlike NMIFE, NMCFE equips each ciphertext with an additional label. 
Only those ciphertexts encrypted under the same label can be combined.
This enables a more fine-grained access as it prevents the combination of ciphertexts with different labels and thus mitigates mix-and-match attacks.
In our scenario, these labels could refer to the name of the study or dataset, but could also be time stamps for a long-term medical study.
Additionally, in contrast to NMIFE, the security of NMCFE allows clients to be corrupted, i.e., the adversary may possess the encryption keys of corrupted clients without compromising the security and privacy of the remaining clients.

We further extend NMCFE to a dynamic setting, where  data holders can dynamically register to be part of the dynamic NMCFE (DyNMCFE) for future analysis or drop out without the need to set up a new scheme any time the participants change.
As all analyses remain within the same scheme and are overseen by one trusted party, e.g. an ethical committee, this party can easily keep the privacy loss of each participating data holder  under control.

With DyNMCFE, we obtain a perfectly tailored primitive to our described scenario.
We present a concrete, efficient scheme that supports linear functions.
Like all other (MI)FE schemes supporting GDP, it only allows one encryption per client.
By using labels, we allow one encryption per client-label pair, effectively enabling multiple encryptions per client across different labels.
This may still seem like a limitation, however when training ML models using a predefined dataset is common, especially in a medical context and when combining ML with GDP.
Therefore if we assume that each client submits its data with specific attributes only once to a dataset associated with a given label, allowing multiple encryption per slot is unnecessary.

\paragraph{Contributions}
Our contributions are the following.
\begin{itemize}
	\item We formalize the notion of DyNMCFE and its security.
	This is a natural combination from regular MCFE and the definition of NMIFE provided in~\cite{ZASH24} enhanced to a dynamic setting. 
	In particular, we achieve a stronger security notion as NMIFE, as we allow the dynamic registration of clients and corruptions.
	\item We provide a protocol capturing the described scenario above using DyNMCFE for linear functions as our main cryptographic building block. This protocol ensures both input and output privacy and can be used to realize efficient PPML.
	\item Following our new definition and requirements for the protocol, we present \lnmcfeot, an efficient DyNMCFE scheme for inner-product functionality, allowing for a polynomial number of clients.
	Its security solely relies on a pseudorandom function (PRF).
	This construction is sufficient to provide GDP and more efficient than any previous noisy FE scheme for inner-product functionality.
	Tested on datasets of different sizes, our implementation operates in milliseconds, whereas concurrent schemes take seconds, if not minutes or hours.
	\item
	In addition, we use our protocol to train a differentially private logistic regression on encrypted data, which satisfies input as well as output privacy.
	This is the first logistic regression realized by an inner-product FE scheme.
\end{itemize}

\paragraph{Outline}
The paper is structured  as follows.
In Section~\ref{sec:prelim}, we present the necessary background on DP and the private gradient descent (GD) algorithm.
\cref{sec:approaches} gives an overview and comparison of different approaches to realize PPML  and their suitability for the specific scenario described above.
The class of DyNMCFE is introduced in Section~\ref{sec:nmcfe}.
We propose the protocol for our application in Section~\ref{sec:ppfl}.
Based on the requirements, we present our concrete linear DyNMCFE construction in Section~\ref{sec:noisy-construct}.
Section~\ref{sec:implementation} contains the benchmarking of our scheme as well as the results of the trained logistic regression.
We conclude this work in Section~\ref{sec:conclusion}.

\subsection{Related Work}
\paragraph{Noisy Functional Encryption}
Bakas et al.~\cite{bakas2022heal,bakas2022private} were the first to combine FE and DP.
They used the construction of \cite{C:ACFGU18} for inner-product MIFE to obtain GDP queries to an encrypted database.
However, they only considered the aggregated sum and not weighted sums.
Moreover, they lack a proper security definition for incorporating  noise in the decryption key.

	Zalonis et al.~\cite{ZASH24} filled this gap by introducing noisy FE together with the notion of noise-hiding security and a tailored correctness definition.
	In particular, noise-hiding allows for the function to be known by the decryptor, but it demands the noise to be hidden.
	This relaxation from function-hiding, where the decryption key leaks nothing about the function itself, allowed them to transform a function-hiding into a more efficient noise-hiding scheme.
	This scheme allows only one message to be encrypted per client, which is already sufficient for their discussed use case of PPML.
	However, their scheme is still based on pairings, which are known to be slow.
	Compared to our construction, their approach is less efficient in both size and runtime and neither supports the corruption of clients nor the dynamic generation of encryption keys.

	In an independent work, Escobar et al.~\cite{escobar2024computational} also combined GDP and FE.
	They consider an encrypted dataset to which an adversary may pose queries via an FE scheme.
	Their security definition is coined on computational DP, i.e., an adversary cannot efficiently distinguish between the outputs of the queries on neighboring datasets.
	Our security definition, extended from \cite{ZASH24}, is much broader, as it only considers GDP as a use case.
	Moreover, we consider the multi-client setting, whereas they only have a single data holder encrypting the whole dataset.
	Note that this is also covered by our definition if we set the number of clients to be one.


	A comparison between our scheme and the schemes from Zalonis et al. and Escobar et al. in size of the master secret key $\msk$, the encrypted database $ X$ with $n$ plaintext of size $m$ and decryption key $\dk$ can be found in Table~\ref{tab:size}.
	Here, $\group, \group_1$ and $\group_2$ are cyclic groups of order $q$ and $|\cdot|$ denotes the size of their elements.
	We ignore public information such as the function embedded inside the decryption key, the associated label and client, as those are negligible and part of any protocol.
	Also, we stress that all of these schemes only allow the encryption of one ciphertext per data holder.

	\begin{table}
		\caption{Comparison of sizes of all current FE schemes for data in $\Z_q$ realizing GDP, where $n$ is the number of data holders, $m$ is the vector length of the data and $\lambda$ the security parameter.}\label{tab:size}
		\centering
		\scalebox{1.}{
			\begin{tabular}{|c|c|c|c|}
			\hline
			& $\boldsymbol{|\msk|}$ & $\boldsymbol{|\Enc(X)|}$ & $\boldsymbol{|\dk|}$\\
			\hline
			\cite{ZASH24} & $n(m+3)|\group_2|$ & $n(m+5)|\group_1|$ & $(m+5) |\group_2|$\\
			\cite{escobar2024computational} & $3 \lambda$ & $(nm +2)|\group|$ & $4 \log q$\\
			\text{Ours} & $n\lambda$\tablefootnote[1]{This can be reduced to $\lambda$ if the clients' keys are generated using, e.g., a PRF. However, this introduces an additional overhead in other algorithms and only makes sense in the non-dynamic version.} & $nm \log q$ & $\log q$\\
			\hline
		\end{tabular}}

	\end{table}


\paragraph{Privacy-Preserving Machine Learning using Functional Encryption}
	There exist some works which combine ML and FE.
	Although most of them simply focus on prediction~\cite{carpov2020illuminating,dufour2018reading,ligier2017privacy,ryffel2019partially}, some also include training.
	For instance, Xu et al.~\cite{xu2019cryptonn} employed a 5-layer neural network by observing that the input to the activation function is simply an inner-product of the weights and the data.
	This inner-product in the first layer is computed using their own presented scheme.
	All of the other computations are made in the clear.
	Panzade et al.~\cite{panzade2023fenet} improved the speed of the scheme using another (function-hiding) IPFE.
	However, both of them leak the intermediate results and are only selective-IND-CPA secure, whereas our scheme is adaptively secure and provides output privacy.

	Recently, Chang et al.~\cite{chang2023privacy} combined FL and FE using several MCFE schemes.
	Their model updates are merged using MCFE, which concludes a lower privacy loss, but still none that provides DP.
	Moreover, they need to re-encrypt their ciphertexts in each iteration, whereas our data holders only submit their data once and do not have to participate in the analysis.

\section{Preliminaries}\label{sec:prelim}
\subsection{Notation}
We always denote the security parameter by $\lambda \in \N$,
which parametrizes each scheme and adversary.
For integers $m, n \in \Z$, let $[m;n] := \{x \in \Z| m \leq x \leq n\}$ and $[n] := [1;n]$.  For $x \in \R$, $\lfloor x \rceil$ is its rounding to the nearest integer.
For clarity, we use bold symbols to denote vectors, i.e., $\xb$ is a vector, where $\xb[j]$ is its $j^{th}$ element.
In contrast, objects of some collection that is not regarded as a vector are indexed using subscripts (or superscripts in some cases).
For example, $\xb_i$ represents a vector, not a component of some vector.
If $i$ runs through some index set $[n]$, it means that there are $n$ vectors $\xb_1,\dots, \xb_n$.
If the $n$ objects are scalars (or not explicitly vectors), we will write $x_1, \dots, x_n$ instead.
For a function $f: \mathcal{X}_1 \times \cdots \times \mathcal{X}_n \rightarrow \mathcal{Y}$ and some $\noise\in \mathcal{Y}$, we denote by  $(f+\noise)(x) := f(x) + \noise$ for all $x \in \mathcal{X}_1 \times \cdots \times \mathcal{X}_n$.
For a subset $\Delta\subseteq Y$, we define $f+\Delta =\{f+\noise\mid \noise\in\Delta\}$.
For a set $S \subseteq \N$, we sometimes write $f(\{x_i\}_{i \in S}) = f(x_{i_1}, \dots, x_{i_{|S|}})$ for $i_1 \leq \dots \leq i_{|S|}$, $i_j \in S,~ j \in [|S|]$.

We use the abbreviation $\ppt$ to mean probabilistic polynomial time.
A function $\negl: \mathbb{N} \rightarrow \mathbb{R}^+$ is said to be negligible if for every $c \in \mathbb{N}$, there exists a $\tau \in \mathbb{N}$ such that for all $x \in \mathbb{N}$ with $x > \tau$, it holds that  $|\negl(x)| < 1/x^c$.

For a distribution \distr over some set $\Delta$, $\Pr[x \leftarrow \distr(\Delta) ]$ denotes the probability that $x$ is sampled according to  $\distr$ over $\Delta$.
If $\distr(\Delta)$ is clear from the context, we simply write $\distr$.
For any set $\Delta$, $s \sampler \Delta$ represents the process of uniformly sampling an element $s \in \Delta$.
For two probability distributions $\distr$ and $\distr'$ over the same domain $\Delta$, we write $\distr \equiv \distr'$ to indicate that they are equally distributed.

We restrict ourselves to  inner-product MIFE schemes.
More precisely, we consider functions of the form
	$f_{\yb_1, \dots, \yb_n} (x_1, \dots, x_n) = \sum_{i \in [n]} \langle \xb_i, \yb_i \rangle,$
i.e., each function can be identified with a vector \(\yb = (\yb_1, \dots, \yb_n) \in (\Z^{m})^{n}\).
The class of multi-input inner-product functionalities over \(\Z_q\) is defined by
$\cF_{q,n}^m := \{	f_{\yb_1, \dots, \yb_n}: (\Z_q^m)^n \to \Z_q \mid \yb_i \in \Z_q^m \}.$
Whenever we write $f_{\yb} \in\cF_{q,n}^m $, we assume that $\yb$ can be divided into $n$ equal vectors of size $m$.

\subsection{Differential Privacy}
DP ensures that changing or removing a single individual from a private dataset does not alter the output of an evaluation over this dataset by more than a negligible margin.
The overall result should remain approximately the same regardless of whether an individual participates in the evaluation or not.
Hence, their private attributes remain protected, but it is still possible to infer some information from the dataset.
One important property of DP is its robustness under post-processing~\cite{Dwork14}.
In other words, if the output of a DP mechanism is further processed by another function, the result still fulfills DP.

There are two main forms of DP, LDP and GDP. Since the latter is more common and assumed as the classical DP, \textit{global} is often omitted when it is clear from the context.
Formally, we define LDP as follows:
%
\begin{definition}[Local Differential Privacy~\cite{ldp}]\label{def:ldp}
	A randomized mechanism $f: \mathcal{X} \to \mathcal{Y}$ is $\epsilon$-local differentially private if for all $S \subseteq$ Range($f$) and for all $x, x' \in \mathcal{X}$:
	$$ \Pr[f(x) \in S] \leq \exp(\epsilon) \Pr[f(x') \in S], $$
	where the probability space is over the coin flips of the mechanism $f$.
\end{definition}

In LDP each record is perturbed directly \emph{before} the analysis, whereas in GDP it is perturbed \emph{during} or \emph{after} the analysis.
We define GDP as follows.
Denote by $X \simeq X'$ that two databases $X, X' \in \mathcal{X}$ are adjacent, i.e., differ in only one entry.
\begin{definition}[Global Differential Privacy]
	A randomized mechanism $f: \mathcal{X} \to \mathcal{Y}$ is ($\epsilon, \delta$)-differentially private if for all $S \subseteq$ Range($f$) and for all adjacent inputs $X\simeq X'$:
	$$ \Pr[f(X) \in S] \leq \exp(\epsilon) \Pr[f(X') \in S] + \delta, $$
	where the probability space is over the coin flips of the mechanism $f$. If $\delta = 0$, we say that $f$ is $\epsilon$-differential private.
\end{definition}
%

%
Typically, in the case of GDP, there is a curator, which has hold of the data $X$, and an analyst, who wants to evaluate some function $f$.
The curator returns a perturbed evaluation of the function, i.e., $\widetilde{f}(X) = f(X) + \noise$.
This noise $\noise$ is sampled from a distribution whose parameters are dependent on the leakage of $f$ and the so called privacy budget of the individuals.
As the analyst might query more than one function, it is important that the authority keeps track of the privacy budget and does not answer any more queries once it is consumed.
More precisely, if the analyst queries $n$ functions $f_i$, $i \in [n]$ which are $(\epsilon_i, \delta_i)$ private, then the consumed privacy budget is $(\sum_{i \in [n]} \epsilon_i, \sum_{i \in [n]} \delta_i)$.

In order to set the parameters of output perturbing distributions, we make use of the $l_2$-sensitivity.
This is a measure of how much one individual, i.e., one record, influences the output of a function.
\begin{definition}[$l_2$-sensitivity]
	The $l_2$-sensitivity of a function $f: \mathcal{X} \to \mathcal{Y}$ is
	$ \Delta_2(f) = \max_{X,X' \in \mathcal{X}, X \simeq X'} ||f(X) - f(X')||_2.$
\end{definition}
%

Balle et al.~\cite{balle18a} introduced the so called analytic Gaussian mechanism, that is mostly used in the ML context.
\begin{definition}\label{def:anagm}
	Let $f: \mathcal{X} \to \R^d$ be a function and $\Delta_2(f)$ its sensitivity.
	Let $Z$ be a $d$-dimensional, centered, independent Gaussian random variable with variance $\sigma^2$ and $u= \frac{\Delta_2(f)}{\sigma}$.
	For any $\epsilon \geq 0$ and $\delta \in [0,1]$, the analytic Gaussian output perturbation mechanism $\widetilde{f}(x) = f(x) +Z$
	is $(\epsilon, \delta)$-DP if and only if
	\begin{align}\label{eq:gm}
		\Phi\left(\frac{u}{2} - \frac{\epsilon }{u}\right) - \exp(\epsilon) \Phi\left(-\frac{u}{2} - \frac{\epsilon}{u}\right) \leq \delta,
	\end{align}
	where $\Phi$ denotes the Gaussian cumulative distribution function.
\end{definition}

Let $\GM(\epsilon, \delta, \Delta_2(f)) \to \distr$ be the algorithm that on input of the privacy parameters and the sensitivity outputs a $d$-dimensional, centered, independent Gaussian distribution whose parameter fulfill~(\ref{eq:gm}).
$\GM$ can be implemented efficiently using binary search.

\subsection{Private Gradient Descent}
In order to train a ML model, or more precisely its parameters $\thetab$, one common approach is to use GD~\cite{abadi2016deep}, which is an iteration based optimization algorithm.
The goal is to minimize the loss function $L(\thetab, \cdot)$ of the model.
In private GD, additional privacy measures are applied such as the addition of DP noise.

Let $X= \{(\xb_i)\}_{i \in [n]}$ be a dataset with $m$ attributes, where $\xb[0] := 1$ is a constant, $(\xb_i[1], \dots, \xb_i[m])$ the record and $\xb[m+1] \in \{0,1\}$ the dependent variable.
Moreover, let $\alpha$ be the learning rate and $\nabla L$ the gradient of $L$.
Let $F_{\thetab}(X) := \frac{\alpha}{n} \sum_{i \in [n]} \nabla L(\thetab, (\xb_i))$.
In traditional GD, the parameters are updated in each iteration $t$ by
\begin{align}\label{eq:privgrad}
    \thetab^{(t+1)} := \thetab^{(t)}+ F_{\thetab}^{(t)}(X).
\end{align}
In contrast, in private GD, the parameters are adjusted by applying additional privacy-preserving noise, i.e.,
\begin{align}\label{eq:dpprivgrad}
    \thetab^{(t+1)} := \thetab^{(t)} + F_{\thetab}^{(t)}(X) + \boldsymbol\noise^{(t)},
\end{align}
where $\boldsymbol\noise^{(t)}$ is sampled according to a DP distribution.

\section{Approaches to Privacy-Preserving Machine Learning}\label{sec:approaches}

The advantages and disadvantages of using MPC, HE and FE in the context of PPML have been studied in detail in, e.g, \cite{ZASH24,chang2023privacy}.
In this section, we expand their comparison to our specific scenario and focus on the applicability of solutions based on LDP, FL, MPC, HE and FE.
Recall that our scenario requires that:
\begin{itemize}
	\item[i)]  The trusted party has limited resources.
	\item[ii)] Data holders can join dynamically.
	\item[iii)] Data holders do not have to actively participate during the analysis.
	\item[iv)] The input and output privacy is ensured.
	\item[v)] The analyst decides on when the analysis is performed.
\end{itemize}

For MPC and HE we can always obtain DP through (one of) the trusted parties, thus we can assume that output privacy is always given in the scenarios.
\paragraph{Local Differential Privacy}
In LDP, each data holder locally randomizes their data using some mechanism $\cM$ such that \Cref{def:ldp} is satisfied.
After that the data is collected and can be stored on a publicly available server, without compromising the privacy.
Through the randomization, the data does not leak the plaintexts.
A major advantage of LDP is that individuals perturb their data once and do not have to participate any further, independent on the analysis.
By the post-processing property of (L)DP, the privacy is protected, no matter how many functions are evaluated on the data.
Moreover, there is no need for a trusted party, especially one with demanding resources.

However, while LDP provides stronger individual privacy guarantees and fulfills all of the above requirements, this comes at the cost of severely degraded data utility due to the high level of noise that needs to be added to each individual data point \cite{ploss1, ploss2}.

\paragraph{Federated Learning}
FL is a well known approach to enable the training of ML models via an iterative GD algorithm over partitioned data, without revealing the plain or pseu\-do\-nym\-ized data.
Usually, FL is used if data holder possess multiple data records.
It has been getting an increasing interest not only from academia but also from industry \cite{tensorflow2015, ziller2021pysyft, fan2023fate}.

The core idea is that each data holder trains a local model on their local data and that these models are merged.
More precisely, each data holder trains some iterations of the GD on their local data, optimizing their weights.
Then, the trained weights are aggregated over all parties and again distributed to resume the local training with aggregated weights.

FL was proposed to preserve privacy during training by sharing no plain data but only the intermediate results after the iterations.
As these may still leak too much information, before sharing the local intermediate results, some well chosen noise is added to conceal the concrete result and provide DP.

With FL we can obtain i) and iii)- v).
However, as the model is trained locally, this requires the active participation of the data holders during training, which violates iii). 
Moreover, if we assume a minimal trust model, i.e., as few data holders as possible, each data holder only possesses one record. Hence, the overall amount of noise per local model is significant, almost resembling LDP.

\paragraph{Multi-Party Computation}
In MPC, a given function is jointly computed by a set of parties, where each parties obtains a so-called share of the input data from the data holders.
To obtain, iv), it is important that no party is in possession of all shares.
Otherwise, decryption is possible and the plain data is leaked.

While MPC protocols are highly efficient, and the data holders do not need to be involved after providing their data, fulfilling iii), it still has some disadvantages.
First of all, it is important that no untrusted party is in possession of all of the shares, as otherwise decryption would be possible.
Hence, the security of most MPC protocols either requires the majority of the computing parties to be trusted, or only considers a passive adversary who honestly follows the protocol.
If the data is collected over a period of time, as considered in our dynamic use case, each of the computing parties has to potentially store the received shares over a long time.
They cannot be collected by the analyst as they would otherwise be able to decrypt.
Lastly, there are high communication costs between the parties. Thus, a communication network with strong delivery guarantee is required.
In conclusion, it is hard to fulfill conditions i), ii) and v).

\paragraph{Homomorphic Encryption}
The idea of HE is that one can operate over encrypted data similar than over plain data, i.e. for certain functions $f$ one can also compute $f^*$ over the ciphertexts, where $\mathtt{Dec}(\sk, f^*(\ct_{1}, \dots, \ct_{n}))$ yields $f(\xb_1, \dots, \xb_n)$. 
In particular, HE allows for complex functions without revealing any intermediate result, ensuring iv).
However, this comes at the cost of high computational power of the executing party.

As the key used for decryption is independent of the computation applied to the ciphertexts, it should not be in possession of the party that holds the ciphertexts.
This leaves two options:
First, the analyst is in possession of the decryption key, which implies that we are in need of a trusted computing party with the necessary resources to evaluate the function on the ciphertexts.
The analyst only obtains the final ciphertext, which contains the (possibly perturbed) function evaluation.
Second, the authority is in possession of the decryption key and the analyst performs the computation.
In this case, additional measures are necessary to verify that the analyst has correctly computed the intended function, e.g., using verifiable HE \cite{viand2023verifiablefullyhomomorphicencryption}.
However, verifications either require a trusted execution environment or zero-knowledge proofs, which yield an excessive overhead in the complexity of the function.
Hence, although ii) and iii) are fulfilled, in both scenarios the trusted party is in need of a lot of resources, contradicting i).

\paragraph{Noisy (Dynamic Multi-Client) Functional Encryption}
Classical noisy FE \cite{ZASH24} allows decryption of a noisy function's evaluation on encrypted data, i.e., any party in possession of a decryption key for a function $f$ can obtain $f(x_1, \dots, x_n) + \noise$ from the ciphertexts and decryption key and nothing more, ensuring iv).
Therefore data holders only need to participate by submitting their encrypted data to the analyst, giving v), while the rest of the communication is between the analyst and authority, providing iii).
Moreover, the generation of decryption keys is comparably low effort, fulfilling i).

However, to obtain ii), we have to resort to our new definition of DyNMCFE, which also covers all of the advantages above.
Hence, DyNMCFE seems like the perfect fit for our scenario.
More precisely, the data holders encrypt their data under the label $\ell$ using an encryption key from the trusted authority, which they receive upon registration.
After they submit their ciphertexts to the analyst, they are not a part of the analysis anymore and only act if they want to join a different analysis.
All further communication will then be between the authority and the analyst, who queries functions to receive decryption keys for their analysis.

The only disadvantage is the limitation in the function's complexity.
We solve this drawback in \cref{sec:ppfl} and provide a protocol how ML models such as linear or logistic regression can be trained with only a linear scheme.

\paragraph{Conclusion} The best fits for the described scenario, where conditions i)-v) are fulfilled, are DyNMCFE and LDP.
All that is left is the utility of the trained models of the analyst.
For this reason, we use LDP as a baseline to compare our utility to in \cref{sec:implementation}.
\section{Noisy Multi-Client Functional Encryption}
\label{sec:nmcfe}
In the following we combine the established notions of MCFE \cite{AC:LibTit19} and NMIFE \cite{ZASH24} and extend it to the new class of (Dy)NMCFE.

As in the case of NMIFE, an NMCFE scheme allows to evaluate \emph{perturbed} $n$-ary functions $f_{\ell}$ on encrypted data, receiving only $f_{\ell}(\xb_{1, \ell}, \cdots, \xb_{n, \ell}) + \noise$.
Stemming from the notion of MCFE, we introduce labels $\ell$ to the ciphertext and each decryption key, ensuring that only ciphertexts and decryption keys with the same label can be combined.
This ensures a fine-grained access structure and is especially helpful to achieve DP, where each function evaluation on a different set of data should contain its own perturbation noise.
In addition, this allows for different vector lengths of $\xb_i$, i.e. different number of attributes, per label, as each function only can be applied to the specific encoded label.

The extension to the dynamic setting is inspired by Chotard et al. \cite{C:CDSGPP20}, where the encryption key generation is separated from the setup phase.
This allows us to dynamically register new clients, eliminating the need for a pre-defined number of clients.
In an DyNMCFE each function $f_{\ell, S}$ is additionally associated with a set $S$ of clients, i.e., it can only be applied to the ciphertexts of these particular clients.
Our definition allows for a polynomial number of clients, which suffices in practice, as the number of data holders is also bounded by, e.g., the people living in a country\footnote{It is important to note that in our concrete scheme presented in  \Cref{sec:con-NMCFE}, this bound can be chosen arbitrarily large (as long as it remains polynomial in the security parameter), without sacrificing any efficiency.}.
A naive way to achieve this with classical NMCFE would be to set the function coefficients for the clients $i \notin S$ to zero, but this would require a  pre-defined number of clients, which could be unpractical.

\subsection{Definition and Correctness}

In the following we give the formal definition of DyNMCFE, where we mark deviations from NMIFE coming from each modification in boxes, i.e., from \graybox{multi-client} and \gbox{dynamic number of clients} respectively.
In other words, if we strip away these changes, it gives us the formal definition of NMIFE as in \cite{ZASH24}.
%

\begin{definition}[(Dynamic) Noisy Multi-Client Functional Encryption]\label{def:nmcfe}
Let $\cF_{\lambda, \nmax}^{\mmax} = \{\cF_{S}^{m_\ell}\}$ be a family of \gbox{sub-families  $\cF_{S}^{m_\ell}$, parameterized by an index} \gbox{ set $S \subseteq [\nmax]$, $\nmax = \poly(\lambda)$,} and \graybox{$\mmax \geq m_\ell \in \N$}, containing functions $f: (\mathcal{X}_{\ell}^{m_\ell})^{|S|} \to \mathcal{Y_\ell}$.
Let \graybox{$\Labels = \{0,1\}^* \cup \{\bot\}$} be a set of labels.
A dynamic noisy multi-client functional encryption scheme (DyNMCFE) of polynomial arity for $\cF_{\lambda, \nmax}^{\mmax}$ and \Labels is a tuple of five efficient algorithms $\mathsf{NMCFE}=(\Setup, \gbox{\EKeyGen,}$ $\Enc, \KeyGen, \Dec)$  of the following form:
\begin{description}
	\item[$\Setup(1^\lambda, \mmax, \nmax)$:]
	Takes as input the security parameter $\lambda$, the maximum vector length $\mmax$ and maximum number of clients $\nmax$.
	It outputs a set of public parameters \pp implicitly defining the function family and a master secret key \msk.
	All remaining algorithms implicitly take \pp.
	\item[\gbox{$\EKeyGen(\msk, i)$}:] Takes as input the master secret key $\msk$, an index $i \in [\nmax]$ and generates a secret encryption key $\ski$. It sets $\msk := \msk \cup {(i, \ski)}$ and returns $\ski$.
	\item[$\Enc(\ski, x_i, $ \graybox{$\ell$}):] Takes as input the encryption key \ski for a slot $i \in [\nmax]$, a message $x_i \in \mathcal{X}_\ell^{m_\ell}$ and \graybox{a label $\ell \in \Labels$}.
	It outputs a ciphertext $\ct_{i,\grayindex{\ell}}$.
	\item[$\KeyGen(\msk,$ \gbox{$S$,}$ f, $\graybox{$\ell$,}$ \distr)$:] Takes as input the master secret key \msk, a function $f$  \gbox{of arity $|S|\subseteq [\nmax]$,} \graybox{a label $\ell \in \Labels$} and a distribution $\distr$ over some subset $\Delta \subseteq \mathcal{Y}_\ell$ such that $\Pr\left[f + \Delta \in \cF_{S}^{m_\ell}\right]=1$.
	Sample $\noise \leftarrow \distr$ and output a decryption key $\dk_{f,\grayindex{\ell}, \gbox{\(\scriptstyle S\)}}$.
	\item[$\Dec(\dk_{f, \grayindex{\ell}, \gbox{\(\scriptstyle S\)}}, \{\ct_{i,\grayindex{\ell}}\}_{i \in \gbox{\(\scriptstyle S\)}})$:] Takes as input the decryption key $\dk_{f, \ell, S}$ and $|S|$ ciphertexts $\{\ct_{i,\ell}\}_{i \in S}$, all encrypted under the same label used for the decryption key.
	It outputs a value $z \in \mathcal{Y_\ell}.$
\end{description}
\end{definition}


The definition is equivalent to its non-dynamic version if all parameters are fixed and encryption keys are generated during setup.
In detail, if $S=[\nmax]$ for all keys, $m_\ell = \mmax$ for all $\ell \in \Labels$ being part of the public parameters, and each $\sk_i$ is generated during setup, this yields NMCFE.


\begin{remark}
	The knowledgeable reader may notice that in contrast to the definition of classical MCFE, the functions are also tied to the labels.
	This is necessary to ensure that only decryption keys under the same label as the ciphertexts are able to decrypt these, which is needed for DP to ensure independent noise in each function evaluation.
	Generally, the label can always be provided to $\KeyGen$, but is ignored inside the algorithm to guarantee that each key can be applied to the ciphertexts independent of their labels.
\end{remark}

\subsubsection*{Correctness}
The correctness definition for an NMIFE scheme says that if all algorithms have been applied correctly, then the distribution of $\Dec(\dk_f, \Enc(x))-f(x)$ should be indistinguishable from $\distr_f$.
This results in obtaining $f(x)+ \noise$ with almost the same probability as the probability that $\noise$ is drawn.
For DyNMCFE, the definition is modified such that decryption keys can only decrypt those ciphertexts which share the same label and are in the corresponding index set.

Formally, we obtain the following definition, which is a straightforward adaptation from~\cite{ZASH24}.

\begin{definition}[Correctness of DyNMCFE\label{def:correctness_unboundedNMCFE}]
	A dynamic NMCFE scheme $\nmcfe=(\Setup, \EKeyGen, \Enc,\KeyGen, \Dec)$ is correct if for any security parameter $\lambda$, maximum number of clients $\nmax = \poly(\lambda)$ for all $S\subseteq [\nmax]$,  $f\in \cF_S^{m_\ell}$, $\mmax \in \N$, $\ell \in \Labels, x_i \in \mathcal{X}_{\ell}^{m_\ell}$ for all $i \in |S|$, $m_\ell \leq \mmax \in \N$ and all distributions $\distr_f$ over some set $\Delta\subseteq\resSpace_\ell$ with $f+\Delta\in \funcFam$,  when  $(\pp, \msk) \leftarrow \Setup(1^\lambda, \mmax)$, $\{\ski \gets \EKeyGen(\msk, i)\}_{i \in S}$ and $\ct_{i,\ell} \leftarrow \Enc(\sk_i, x_i,  \ell)$, it holds:
	\begin{align*}
		\Pr
		\begin{bmatrix}
			\Dec(\KeyGen(\msk, S, f, \ell, \distr_f), \{\ct_{i,\ell}\}_{i \in S}) - f(\{x_i\}_{i \in S})
		\end{bmatrix}
		\equiv \distr_f,
		\end{align*}
	 where the probability is taken over the random coins of the algorithms of \dynmcfe.
\end{definition}
\subsection{Security of NMCFE}
Informally speaking, an NMCFE scheme is considered secure, if the ciphertexts and decryption keys do not reveal any information about the plaintext data and the noise for any label, except for the desired information $f_{\ell}(\{x_{i, \ell}\}_{i \in S}) + \noise$.
Similar as in the definition of DyNMCFE (\cref{def:nmcfe}), \cref{def:securitymcfe} combines the security definitions of NMIFE, MCFE and takes inspiration from \cite{C:CDSGPP20}.

The biggest difference from (N)MIFE to (N)MCFE is the allowance of corruptions, which means that the adversary can corrupt clients and obtain their encryption keys.
Moreover, the admissibility conditions must hold  with respect to each label $\ell$ and each queried subset of clients $S$.

It is necessary that before any decryption key query including clients in $S$ is posed, the clients in $S$ are already registered and have obtained an encryption key.
For this reason, we introduce a dummy oracle $\QEKeyGen$, which generates encryption keys.
We stress that the adversary does not receive an output from that oracle and hence, does not obtain any other encryption keys other than those they corrupted.

For now, we only consider security for one-time schemes, i.e., schemes, which only permit one ciphertext per label and client.
These schemes suffice for our use case as each function evaluation needs its own independent noise for different datasets.
We stress that all concurrent noisy FE schemes also only provide one-time security.

In the following, $\distr_{\noise}$ defines the distribution that outputs \noise with probability one.
Deviations from one-time NMIFE coming from \graybox{MCFE} and the \gbox{dynamic} setting are again marked in the respective boxes.
If a whole algorithm or condition exists due to that change, only its name will be marked.

\begin{definition}[One-time Security of DyNMCFE] \label{def:securitymcfe}
	Consider the DyNMCFE scheme $\dynmcfe$$=(\Setup, \EKeyGen, \Enc,\KeyGen, \Dec)$ for a maximum plaintext length of $\mmax$, maximum number of clients $\nmax$ and label set \Labels.
	For any security parameter $\lambda$,
	consider the following  $\mathsf{IND}_{\beta}^{\dynmcfe}$ game between an adversary \adv and a challenger \challenger.
	The game involves a set \HS of honest clients, initialized to $ \HS := [\nmax] $, a set of corrupted clients \CS, initialized to $\CS := \emptyset$ and a set of queried index sets per label $\cS$, initialized to $\cS := \emptyset$.
	\begin{description}

		\item[Initialization:] At the beginning, \challenger runs $(\pp, ) \leftarrow \Setup(1^\lambda, \mmax, \nmax)$.
		Then, they choose a random bit $\beta \leftarrow \bin$ and hand $\pp$ to \adv.

		\item[\graybox{Corruption queries:}] \adv can pose queries to the corruption oracle  $\QCor(i)$ before any other queries to obtain $\ski \gets \EKeyGen(\msk, i)$.
		\challenger set $\CS := \CS \cup \{i\}$ and $\HS := \HS \setminus \{i\}$.
		Any further query to $\QCor(i)$ is answered with the same $\ski$.

		\item[\gbox{Encryption key queries:}] The adversary \adv can adaptively pose encryption key queries $\QEKeyGen(i)$, for which \challenger generates $\ski \gets \EKeyGen(\msk, i)$ and updates $\msk := \msk \cup \{\ski\}$.
		For any given $i \in [\nmax]$, only one query is allowed and any subsequent query for $i$ is ignored.
		Nothing is returned to \adv.
		\item[Encryption queries:] The adversary \adv can adaptively transmit encryption quer\-ies  $\QEnc(i, x_i^0, x_i^1, $ \graybox{$\ell$}$)$.
		\gbox{If either $i \in \CS$ or $\QEKeyGen(i)$ has been previously} \gbox{called,}
		they are answered by  $\ct_{i,\ell} \gets \Enc(\ski, x_i^\beta,\ell)$ and ignored otherwise.
		For any given pair $(i, \ell)$, only one query with return value is allowed and any subsequent queries involving the same $(i, \ell)$ are ignored.
		\item[Decryption key queries:] \adv can adaptively pose queries $\QKeyGen($\gbox{$S$}$, f, $\graybox{$\ell,$} $\noise^0, \noise^1)$ to receive functional decryption keys.
		If for all $i \in S$, \gbox{$i \in \CS$ or } \gbox{$\QEKeyGen(i)$} has been called,  \challenger returns $\dk_{f, \ell, S} \gets \KeyGen(\msk, S, f, \ell, \distr_{\noise^\beta})$ and \gbox{updates $\cS := \cS \cup {(\ell, S)}$.}
		Otherwise, the request is ignored.
		%
		%
		\item[Finalize:] \adv outputs a bit $\beta' \in \bin$.
		\challenger checks, if \adv acted admissibly.
		If not, \challenger sets $\beta' =0$.
		\adv wins, if $\beta' = \beta$.
	\end{description}

	Depending on the specification of the game defined in advance, we call \adv admissible, if all of the following conditions hold.
	\begin{itemize}
		\item[\graybox{i)}] If $i \in \CS$, then for any query $\QEnc(i, x_i^0, x_i^1, \ell)$, $x_i^0 = x_i^1$, i.e., \adv cannot trivially distinguish  both cases by creating their own ciphertexts.
		\item[ii)] For all $($\graybox{$\ell$,} \gbox{$S$}$) \in \cS$, $\QEnc(i, \cdot, \cdot, \ell)$ has been queried for all $i \in \HS\cap S$, i.e., for each queried function, decryption should be possible.
		\item[iii)] For any label $\ell \in \Labels$, any tuple $(\ell, S) \in \cS$, any family containing all honest encryption queries with respect to $S$, i.e., $\{\QEnc(i, x_i^0, x_i^1, $ \graybox{$\ell$}$)\}_{i \in \gindex{\HS\cap S}}$,
		for any family of inputs $\{x_i \in \mathcal{X}_{\ell}^{m_\ell}\}_{i \in \CS}$,
		any query $\QKeyGen($\gbox{$S$}$, f, $\graybox{$\ell,$} $\noise^0, \noise^1)$,
		we require that:
		\begin{align}\label{eq:admissibility-set}
			f(\{x_i^0\}_{\gindex{i \in S}} ) + \noise^0 = f(\{x_i^1\}_{\gindex{i \in S}}) + \noise^1,
		\end{align}
		i.e., no function evaluation can yield distinction.

	\end{itemize}
	The advantage of $\adv$ in this game is defined as
	\begin{align*}
		\advantage^{\mathsf{IND}}_{\dynmcfe, \adv}(\lambda)
		= &\left| \Pr(\mathsf{IND}_0^{\dynmcfe}(\lambda, \adv) =1) \right.\\
		&-\left.\Pr(\mathsf{IND}_1^{\dynmcfe}(\lambda, \adv) =1) \right|.
	\end{align*}
	A DyNMCFE scheme \dynmcfe provides $\mathsf{IND}$-security, if
	$\advantage^{\mathsf{IND}}_{\dynmcfe, \adv}(\lambda) \leq \mathsf{negl}(\lambda).$
\end{definition}

In particular, conditions i) - iii) are commonly used to prevent trivial attacks through either the possession of encryption keys or through varying function evaluations.
Note that ii) is required such that \cref{eq:admissibility-set} always has to hold, as soon as there is a function query including at least one honest client.
These restrictions are common in FE and are simply modified to fit our use of labels and subsets of clients.

In general, the security definition can be further generalized to allow different modifications. such as adaptive corruptions, i.e., enabling corruptions at any point in time, allowing more than one ciphertext per slot per label or requiring that $\noise^0 = \noise^1$, which would yield regular MCFE.
Such a definition is included in \cref{sec:extended-def} for a non-dynamic NMCFE, as it facilitates our proofs.
Note an DyNMCFE scheme which satisfying Definition \ref{def:securitymcfe} suffices for our use case and any modification, except for the adaptive corruptions, does not give us any benefit.




\section{Privacy-Preserving Protocol}\label{sec:ppfl}
%
In the scenario described in \cref{sec:intro}, we assume that the data is distributed  across multiple data sources.
Via a central trusted entity, analysts and data holders can be connected and the analysis can be performed.
In the following, we propose a protocol for PPML based on DyNMCFE.
To prove the efficiency of the proposed protocol, we train a logistic regression model on medical data  in \cref{sec:implementation} using \lnmcfeot (\cref{sec:noisy-construct}).
Our generic approach is not limited to logistic regression but allows for linear or linearly approximated ML models, which are trained using an iteration-based algorithm, e.g., private GD.
To describe our protocol, we first examine the parties involved.



\paragraph{Parties} We consider three different parties.

\begin{description}
	\item[Authority:]  The authority is a trusted party with limited resources who handles the setup of the DyNMCFE scheme. They provide encryption keys to the data holders and decryption keys to the analyst.
	This could be for example a hospital or an ethical committee that oversees such studies.
	On request of a function $f$, they answer truthfully with a perturbed decryption key $\dk_f$, where the noise \noise is sampled according to a DP providing distribution.
	They keep track of the privacy budget, ensuring that DP is always achieved.
	\item[Data Holder:]  We assume the data is distributed between several parties, e.g., the patients themselves, where each party is in possession an entire record.
	They are willing to submit their individual data to research honestly (possibly at different points in time).
	On the one hand, they still want to maintain their privacy and on the other hand, do not wish to actively participate in the process themselves.
	\item[Analysts:] The analysts want to learn information from the research data, e.g., by training some ML model.
	They can behave maliciously in the sense that they can corrupt data holders, i.e., work with them to obtain their data in plain or even their secret keys, and request any functions supported by the functionality of DyNMCFE.
	As the sampled noise directly depends on the queried function and not on the training of the ML model, the privacy of the data holders remains protected.
\end{description}

\paragraph{Design Considerations}
Although there is extensive research in the area of FE, efficient schemes are still restricted to support only linear or quadratic functions.
For efficiency, we focus on linear schemes.
In fact, a linear scheme suffices for our use case for the following reason.
Remember that in private GD, we want to compute $F_{\thetab}(X)$ to update our gradients, which is the sum over all gradients $\nabla L(\thetab, \xb_i)$, i.e., $F_{\thetab}(X) := \frac{\alpha}{n} \sum_{i \in [n]} \nabla L(\thetab, (\xb_i))$.
Particularly, each $\nabla L$ only depends on the record of one individual.
Hence, if $\nabla L$ can be approximated as a polynomial, we can precompute its monomials to obtain an extended plaintext $\widetilde{\xb_i}$.
As each monomial now has its own slot, $\nabla L$ can be computed as a scalar product between $\widetilde{\xb_i}$ and its respective coefficients, i.e., it can be seen as a linear function.
To ensure that only the necessary monomials are determined and the coefficients are rightfully multiplied with the monomials, we can identify this transformation with a mapping $M: \mathcal{X} \to \mathcal{\widetilde{X}}$.
This way, $F_{\thetab}(X)$ can be correctly evaluated by a linear DyNMCFE scheme.

$M$ influence the size of the plaintext and hence of the ciphertext.
Thus, they have to be known before encryption to verify that the scheme supports the individual plaintexts of that particular size.
In our scheme, this should generally not be a problem as the maximum bit-length is rather large.
Moreover, it is desirable that the overall plaintext size does not grow too much as this directly influences the overall runtime.
This will be further discussed in \cref{sec:implementation}.

\paragraph{Protocol}

Given the three parties and the linearized private GD algorithm, our PPML protocol is as follows.

The authority sets up the environment for the DyNMCFE scheme with a maximum plaintext bit-length $l_{\max}$ and a label space $\Labels$\footnote{Technically, we also require a maximum number of clients $\nmax$. However, as $\nmax= \poly(\lambda)$, we can set it to $2^{34}$, which is more than twice of current world population. Hence, this restriction is ignored in the protocol.}.

Whenever a data holder joins the system, they exchange their privacy budget $(\epsilon_i, \delta_i)$ for the individual encryption key.

Whenever an analyst wants to start a new analysis on data that has not been provided yet,
they send a request to the authority about what data they need and the form of the data, e.g., some normalization factors, realized by $M$.
If the required data is within the limits of the scheme, the authority approves the study and assigns the analysis an unused label $\ell \in \Labels$.
The request is forwarded together with all necessary information to the data holders.

Each data holder can decide, whether they want to participate and tell the authority their decision.
If they do, they encrypt their data according to the requirements and submit it to a server which can be accessed by the analyst.
Although they may submit different data under several labels, possibly over time, they are not involved further in the evaluation process.
It is also possible that a data holder registers at this point in time and participates in the particular study.
This exchange is displayed in \cref{fig:phaseII}.

\begin{figure}
	\begin{center}
	\begin{tikzpicture}
	\footnotesize
	\tikzmath{
		\yStart = -0.75;
		\yEnd = -10;
		\y = \yStart;
		\add = 0.5;
		\user = 0;
		\uR =  \user + 0.125;
		\auth = 4;
		\authL = \auth - 0.125;
		\authR = \auth + 0.125;
		\ana = 8;
		\anaL = \ana - 0.125;
	}
	%

	%
	\node (u) at (\user,0)  {\textbf{Data Holder $i$}};
	\node (auth) at (\auth, 0)  {\textbf{Authority}};
	\node (analyst)  at (\ana, 0)  {\textbf{Analyst}};
	%

	%


	\draw [->] (\anaL, \y) -- (\authR, \y)
	node [above, midway]{Study request $(\mathcal{X}, M)$};
	\tikzmath{ \y = \y -\add; }
	\node (u00) at (\user, \y){};
	\draw [<-] (\anaL, \y) -- (\authR, \y)
	node [above, midway]{Approval with $\ell$};
	\draw [<-] (\uR, \y) -- (\authL, \y)
	node [above, midway]{$(\ell, \mathcal{X}, M)$};
	\tikzmath{ \y = \y -\add; }

	%
	%
	\node (u0) at (\user, \y){Join? If yes: };
	\tikzmath{ \y = \y -\add; }
	\draw [->] (\uR, \y) -- (\authL, \y)
	node [above, midway]{Register};
	\tikzmath{ \y = \y -\add; }
	\node (u1) at (\user, \y){$\widetilde{\xb}= M(\xb)$};
	\node (a01) at (\auth, \y){$I_{\ell} := I_\ell \cup \{i\}$};
	\tikzmath{ \y = \y -\add + 0.125; }
	\node (u15) at (\user, \y){$\ct_i \leftarrow \Enc(\ski, \widetilde{\xb}, \ell)$};
	\tikzmath{ \y = \y -\add +0.25; }
	\node (a11) at (\auth, \y){};

	\tikzmath{ \y = \y -0.25; }
	\draw [->] (\uR, \y) -- (\anaL, \y)
	node [above, midway]{$\ct_i$};
	\node (uend) at (\user, \y){};

	%
	%
	\draw[dashed] (\ana, \yStart) -- (\ana, \y);

	\draw[dashed] (\auth, \yStart) -- (a01);
	\draw[dashed] (a01) -- (a11);
	\draw[dashed] (u00) -- (u0);
	\draw[dashed] (u0) -- (u1);
	\draw[dashed] (u0) -- (u1);
	\draw[dashed] (u15) -- (uend);

\end{tikzpicture}
\end{center}
\caption{Analysis request and data gathering.} \label{fig:phaseII}
\end{figure}
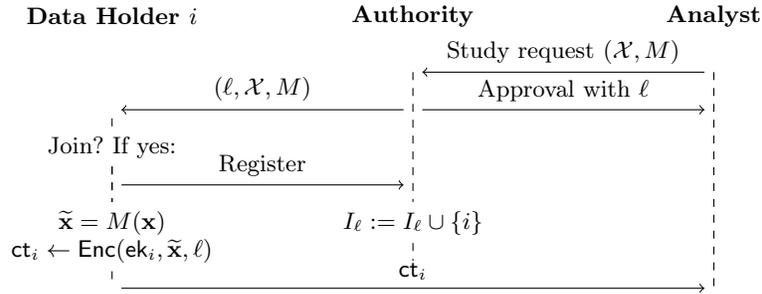

To train an ML model on the gathered ciphertexts using GD, teh analyst first initialize the weights of the model, $\theta^{(0)}$.
In each iteration $t$, they pose queries including the requested functions according to $M$ and the amount of privacy budget in said iteration, i.e., $(F_{\thetab^{(t)}}=\{F_{\thetab^{(t)}}[j]\}_{j \in [0;m]}, (\epsilon^{(t)}, \delta^{(t)}))$.
The authority checks if this would exceed any participant's remaining budget.
If so, they contact the analyst to remove the particular individual's in the next decryption key queries.
Otherwise, they update each participant's privacy budget according to the consumption of privacy and compute $\distr \gets \GM(\epsilon^{(t)}, \delta^{(t)}, \Delta_2(F_{\thetab}^{(t)}))$, which provides DP according to \cref{def:anagm}.
They generate the functional decryption keys with respect to $\distr$ and return them to the analyst.

After each iteration, the analyst updates their model parameters $\thetab$ and starts the next iteration by again requesting functional decryption keys. 
The process ends after a fixed number of iterations $T$ or if the privacy budget of all data holders is exceeded.
\cref{fig:training} shows the training phase if none of the data holders' privacy budgets are depleted before iteration $T$.

Note that all of the decryption keys can be made public for other analysts.
Further, another analyst may conduct a training as in \cref{fig:training} for a different model, depending on the remaining privacy budget.

\begin{figure}
	\begin{center}
		\begin{tikzpicture}
			\footnotesize
			\tikzmath{
				\yStart = -0.5;
				\yEnd = -10;
				\y = \yStart;
				\add = 0.5;
				%
				%
				\auth = 1.25;
				\authL = \auth - 0.125;
				\authR = \auth + 0.125;
				\ana = 8.65;
				\anaL = \ana - 0.125;
			}
			\node (auth) at (\auth, 0)  {\textbf{Authority}};
			\node (analyst)  at (\ana, 0)  {\textbf{Analyst}};

			%
			%
			\node (an1) at (\ana, \y) {Initialize $\thetab^{(0)}$};
			\tikzmath{ \y = \y -\add-0.125; }
			\draw [->] (\anaL, \y) -- (\authR, \y)
			node [above, midway] {$(F_{\thetab^{(0)}}, (\epsilon^{(0)}, \delta^{(0)})) $};
			\tikzmath{ \y = \y -\add; }
			\node (a15) at (\auth, \y) {Check $(\epsilon_i, \delta_i) \forall i \in I_{\ell}$};
			\tikzmath{ \y = \y -\add + 0.125; }
			\node (a2) at (\auth, \y) {$\distr \gets \GM(\epsilon^{(t)}, \delta^{(t)}, \Delta_2(F_{\thetab}^{(t)}))$};
			\tikzmath{ \y = \y -\add + 0.125; }
				\node   at (\auth, \y) {$\forall j \in [0;m]:$};
			\tikzmath{ \y = \y -\add + 0.125; }
			\node (a25)  at (\auth, \y) {$\dk^0_j \leftarrow \KeyGen(\msk, I_\ell, F_{\thetab^{0}}[j], \ell, \distr[j])$ };
			\tikzmath{ \y = \y -\add+0.125; }
			\node (a3)  at (\auth, \y) {$(\epsilon_i, \delta_i) := (\epsilon_i - \epsilon^{(0)}, \delta_i - \delta^{(0)}) \forall i \in I_{\ell}$};
			\tikzmath{ \y = \y -\add-0.2; }
			\draw [<-] (\anaL, \y) -- (\authR, \y)
			node [above, midway]{$\{\dk_j^0\}_{j \in [0;m]}$};
			\tikzmath{ \y = \y -\add; }
			\node (an2) at (\ana, \y) {$\forall j \in [0;m]:$};
			\tikzmath{ \y = \y -\add+ 0.125; }
			\node (an3) at (\ana, \y) {$\thetab^{(1)}[j] \leftarrow \Dec(\dk_j^{0},\{\ct_i\}_{i \in I_\ell})$};
			\tikzmath{ \y = \y -\add-0.125; }
			\draw [->] (\anaL, \y) -- (\authR, \y)
			node [above, midway] { $(F_{\thetab^{(1)}}, (\epsilon^{(1)}, \delta^{(1)})) $};
			\tikzmath{ \y = \y -\add; }
			\node (a4) at (\auth, \y){};
			\node (an4) at (\ana, \y){};
			\tikzmath{ \y = \y -\add+0.25; }
			\node at (\auth, \y) {$\cdots$};
			\node  at (\ana, \y) {$\cdots$};
			\tikzmath{ \y = \y -\add+0.25; }
			\node (a5) at (\auth, \y){};
			\node (an5) at (\ana, \y){};


			%
			\draw [<-] (\anaL, \y) -- (\authR, \y)
			node [above, midway]{$\{\dk_j^{\ep -1}\}_{j \in [0;m]}$};
			\tikzmath{ \y = \y -\add; }
			\node (an6) at (\ana, \y) {$\forall j \in [0;m]:$};
			\tikzmath{ \y = \y -\add+ 0.125; }
			\node at (\ana, \y) {$\thetab^{(\ep)}[j] \leftarrow \Dec(\dk_j^{\ep-1},\{\ct_i\})$};
			%
			%
			\draw[dashed] (an1) -- (an2);
			\draw[dashed] (an3) -- (an4);
			\draw[dashed] (an5) -- (an6);
			\draw[dashed] (\auth, \yStart) -- (a15);
			\draw[dashed] (a3) -- (a4);
			\draw[dashed] (a5) -- (\auth, \y);
			%
		\end{tikzpicture}
	\end{center}

	\caption{Training phase.} \label{fig:training}
\end{figure}
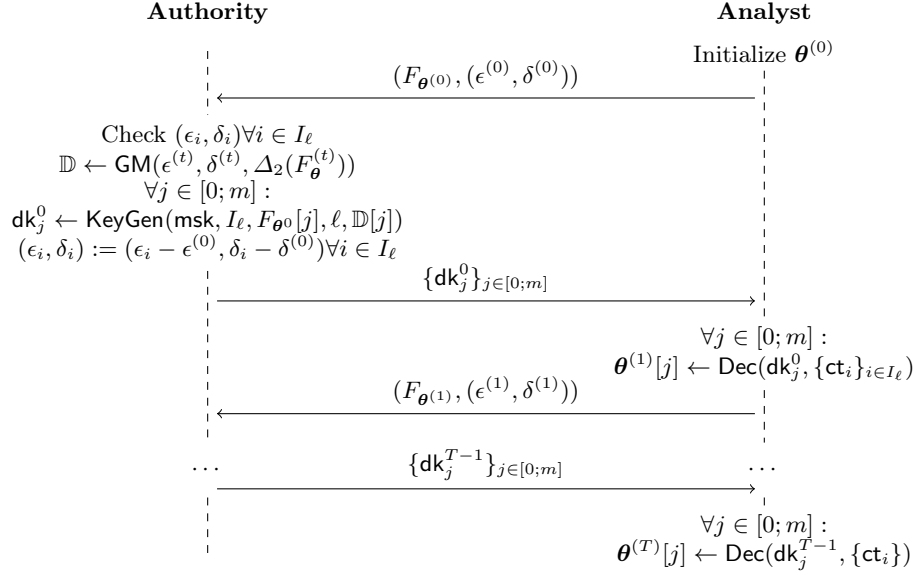

\paragraph{Threat Model and Privacy Analysis}
We require that the authority honestly generates the decryption keys and takes care of the privacy budget.
As the authority tracks the privacy budget of the data holders over all labels and function queries, DP and hence output privacy is always guaranteed.
Note that at least one trusted party is always required, especially in protocols using HE or MPC. LDP would be the only approach without a trusted party, but distorts the utility of the trained model, as we show in \cref{sec:implementation}.

The analyst on the other hand may collude with data holders.
More precisely, they may know their data and even their particular encryption keys.
The privacy of the non-colluding data holders is still guaranteed due to DP and the allowance of corruptions of the DyNMCFE scheme.

Moreover, the analyst is not limited to functions which help them train a ML model.
However, due to the assurance of DP, the analyst would only jeopardize the usefulness of their model by asking for other decryption keys.

Overall, by employing  this approach, both input and output privacy of all individuals are ensured. 

\section{\lnmcfeot: A concrete DyNMCFE instantiation}\label{sec:noisy-construct}
In this section, we present our one-time DyNMCFE scheme, \lnmcfeot .
The main idea of \lnmcfeot is similar to the one-time MIFE scheme of Abdalla et al.~\cite{PKC:ABKW19}, which makes use of the linearity of the one-time pad.
We extend their scheme to support noise-hiding, i.e., to be a secure NMIFE scheme.
Let us first recall their scheme and our extension to noise-hiding, before we present \lnmcfeot.

\subsection{Warm up: \nmifeot}\label{sec:mifeot}
In the following we refer to the scheme of  Abdalla et al.~\cite{PKC:ABKW19} as \mifeot. Extensions implemented to lift \mifeot to a noisy variant \nmifeot are marked in boxes.
If $\distr$ is chosen to be the all-zero distribution, both schemes are equivalent.

\begin{construction}[\nmifeot]\label{def:nmifeot}
	Let \(\cF_{q,n}^m\) be the class of multi-input inner products over \(\Z_q\). \nmifeot scheme for \(\cF_{q,n}^m\) consists of the following algorithms:
	\begin{description}
		\item[\(\Setup^{\textnormal{ot}}(1^\lambda, m, n)\):]
		On input of the security parameter $\lambda$, the  vector length $m$ and number of clients $n$, set $\msk := \emptyset$.
		For all $i \in [n]$, run $\ski \gets \EKeyGen^{\textnormal{ot}}(\msk,i)$ and return $\msk$.

		\item[$\EKeyGen^{\textnormal{ot}}(\msk, i)$:] On input of the master secret key $\msk$,
		sample \(\ski \sampler \Z_q^m\).
		Set  $\msk := \msk \cup \{(i, \ski)\}$ and return \(\sk_i\).

		\item[\(\Enc^{\textnormal{ot}}(\sk_i, \xb_i)\):] On input of the encryption key \(\ski\) and message \(\xb_i \in \Z_q^m\) for slot \(i \in [n]\), return \(\ct_i := (i, \mat c_i:= \xb_i + \ski \mod q)\).

		\item[\(\KeyGen^{\textnormal{ot}}(\msk, f_{\yb}, \fbox{$\distr$})\):]
		On input of the master secret key  $\msk = \{(i,\ski)\}_{i\in [n]}$, a function $f_{\yb}$ defined through \(\yb = (\yb_1, \dots, \yb_n)\) and a \fbox{distribution $\distr$ over} \fbox{ \(\Z_q\), sample \(\noise \leftarrow \distr\).} Return \(\dk_{\yb} =(\yb, z := \sum_{i \in [n]} \langle \ski, \yb_i \rangle \fbox{- \noise} \mod q)\).

		\item[\(\Dec^{\textnormal{ot}}(\dk_{\yb}, \ct_1, \dots \ct_n)\):]
		On input of the decryption key \(\dk_{\yb} = ((\yb_1, \dots, \yb_n), z)\) and ciphertexts $\ct_i = (i, \mat c_i)$, $i \in [n]$, return \(\sum_{i \in [n]} \langle \mat c_i, \yb_i \rangle -z \mod q\).
	\end{description}
\end{construction}

\paragraph{Correctness} The correctness of the scheme in $\Z_q$ follows directly by the way $z$ is chosen and that $\sum_{i \in [n]} \langle \mat c_i, \yb_i \rangle = \sum_{i \in [n]} \langle \xb_i + \sk_i, \yb_i \rangle$.
Thus, the result after decryption is $f(\xb_1, \dots, \xb_n) + \noise$, whose distribution is solely determined by $\distr_f$.
\begin{remark}
For correctness in $\Z$, the modulus $q$ has to be chosen large enough.
In particular, assume $\max_{j \in [m]} \xb_i[j] < X$ for all $i \in [n]$, $\max_{j \in [mn]}\yb[j] < Y$ and $\Pr(\noise > d) \leq \negl(\lambda)$, where $\noise \leftarrow \distr_{f}$, for all queried $f$.
Then \nmifeot is correct in $\Z$, if
\begin{align}\label{eq:correctness}
	nmXY + d < q.
\end{align}
\end{remark}

\paragraph{Security}
For our security proof, we use the information theoretical security of \mifeot as described above.
In a sequence of games we transform \nmifeot to \mifeot, which concludes the following theorem.
Its proof can be found in Appendix~\ref{sec:security-nmife}.
\begin{theorem}\label{thm:sec-nmifeot}
	The \nmifeot scheme presented in Construction~\ref{def:nmifeot} is $\mathsf{IND}$-secure with $\advantage_{\nmifeot, \adv}^{\mathsf{IND}}(\lambda) = 0$ for any $\ppt$ adversary \adv.
\end{theorem}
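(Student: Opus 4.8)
The plan is to establish the stronger statement that, for every (even unbounded) $\adv$, the transcript seen under $\beta=0$ is \emph{identically distributed} to the one seen under $\beta=1$; zero advantage then follows immediately, matching the claimed $\advantage_{\nmifeot,\adv}^{\mathsf{IND}}(\lambda)=0$. The structural feature I would exploit is that \nmifeot is one-time: each slot $i$ is encrypted at most once, so every $\ski \sampler \Z_q^m$ behaves as a fresh one-time pad used exactly once. Fixing $\beta$, the entire view consists of the ciphertexts $\mat{c}_i = \xb_i^\beta + \ski \bmod q$ together with, for each decryption key query on $f_{\yb}$ with adversarial noise pair $(\noise^0,\noise^1)$, the scalar $z = \sum_i \langle \ski,\yb_i\rangle - \noise^\beta \bmod q$. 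Since the challenger runs $\distr_{\noise^\beta}$, which outputs $\noise^\beta$ with probability one, the only randomness on the challenger's side is the tuple of encryption keys. This is exactly the setup in which the announced reduction transforms \nmifeot into \mifeot and invokes the information-theoretic security of the latter.

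Concretely, I would exhibit the per-slot translation $\ski \mapsto \ski' := \ski + (\xb_i^0 - \xb_i^1) \bmod q$ and argue it maps the $\beta=1$ view onto the $\beta=0$ view. It preserves every ciphertext, since $\xb_i^1 + \ski' = \xb_i^1 + \ski + \xb_i^0 - \xb_i^1 = \xb_i^0 + \ski$, and it preserves every key scalar, since the cross term equals $\sum_i \langle \xb_i^0 - \xb_i^1, \yb_i\rangle = f_{\yb}(\{\xb_i^0\}_i) - f_{\yb}(\{\xb_i^1\}_i)$, which by admissibility condition~iii) equals $\noise^1 - \noise^0$ and hence exactly cancels the altered noise term, leaving $\sum_i\langle\ski,\yb_i\rangle - \noise^0$. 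Because a translation is a bijection of $\Z_q^m$ that sends the uniform distribution to itself, the induced distributions on views coincide. Informally, the noise slack permitted by condition~iii) is precisely what the one-time pad absorbs, so after removing it the residual scheme is the information-theoretically secure \mifeot.

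The main obstacle I anticipate is making this rigorous for \emph{adaptive} adversaries, since the very messages $\xb_i^0, \xb_i^1$ defining the translation are produced during the interaction. I would handle this by a coupling: fixing the coins of $\adv$, I would argue by induction over the query transcript that running with $\beta=0$ under each key $\ski$ and running with $\beta=1$ under the translated key $\ski'$ yield identical responses at every step, hence identical subsequent queries, so the two runs produce the same transcript. Since $\ski \mapsto \ski'$ is, for each $i$, a bijection of $\Z_q^m$ preserving the uniform distribution, the transcript distributions coincide. Two points require care: I must use one-timeness, so that each $\ski$ remains an unconditioned pad (a second encryption under the same $\ski$ would break the coupling), and I must invoke admissibility condition~ii), forcing every honest slot to be encryption-queried so that condition~iii) constrains each decryption key query on $f_{\yb}$. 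Together these give $\advantage_{\nmifeot,\adv}^{\mathsf{IND}}(\lambda)=0$ for every $\ppt$—indeed unbounded—adversary $\adv$.
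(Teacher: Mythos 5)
Your proof is correct, but it takes a genuinely different route from the paper's. The paper proves Theorem~\ref{thm:sec-nmifeot} by a two-step reduction: first a selective variant is obtained by guessing all challenge messages and the corrupted set up front (complexity leveraging, which is harmless only because the target advantage is exactly $0$), and then the noise values $\noise^\beta_l$ are packed into $\lceil \qk/m\rceil$ additional plaintext slots and the key vectors are extended by unit vectors, reducing noise-hiding security of \nmifeot to the message-hiding security of \mifeot for a larger arity, which is then invoked as a black box from Abdalla et al. Your argument instead inlines and extends the underlying one-time-pad argument: the per-slot translation $\ski \mapsto \ski + (\xb_i^0 - \xb_i^1)$ is a bijection of $\Z_q^m$ preserving uniformity, it fixes every ciphertext, and the induced cross term in each key scalar equals $f_{\yb}(\{\xb_i^0\}) - f_{\yb}(\{\xb_i^1\}) = \noise^1 - \noise^0$ by admissibility condition~iii), exactly cancelling the change of noise --- so the $\beta=0$ and $\beta=1$ views are identically distributed. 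Your coupling treatment of adaptivity is sound: fixing $\adv$'s coins, the translated key tuple is defined from the $\beta=0$ run, the induction over the transcript shows the $\beta=1$ run with the translated keys reproduces every response, and injectivity of the transcript-dependent map follows because equal images force equal transcripts and hence equal preimages. What your approach buys is a self-contained, perfectly information-theoretic proof with no guessing step and no dependence on the cited \mifeot theorem; what it omits relative to the paper is only the explicit handling of corruption queries, which Theorem~\ref{thm:sec-nmifeot} does cover --- there you must additionally observe that admissibility condition~i) forces $\xb_i^0 = \xb_i^1$ for every corrupted slot, so your translation is the identity on those slots and the revealed keys are unchanged across the two runs. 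With that one sentence added, the argument is complete.
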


\subsection{\lnmcfeot: An efficient DyNMCFE scheme}\label{sec:con-NMCFE}

Recall that the difference between NMIFE and DyNMCFE is the use of labels and the ad-hoc registration of clients.
Thus, we need the encryption keys $\sk_i $ to be dependent on the label.
As we additionally require that the keys are labeled as well, we can simply replace the one-time pad with a PRF.
This PRF takes as input the encryption key $\ski$ of client $i$ and the label $\ell$.
If $\ski$ is sampled uniformly at random, this function is computationally indistinguishable from a truly random function. A formal definition is given in \cref{def:PRF}.

For any PRF \PRF of output length $m$, denote by $\PRF_{m'}$ the PRF which only outputs the first $m' \leq m$ values.
This transformation is straightforward by cutting for example the last $m - m'$ values.

\begin{construction}[\lnmcfeot] \label{def:lnmcfeot}
	Let \(\{\cF_{q,S}^{\mmax}\}_{S \subseteq [\nmax]}\) be the family of sub-families of multi-input inner-products over \(\Z_q\) with maximum attributes $\mmax$ and maximum number of clients $\nmax$.
	Let $\Labels$ be the supported label space and $\PRF: \bin^\lambda \times \Labels \to \Z_q^{\mmax}$.
	The \lnmcfeot scheme for \(\{\cF_{q,S}^{\mmax}\}_{S \subseteq [\nmax]}\) consists of the following algorithms:
	\begin{description}
		\item[\(\Setup(1^\lambda, \mmax, \nmax)\):]
		On input of the security parameter $\lambda$, maximum vector length $\mmax$ and maximum number of clients $\nmax$,
		set \(\msk := \emptyset\). Output $\msk$ and the public parameters $\pp = (\mmax, \nmax, \Labels, \PRF)$.

		\item[$\EKeyGen(\msk, i)$:] On input of the master secret key and slot $i \in [\nmax]$, sample \(\ski \sampler \{0,1\}^\lambda\).
		Update \(\msk := \msk \cup \{(i,\sk_i)\}\) and return $\ski$.
		\item[\(\Enc(\sk_i, \xb_i, \ell)\):]
		On input of the secret key $\ski$, a plaintext $\xb_i \in \Z_q^m$ with $m \leq \mmax$,  for slot $i \in [\nmax]$ and a label $\ell \in \Labels$,
		calculate $\zeta = \PRF_m(\ski, \ell)$ and return \(\ct_{i,\ell} = (i, \ell, \mat c_i := \xb_i + \zeta \mod q)\).

		\item[\(\KeyGen(\msk, S, f_{\yb}, \ell, \distr)\):]
		On input of the master secret key \msk, a set of indices $S \subseteq [\nmax]$, a function $f_{\yb}$, a label $\ell \in \Labels$ and a noise distribution $\distr$, if there exists $(i, \cdot) \notin \msk$ for some $i \in S$, return $\bot$.
		Otherwise,
		sample \(\noise \leftarrow \distr\) and calculate $\zeta_i = \PRF(\ski, \ell)$ for all \(i \in S\).
		Return \(\dk_{\yb,\ell, S} =(\ell, \yb, z := \sum_{i \in S} \langle \zeta_i, \yb_i \rangle - \noise \mod q)\).
		\item[\(\Dec(\dk_{\yb, \ell, S}, \{\ct_{i,\ell}\}_{i \in S})\):]
		On input of the decryption key \(\dk_{\yb, \ell, S}=(\ell, \{\yb_{i}\}_{i \in S}, z)\) and ciphertexts $\{\ct_{i,\ell}= (i, \ell, \mat c_i)\}_{i \in S}$,  return \(\sum_{i \in [S]} \langle \mat c_i, \yb_i \rangle -z \mod q\).
	\end{description}
\end{construction}

Note that the construction can serve as an alternative to \nmifeot for a single label if we aim to minimize the size of the encryption keys and set $\nmax =n, \mmax=m$.
This is especially of interest if the data to be encrypted is large.
The size of a ciphertext is $m \log q$ bits, both in \nmifeot and \lnmcfeot, which matches the size of a encryption key in \nmifeot, whereas the encryption key in \lnmcfeot is of size $\lambda$.
Thus, if $m \log q > \lambda$, there is a space savings, though this comes with increased computational costs from the PRF.
In practical applications, we find that $q$ tends to grow rapidly, which enhances the effectiveness of \lnmcfeot.
There are two primary reasons for this.
Firstly, we require correctness in $\Z$ rather than $\Z_q$, necessitating that $q$ be sufficiently large, specifically $q > nmXY + d$, where $X$, $Y$, and $d$ provide bounds on the ciphertexts, functions, and noise, respectively.
Secondly, to convert real-world data in $\R$ to data in $\Z$, a common technique is to employ fixed-point arithmetic using a scaling, as elaborated in Section~\ref{sec:ppfl}.
As $X$ and $Y$ may need to be quite large to achieve the desired precision this directly effects $q$ and therefore the key sizes for \nmifeot.

\begin{remark}\label{rmk:2}
	In a similar manner as Abdalla et al. \cite{C:ACFGU18}, \lnmcfeot (respectively, \nmifeot) can be utilized to transform any single input IPFE scheme, fulfilling certain properties, to a secure label-keyed NMCFE (respectively, NMIFE) scheme, which allows to encrypt multiple messages per client per label.
	The only difference we have to make is to exchange \mifeot with \lnmcfeot (respectively, \nmifeot).

	Most inner-product scheme based on lattices and bilinear groups fulfill the required properties, e.g.,~\cite{C:AgrLibSte16, PKC:MKMS22}.
	This gives us an even broader family of noise-hiding functions which do not rely on function-hiding and can be build from lattices.
	As we are only interested in one-time schemes for DP, we will not go further into this topic, but leave it as an interesting observation.
	\end{remark}

\paragraph{Correctness}
Similar as in the case of \nmifeot, one can directly see that the scheme is correct in $\Z_q$, if the \PRF in use is deterministic.
Correctness in $\Z$ follows if \cref{eq:correctness} is fulfilled.

\paragraph{Security}

The security of \lnmcfeot relies on the security of the underlying PRF as well as \nmifeot.
We focus on deterministic PRFs, as these are more efficient and more common in practice.

\begin{theorem}\label{thm:sec-nmcfeot}
	If \nmifeot from Construction~\ref{def:nmifeot} is $\mathsf{IND}$-secure and \PRF is secure, then \lnmcfeot from Construction~\ref{def:lnmcfeot} is $\mathsf{IND}$-secure. In particular, for any $\PPT$ adversary \adv,
	 there exist $\PPT$ adversaries $\cB_1$ and $\cB_2$ such that
	\begin{align*}
		\advantage^{\mathsf{IND}}_{\lnmcfeot, \adv}(\lambda) \leq 2 \nmax \ql \cdot \advantage_{\PRF, \cB_1}(\lambda)+ \ql \cdot \advantage^{\mathsf{IND}}_{\nmifeot, \cB_2}(\lambda),
	\end{align*}
	where \ql denotes the number of distinct labels queried to \QEnc and \QKeyGen.
	\end{theorem}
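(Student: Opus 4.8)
The plan is to prove security by a hybrid argument over the $\ql$ distinct labels queried to $\QEnc$ and $\QKeyGen$, exploiting that distinct labels give rise to independent pseudorandom pads $\PRF(\ski,\ell)$. Fix an ordering $\ell_1,\dots,\ell_{\ql}$ of the queried labels and define games $\H_0,\dots,\H_{\ql}$, where in $\H_k$ the challenger answers every encryption query under $\ell_1,\dots,\ell_k$ with the message $x_i^1$ and every encryption query under $\ell_{k+1},\dots,\ell_{\ql}$ with $x_i^0$, all pads still being PRF-derived. Then $\H_0$ is the game $\mathsf{IND}_0^{\lnmcfeot}$ and $\H_{\ql}$ is $\mathsf{IND}_1^{\lnmcfeot}$, so it suffices to bound $\lvert\Pr[\H_{k-1}=1]-\Pr[\H_k=1]\rvert$ for each $k$ and sum.

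To move from $\H_{k-1}$ to $\H_k$ I would pass through two intermediate games. First, replace the PRF outputs of \emph{all} honest clients, at \emph{every} label, by the outputs of truly random functions; this is done by a sub-hybrid over the at most $\nmax$ honest clients, each step justified by a reduction $\cB_1$ to $\PRF$-security that leaves the (statically) corrupted keys untouched and routes every evaluation of the attacked client through its PRF-or-random oracle. Once all honest pads are uniform and independent across labels, the scheme \emph{restricted to the single label $\ell_k$} behaves exactly like a fresh $\nmifeot$ instance on the honest slots, so a reduction $\cB_2$ to the $\IND$-security of $\nmifeot$ swaps the challenge messages at $\ell_k$ from $x^0$ to $x^1$. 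Finally, run the first sub-hybrid in reverse to restore the PRF. Each label transition thus costs $2\nmax$ invocations of $\PRF$-security (forward and backward) and one invocation of $\nmifeot$-security; summing over the $\ql$ labels gives $2\nmax\ql\cdot\advantage_{\PRF,\cB_1}(\lambda)+\ql\cdot\advantage^{\mathsf{IND}}_{\nmifeot,\cB_2}(\lambda)$, with $\cB_1,\cB_2$ obtained by the usual averaging over the transition (and client) index.

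The delicate part is the $\nmifeot$ reduction $\cB_2$, since $\nmifeot$ itself does not support corruptions whereas the \lnmcfeot adversary may corrupt clients. I would resolve this by \emph{absorbing} the corrupted slots into the reduction: $\cB_2$ embeds only the honest clients encrypted under $\ell_k$ into its $\nmifeot$ challenger, lazily samples and stores random pads for labels other than $\ell_k$ itself, and for each corrupted client computes $\PRF(\ski,\ell_k)$ directly from the key it handed out during the (static) corruption phase. A query $\QKeyGen(S,f_{\yb},\ell_k,\noise^0,\noise^1)$ is answered by requesting a $\nmifeot$ key for the honest restriction of $\yb$ (setting $\yb_i=0$ for honest slots outside $S$) with the same two noises, and then adding the corrupted contribution $\sum_{i\in S\cap\CS}\langle \PRF(\ski,\ell_k),\yb_i\rangle$ to the returned value $z$. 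The main obstacle is to verify that admissibility transfers: conditions i)--iii) of \cref{def:securitymcfe} force $x_i^0=x_i^1$ for $i\in\CS$, so the corrupted terms cancel and the \lnmcfeot constraint $f(\{x_i^0\}_{i\in S})+\noise^0=f(\{x_i^1\}_{i\in S})+\noise^1$ collapses to exactly the $\nmifeot$ admissibility constraint on the honest slots (here condition ii) guarantees that every honest slot appearing in a key subset was encrypted under $\ell_k$, so the embedded $\nmifeot$ instance is well formed). Once this is checked, both intermediate games are simulated perfectly and the reduction goes through.
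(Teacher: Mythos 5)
Your proposal follows essentially the same route as the paper: a hybrid over the $\ql$ distinct labels (the paper's Lemma~\ref{lem:oneLab}), a forward-and-backward sub-hybrid over the honest clients replacing $\PRF(\ski,\cdot)$ by a truly random function ($2\nmax$ PRF invocations), and a middle step reducing the single-label game to $\nmifeot$ by absorbing the corrupted slots' contribution $\sum_{i\in S\cap\CS}\langle\PRF(\ski,\ell_k),\yb_i\rangle$ into the returned $z$ and zeroing the function on honest slots outside $S$ --- exactly the structure of the paper's Lemmas~\ref{lem:g0g1} and~\ref{lem:g1-g2} and Theorem~\ref{thm:nmcfe-dyno}. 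The only cosmetic difference is that your label hybrid should also switch the key-query noise from $\noise^0$ to $\noise^1$ in lockstep with the messages (as the paper's $\game{k}$ does), so that the endpoints coincide with $\mathsf{IND}_0$ and $\mathsf{IND}_1$; with that stated, the argument and the bound match the paper's.
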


The theorem can be proven by straightforward reduction to the non-dynamic variant, where all secret keys are sampled during the setup phase.
On the other hand, for the non-dynamic variant, we show that it suffices to show security for one label.
From there on, we only have to rely on the security of the PRF, i.e., \PRF can only be distinguish from a truly random function with negligible advantage. At this point, an adversary against \nmifeot can perfectly simulate the view of an attacker against the modified scheme.
Since \nmifeot is information theoretically secure,  the \IND-security of \lnmcfeot follows.
 The whole proof can be found in \Cref{sec:sec-lnmcfeot}.

\section{Implementation}\label{sec:implementation}
We provide an implementation of our DyNMCFE scheme  \lnmcfeot employed in Section~\ref{sec:con-NMCFE}.
All experiments were performed on a system running Ubuntu 22.04.2 LTS, 256GB RAM and 18 vCPUs (AMD Epyc 7272).

As real world data usually takes values in $\R$ instead of  $\Z_q$, we make use of fix-point arithmetic.
More precisely, a value $x \in \R$ is converted to $\lfloor x \cdot s\rceil$ for an appropriate scaling factor $s$ and an accordingly chosen $q$.
This scaling factor determines for example the level of precision.
Choosing it large enough is especially important in the presence of DP as the noise also has to be in $\Z_q$.
Normally, this value is rather small.
However, depending on the rounding of the noise, the added noise may be larger than necessary for DP, which reduces the utility of the model.
Note that scaling or any normalization process can be considered as part of $M$ and do not influence the overall protocol presented in \cref{sec:ppfl}.

In order to achieve DP, we use the analytic Gaussian mechanism, as displayed in Definition~\ref{def:anagm}, which is included in the Google Go library~\cite{googleDP}.
Their implementation of the analytic gaussian mechanism employs binomial random variables and appropriate rounding to argue for DP on integers.

As our PRF, on which the security of \lnmcfeot is based on, we choose AES-256 as implemented in the Go crypto library~\cite{go-crypto-aes}.
This is a standard protocol recommended by NIST~\cite{nist} and plausibly post-quantum secure~\cite{rao2017aes}.

Moreover, we consider the non-dynamic variant for comparison, i.e., the encryption keys are generated as part of the setup and all clients part of the analysis.
The overall runtime and communication costs are almost the same for the dynamic version, except that there may be data holders which never participate in any analysis, but query an encryption key.
In addition, we did not include the request for the analysis and the participation check, as both are implicitly present in any protocol, including the non-dynamic NMCFE scheme.
However, these should be negligible.

\subsection{Benchmarks}
We compare the runtime of  \lnmcfeot with the schemes of Zalonis et al.~\cite{ZASH24}, $\mathsf{DiffPIPE}$, and Escobar et al.~\cite{escobar2024computational}, $\mathsf{RIPFE}$.
These are currently the only (noisy) MIFE schemes providing inner-product functionality and supporting DP.

Let $N= nm$ be the size of the dataset.
Table ~\ref{tab:runtime} shows the runtime of all three schemes on the same datasets for different dataset sizes $N$, where $\mat x_i \in [0,2^{16}]^{m}$ for all $i \in [n]$ and $\yb \in [0, 2^7]^{N}$ were sampled randomly.
$\mathsf{RIPFE}$  only supports one client, i.e., $n =1$, so we set $m = N$.
Since $\mathsf{DiffPIPE}$ is pairing based, it becomes very slow for big $m$.
$\mathsf{DiffPIPE}$ is originally a multi-input scheme, therefore to better compare runtimes, we set $n=m=\sqrt{N}$.
We evaluate our scheme for both partitions of the dataset and achieve better runtimes in all algorithms.
Even for larger datasets of size $10^6$, we are still in the range of seconds and milliseconds, respectively.
While running $\mathsf{DiffPIPE}$ with $N = 10^6$, we needed to cancel the setup phase on our server after several days as the RAM was exhausted.

\begin{table*}
    \caption{Runtime comparison of all FE schemes supporting DP. Since $\mathsf{DiffPIPE}$ did not terminate after several days for $N = 10^6$, we are not able to present runtimes. }\label{tab:runtime}
    \centering
		\begin{tabular}{|c|c|c|c|r l|r l|r l|r l|}
        \hline
        $\boldsymbol N$ & $\boldsymbol n$ & $ \boldsymbol m$ &  \textbf{Scheme} & \multicolumn{2}{|c|}{$\boldsymbol\Setup$} & \multicolumn{2}{|c|}{$\boldsymbol\Enc$} & \multicolumn{2}{|c|}{$\boldsymbol\KeyGen$} & \multicolumn{2}{|c|}{$\boldsymbol\Dec$}\\
        \hline
        \multirow{4}{*}{$10^2$}& $1$ & $ 10^2$  &   $\mathsf{RIPFE}$ \cite{escobar2024computational}&3.50 & s & 0.25 &s & 1.89 &ms & 0.14 & s \\
            &$1$ &$ 10^2$ &  \multirow{2}{*}{  \lnmcfeot}& 0.01 & ms&0.12 &ms&0.14 &ms&0.01 &ms \\
            & $10$ & $10$ & & 0.01& ms&0.09 &ms&0.11 &ms&0.02 &ms \\
            & $10$ & $10$ &   $\mathsf{DiffPIPE}$ \cite{ZASH24}&37.25& ms&5.55& ms&77.12& ms&1.06& s  \\
        \hline

        \multirow{4}{*}{$10^4$}& $1$  & $ 10^4$ &   $\mathsf{RIPFE}$ \cite{escobar2024computational} & 3.51 &s &24.48 &s &17.83& ms & 1.51& s\\
        &$1$ & $ 10^4$ &  \multirow{2}{*}{ \lnmcfeot} & 0.01 &ms&4.23& ms&3.13 &ms&1.31 &ms\\
        & $10^2$ & $10^2$ & &0.06& ms&2.27& ms&0.77& ms&1.28 &ms \\
        & $10^2$ & $10^2$ &    $\mathsf{DiffPIPE}$ \cite{ZASH24} &57.32& s&0.23& s&5.64& s&25.20& s \\
        \hline

       \multirow{4}{*}{$10^6$} &$1$ & $ 10^6$ &    $\mathsf{RIPFE}$ \cite{escobar2024computational}&3.59&s & 41.03 & min& 2.23&s & 33.09 &s\\
       & $1$ &$ 10^6$ & \multirow{2}{*}{  \lnmcfeot} & 0.019& ms&0.31& s&0.24 &s&75.41& ms \\
       & $10^3$ & $10^3$ & & 0.47 &ms&50.93& ms&30.48& ms&85.45 &ms \\
       & $10^3$ & $10^3$ &    $\mathsf{DiffPIPE}$  \cite{ZASH24}&  -& & -&&-&&-&   \\
       \hline
    \end{tabular}

\end{table*}

\subsection{Logistic Regression}
Besides benchmarking, we also train a logistic regression using \lnmcfeot.
We only consider the data exchange (including providing the encryption keys) and the training process, as any other costs can be seen as negligible and appear in any other protocol as well.
For the sake of simplicity, we assume that all data holders share the same privacy budget.

As a baseline, we compare the utility to LDP, as this is the only primitive that neither requires the active participation of the data holders nor one or multiple trusted parties with a lot of computational power.

\paragraph{Model}
Let $X = \{\xb_i\}_{i \in [n]}$ be a dataset where each $\xb_i \in \R^{m+2}$ includes $m$ real features, a binary label $\xb_i[m+1] \in \bin$, and a constant $\xb_i[0] = 1$.
For the sake of readability, we define in the following $z_i^{} := \sum_{j \in [0;m]} \thetab^{}[j]\xb_i[j]$.
The goal of a logistic regression training is to find optimal  $\thetab^* \in \R^{m+1}$, e.g., using private GD, such that given a fresh $\xb$, the classification $y$ can be predicted with high probability. More precisely, it should hold that
	$\lfloor \sigma(z_i)\rceil =\xb_i[m+1]$
with high probability, where $\sigma(x) = \frac{1}{1+\exp{(-x)}}$ is the sigmoid function.

Let \ep be a predefined number of iterations, $\alpha$ the learning rate and $\thetab^0 \in \R^{m+1}$ randomly chosen.
In a logistic regression, each update of the parameters, i.e., (\ref{eq:privgrad}), is given as
\begin{align}
    \thetab^{(t+1)}[j]:= \thetab^{(t)}[j]+\frac{\alpha}{n}\sum_{i \in [n]}\left(\xb_i[m+1] - \sigma(z_i^{(t)})\right)\xb_i[j]\label{eq:theta-j}.
\end{align}

\paragraph{Inner-product Transformation}
Since \lnmcfeot only supports inner-product functionality we need to adapt the training algorithm.
A naive approach would be to view~\cref{eq:theta-j} as two nested inner-products.
First, $z_i^{(t)}$ is calculated and plugged into $\sigma$.
Then, $\sigma(z_i^{(t)})$ is regarded as the function vector for $\xb_i$ to calculate the outer sum.

However, this approach requires to add noise twice in each iteration to preserve privacy, which is too much noise to ensure the convergence towards optimal model parameter.
To reduce the amount of noise,~(\ref{eq:theta-j}) can be calculated directly by the underlying DyNMCFE scheme, if we linearize it.

Conveniently, the sigmoid function can be approximated by the least square polynomial $g(x)$ of degree 3 over the domain $[-8,8]$~\cite{kim2018secure}.
More precisely, for
    $$g(x)= -a_1 x^3 +a_2x +0.5$$
with $a_1 = 0.81562/8^3$ and $a_2=1.20096/8$, it holds that $g(x) \approx \sigma(x)$ for $x \in [-8,8]$.

Replacing $\sigma(x)$ with $g(x)$ in~\cref{eq:theta-j} yields a polynomial $f(x)$ of degree 4.
The monomials of $f$ consist only of variables from the same $\xb_i$, which means that all of them can be precomputed by client $i$.
Thus, as established in \cref{sec:ppfl}, instead of encrypting just $\xb_i$, client $i$ precomputes all unique monomials up to degree 4 and encrypts them as well, yielding $\widetilde{\xb}_i$.
In total, as we only have to consider unique monomials, this extends the number of values in each ciphertext to $\widetilde{m} = \frac{1}{24}m^4 + \frac{5}{12}m^3 + \frac{35}{24}m^2 + \frac{37}{12}m +2$.
This allows us to capture the update function as a inner-product with the expanded encrypted values.
The coefficients for the function can be found in Appendix~\ref{ap:logreg}.


\begin{remark}
As $\widetilde{m}$ grows polynomial in $m$, this method is only suitable for datasets with relatively small number of attributes.
For a large number of $m$, we may have to resort to a linear approximation of the sigmoid function ($g'(x) = 0.5 +  0.25x$).
Then, the number of values to be encrypted shrinks down to $\widetilde{m} = 0.5\cdot(3m+m^2)$.
Although this number is still quadratic, the only existing MIFE scheme which supports quadratic functions also requires to encrypt $(\xb \otimes \xb, \xb, 1)$ to obtain truly quadratic functions~\cite{agrawal2022multi}, and they lack to protect the intermediate results by incorporating noise.
\end{remark}

To ensure that $(\epsilon, \delta)$-DP is provided after the last iteration, we set this as our privacy budget  $(\epsilon_{\max}, \delta_{\max})$.
In each iteration, a fraction of this privacy budget is consumed without exceeding $(\epsilon_{\max}, \delta_{\max})$ in total.
The privacy budget could be divided equally among all iterations, i.e., each iteration fulfills $(\epsilon_{\max}/\ep, \delta_{\max}/\ep)$-DP.
However, to achieve a better convergence, we dynamically adapt the the privacy budget consumed in each iteration by spending less at the beginning and more towards the end, as it is also done in~\cite{du2021dynamic}.
In other words, in the first iterations, we statistically have larger noise values, which can still be smoothed out in the later iterations, where statistically smaller noise values are added.

To determine the amount of noise, we must look at the sensitivity of the update function.
Although each of the coefficients of $\thetab^{(t)}$ is calculated separately in the GD, the whole iteration can be seen as evaluating a function with an $(m+1)$-dimensional output.
Without loss of generality, we may assume $\xb_i \in [0,1]^{m+2}$.
Let $\Theta = \sum_{j \in [0;m]} |\thetab[j]|$.
Then, the $l_2$-sensitivity of each iteration is bounded by
\begin{align}\label{eq:sensitivity-logreg}
    \Delta(F_{\thetab}) \leq  \sqrt{m+1}\frac{\alpha}{n}(1+ |a_1 \Theta^3 - a_2  \Theta|).
\end{align}

\paragraph{Results}
We train a logistic regression on four standard benchmarking datasets from the medical context: the Low Birth Weight Study (LBW)~\cite{LBWstudy}, Prostata Cancer Study (PCW)~\cite{pcsStudy}, Umaru Impact Study (UIS)~\cite{uisStudy} and Nhanes III~\cite{nhanes}. Table~\ref{tab:logreg} shows the dataset sizes and the total runtime for 50 iterations.
We did not consider any optimizations of the GD such as for example batching, which could further improve the overall runtime.

The communication costs for each dataset are displayed in \Cref{tab:size_calc}.  The scaling factor is set to be  $s = 10^6$.
For LBW, PCS and UIS, we set the modulus $q=2^{64}$ and for NHANES III, $q= 2^{72}$.
We noticed that $y \in [-5,5]$, i.e., each queried function slot consists of $\lceil \log_2(10 \cdot s) \rceil$ bits.
The rest of the table can be computed according to \cref{tab:size}.

\begin{table}
    \caption{Runtime for training of the logistic regression with 50 iterations on different datasets in minutes.}\label{tab:logreg}
    \centering
    \begin{tabular}{|c|c|c|c|}
        \hline
        \textbf{Dataset} & $\boldsymbol{n}$ & $\boldsymbol{m}$ &  \textbf{Total runtime}\\
        \hline
        LBW & 189 & 10  & 0.19 min \\
        PCS &  380 & 8  & 0.24 min \\
        UIS & 575 &8  & 0.33 min\\
        NHANES  & 16 427 & 11 &42.29 min\\
        \hline
    \end{tabular}
\end{table}

\begin{table}
    \caption{Package sizes for the encryption process per client, per iteration in the training process and in total after 50 iterations, including the encryption key and ciphertext exchange.}\label{tab:size_calc}
    \centering
    \begin{tabular}{|c|l  r|l  r|c|}
        \hline
        \multirow{2}{*}{\textbf{Dataset}} & \multicolumn{2}{|c|}{$\textbf{Encryption}$} & \multicolumn{2}{|c|}{$\textbf{GD Iteration}$} & \multirow{2}{*}{\textbf{Total}}\\
        &\multicolumn{1}{|c}{$\ski$} &\multicolumn{1}{c|}{$\ct$} &\multicolumn{1}{|c}{$f$} & \multicolumn{1}{c|}{$\dk_f$}& \\ \hline
		LBW & 256 b & 7.9  KB& 32.61 KB& 88 B & \multicolumn{1}{r|}{3.06 MB}\\
		PCS & 256 b & 3.93  KB & 13.29  KB & 72  B& \multicolumn{1}{r|}{2.13  MB}\\
		UIS & 256 b& 3.93 KB & 13.29  KB & 72  B & \multicolumn{1}{r|}{2.88 MB}\\
		NHANES  & 256 b&12.10  KB & 48.41  KB & 108  B & \multicolumn{1}{r|}{197.02 MB}\\

       \hline
    \end{tabular}
\end{table}

\Cref{fig:benchmarkDS} displays the utility compared to LDP in form of accuracy of the first three datasets for a fixed number of iterations $\ep =50$ in dependency of $\epsilon_{\max}$.
The parameter $\delta_{\max}=1/n$ is fixed.

The utility highly depends on the dataset, the privacy budget and the learning rate, which can be also seen through the LDP values.
For small epsilon, the noise may be too large such that convergence is not possible, which we marked by setting the accuracy to 0.
We stress that by simply guessing the dependent variable, an accuracy of 0.5 is always possible, which can be also achieved by a random model.
Therefore, LDP is not better than guessing for small epsilon.


 \begin{figure}
	\centering
	\pgfplotstableread{data/accuracy50.dat}{\table}
	\pgfplotstableread{data/accuracyNhanes.dat}{\table}

	\begin{minipage}[t]{0.45\textwidth}
	\begin{tikzpicture}[scale=0.7]
		\begin{axis}[
			xmin = 0.01,
			xmax = 8,
			xtick distance = 1,
			xlabel = {$\epsilon_{\max}$},
			ymin = 0.4, ymax = 0.77, ytick distance = 0.05,
			ylabel={accuracy},
			ytick={0.4, 0.45, 0.5, 0.55, 0.6, 0.65, 0.7, 0.75}, 
			yticklabels={0.4, , 0.5, , 0.6, , 0.7, }, 
			legend pos = south east,
			legend style={  font=\footnotesize  },]

			\addplot[cb1] table [x = {eps}, y = {LBW}] {\table};
			\addplot[cb1,dash pattern=on 0.5pt off 1pt,  line cap = round] table [x = {eps}, y = {LBW_LDP}] {\table};
			\addplot[domain = 0:10, cb1, dashed] { (0.7195767195767195)};
			\addplot[cb2] table [x = {eps}, y = {PCS}] {\table};
			\addplot[cb2,dash pattern=on 0.5pt off 1pt, line cap = round] table [x = {eps}, y = {PCS_LDP}] {\table};
			\addplot[domain = 0:10, cb2, dashed ] { ( 0.7368421052631579)};
			\addplot[cb3] table [x = {eps}, y = {UIS}] {\table};
			\addplot[cb3,dash pattern=on 0.5pt off 1pt,  line cap = round] table [x = {eps}, y = {UIS_LDP}] {\table};
			\addplot[domain = 0:10, cb3, dashed ] { (0.7443478260869565)};
			\legend{ LBW,,,  PCS,,, UIS,,}
		\end{axis}
	\end{tikzpicture}
	\caption{Model utility for 50 rounds, in dependency of $\epsilon_{\max}$. The dashed line shows peak accuracy on plaintext after 500 rounds. Dotted lines represent the maximal accuracy achieved with LDP with 500 rounds.}\label{fig:benchmarkDS}
	\end{minipage}
	\hfill
	\begin{minipage}[t]{0.45\textwidth}
		\begin{tikzpicture}[scale=0.7]
			\begin{axis}[
				xmin = 0.01,
				xmax = 8,
				xtick distance = 1,
				xlabel = {$\epsilon_{\max}$},
				ymin = 0,
				ymin = 0, ymax = 0.89, ytick distance = 0.1,
				ytick = {0, 0.1, 0.2, 0.3, 0.4, 0.5, 0.6, 0.7, 0.8 }, 
				yticklabels={0.0, , 0.2, , 0.4, , 0.6, , 0.8, }, 
				legend pos = south east,
				legend style={  font=\footnotesize  },
				]

				\addplot[cb1] table [x = {eps}, y = {N50}] {\table};
				\addplot[  mark=diamond*, mark options = cb1] coordinates {(8, 0.7470627625251111)};
				\addplot[cb2] table [x = {eps}, y = {N100}] {\table};
				\addplot[mark=diamond*, mark options = cb2 ]coordinates { (8, 0.7923540512570768)};
				\addplot[cb3] table [x = {eps}, y = {N150}] {\table};
				\addplot[mark=diamond*, mark options = cb3 ] coordinates{ (8, 0.8055031350824862)};

			%
				\addplot[domain = 0:10, cb4,  line width=0.5pt, dashed ] { (0.8489681621720339)};
				\addplot[cb4,dash pattern=on 0.5pt off 1pt,  line cap = round] table [x = {eps}, y = {LDP}] {\table};
				\legend{it = $50$,, it = $100$, ,it = $150$,,  max acc, local DP  }
			\end{axis}
		\end{tikzpicture}
		\caption{Model utility on Nhanes III versus $\epsilon_{\max}$ and training rounds. The dashed line shows peak accuracy on plaintext after 500 rounds; the diamond marks max convergence without noise. Dotted lines indicate LDP’s best accuracy at 500 rounds. Accuracy $=0$ means no convergence.}\label{fig:benchmarkNhanes}
	\end{minipage}

\end{figure}

Another trade-off comes from the number of iterations and the privacy budget.
The more rounds we train with a fixed privacy budget, the smaller the privacy budget per iteration and thus, the larger the noise.
On the other hand, less iterations may not be enough to reach a good convergence. \cref{fig:benchmarkNhanes} visualize this consideration based on NHANES III.
 Even without noise, a relatively small number of iterations may not be enough to reach the best possible parameter, but for higher number of iterations, it takes a larger $\epsilon$ to obtain a good utility.

 Hence, although the training of our logistic regression is relatively fast (see Table~\ref{tab:logreg}), there is still the drawback regarding the number of iterations and the privacy.
 However, this problem is a common problem using private GD and only indirectly dependent on our scheme.
 In fact, since we only add the noise at the end of the iteration, i.e., after all inputs of the clients have been combined, and not before, the amount of noise needed is reduced, as can be seen compared to LDP in \cref{fig:benchmarkDS} and \cref{fig:benchmarkNhanes}.
 This is because one individual does not influence the function outcome as heavily.

\section{Conclusion}\label{sec:conclusion}
In this work, we introduced the class of DyNMCFE, which supports DP and can be used in PPML settings where the data holders want to be excluded from the analysis and the authority has limited resources.
We gave a concrete instantiation of a DyNMCFE scheme, namely \lnmcfeot.
This scheme is faster than any previous instantiation of noisy FE, such that even the training of a logistic regression model can be realized in reasonable time.
We believe that it can also be embedded into other ML protocols to obtain output privacy.
To further improve the utility and efficiency, many optimizations known from privacy-preserving ML are possible, for example optimal partition of the privacy budget or batching.

\bibliographystyle{splncs04}

	\bibliography{local, cryptobib/crypto}

\begin{thebibliography}{10}
\providecommand{\url}[1]{\texttt{#1}}
\providecommand{\urlprefix}{URL }
\providecommand{\doi}[1]{https://doi.org/#1}

\bibitem{tensorflow2015}
Abadi, M., Agarwal, A., Barham, P., Brevdo, E., Chen, Z., Citro, C., Corrado, G.S., Davis, A., Dean, J., Devin, M., Ghemawat, S., Goodfellow, I., Harp, A., Irving, G., Isard, M., Jia, Y., Jozefowicz, R., Kaiser, L., Kudlur, M., Levenberg, J., Man\'{e}, D., Monga, R., Moore, S., Murray, D., Olah, C., Schuster, M., Shlens, J., Steiner, B., Sutskever, I., Talwar, K., Tucker, P., Vanhoucke, V., Vasudevan, V., Vi\'{e}gas, F., Vinyals, O., Warden, P., Wattenberg, M., Wicke, M., Yu, Y., Zheng, X.: {TensorFlow}: Large-scale machine learning on heterogeneous systems (2015), \url{https://www.tensorflow.org/}, software available from tensorflow.org

\bibitem{abadi2016deep}
Abadi, M., Chu, A., Goodfellow, I., McMahan, H.B., Mironov, I., Talwar, K., Zhang, L.: Deep learning with differential privacy. In: Proceedings of the 2016 ACM SIGSAC conference on computer and communications security. pp. 308--318 (2016)

\bibitem{PKC:ABKW19}
Abdalla, M., Benhamouda, F., Kohlweiss, M., Waldner, H.: Decentralizing inner-product functional encryption. pp. 128--157 (2019). \doi{10.1007/978-3-030-17259-6_5}

\bibitem{C:ACFGU18}
Abdalla, M., Catalano, D., Fiore, D., Gay, R., Ursu, B.: Multi-input functional encryption for inner products: Function-hiding realizations and constructions without pairings. pp. 597--627 (2018). \doi{10.1007/978-3-319-96884-1_20}

\bibitem{agrawal2022multi}
Agrawal, S., Goyal, R., Tomida, J.: Multi-input quadratic functional encryption: Stronger security, broader functionality. In: Theory of Cryptography Conference. pp. 711--740. Springer (2022)

\bibitem{C:AgrLibSte16}
Agrawal, S., Libert, B., Stehl{\'e}, D.: Fully secure functional encryption for inner products, from standard assumptions. pp. 333--362 (2016). \doi{10.1007/978-3-662-53015-3_12}

\bibitem{go-crypto-aes}
Authors, T.G.: crypto/aes package. \url{https://pkg.go.dev/crypto/aes@go1.23.2} (2024), \url{https://pkg.go.dev/crypto/aes@go1.23.2}, version go1.23.2

\bibitem{bakas2022heal}
Bakas, A., Michalas, A.: Heal the privacy: Functional encryption and privacy-preserving analytics. arXiv preprint arXiv:2205.03083  (2022)

\bibitem{bakas2022private}
Bakas, A., Michalas, A., Dimitriou, T.: Private lives matter: A differential private functional encryption scheme. In: Proceedings of the Twelveth ACM Conference on Data and Application Security and Privacy. pp. 300--311 (2022)

\bibitem{balle18a}
Balle, B., Wang, Y.X.: Improving the {G}aussian mechanism for differential privacy: Analytical calibration and optimal denoising. In: Dy, J., Krause, A. (eds.) Proceedings of the 35th International Conference on Machine Learning. Proceedings of Machine Learning Research, vol.~80, pp. 394--403. PMLR (10--15 Jul 2018), \url{https://proceedings.mlr.press/v80/balle18a.html}

\bibitem{baltico2017practical}
Baltico, C.E.Z., Catalano, D., Fiore, D., Gay, R.: Practical functional encryption for quadratic functions with applications to predicate encryption. In: Annual International Cryptology Conference. pp. 67--98. Springer (2017)

\bibitem{TCC:BoyChuPas14}
Boyle, E., Chung, K.M., Pass, R.: On extractability obfuscation. pp. 52--73 (2014). \doi{10.1007/978-3-642-54242-8_3}

\bibitem{carpov2020illuminating}
Carpov, S., Fontaine, C., Ligier, D., Sirdey, R.: Illuminating the dark or how to recover what should not be seen in fe-based classifiers. Proceedings on Privacy Enhancing Technologies  \textbf{2020}(2),  5--23 (2020)

\bibitem{chang2023privacy}
Chang, Y., Zhang, K., Gong, J., Qian, H.: Privacy-preserving federated learning via functional encryption, revisited. IEEE Transactions on Information Forensics and Security  \textbf{18},  1855--1869 (2023)

\bibitem{AC:CDGPP18}
Chotard, J., {Dufour Sans}, E., Gay, R., Phan, D.H., Pointcheval, D.: Decentralized multi-client functional encryption for inner product. pp. 703--732 (2018). \doi{10.1007/978-3-030-03329-3_24}

\bibitem{C:CDSGPP20}
Chotard, J., Dufour-Sans, E., Gay, R., Phan, D.H., Pointcheval, D.: Dynamic decentralized functional encryption. In: Micciancio, D., Ristenpart, T. (eds.) Advances in Cryptology -- CRYPTO 2020. pp. 747--775. Springer International Publishing, Cham (2020)

\bibitem{PKC:DatOkaTom18}
Datta, P., Okamoto, T., Tomida, J.: Full-hiding (unbounded) multi-input inner product functional encryption from the {$k$-L}inear assumption. pp. 245--277 (2018). \doi{10.1007/978-3-319-76581-5_9}

\bibitem{du2021dynamic}
Du, J., Li, S., Chen, X., Chen, S., Hong, M.: Dynamic differential-privacy preserving sgd. arXiv preprint arXiv:2111.00173  (2021)

\bibitem{dufour2018reading}
Dufour-Sans, E., Gay, R., Pointcheval, D.: Reading in the dark: Classifying encrypted digits with functional encryption. Cryptology ePrint Archive  (2018)

\bibitem{Dwork14}
Dwork, C., Roth, A.: The algorithmic foundations of differential privacy. Foundations and Trends{\textregistered} in Theoretical Computer Science  \textbf{9}(3--4),  211--407 (2014)

\bibitem{escobar2024computational}
Escobar, F.A., Canard, S., Laguillaumie, F., Phan, D.H.: Computational differential privacy for encrypted databases supporting linear queries. Proceedings on Privacy Enhancing Technologies  \textbf{4},  583–604 (2024)

\bibitem{fan2023fate}
Fan, T., Kang, Y., Ma, G., Chen, W., Wei, W., Fan, L., Yang, Q.: Fate-llm: A industrial grade federated learning framework for large language models. arXiv preprint arXiv:2310.10049  (2023)

\bibitem{garg2016candidate}
Garg, S., Gentry, C., Halevi, S., Raykova, M., Sahai, A., Waters, B.: Candidate indistinguishability obfuscation and functional encryption for all circuits. SIAM Journal on Computing  \textbf{45}(3),  882--929 (2016)

\bibitem{gentry2009fully}
Gentry, C.: Fully homomorphic encryption using ideal lattices. In: Proceedings of the forty-first annual ACM symposium on Theory of computing. pp. 169--178 (2009)

\bibitem{goldreich1998secure}
Goldreich, O.: Secure multi-party computation. Manuscript. Preliminary version  \textbf{78}(110),  1--108 (1998)

\bibitem{keller2020mp}
Keller, M.: Mp-spdz: A versatile framework for multi-party computation. In: Proceedings of the 2020 ACM SIGSAC conference on computer and communications security. pp. 1575--1590 (2020)

\bibitem{kim2018logistic}
Kim, A., Song, Y., Kim, M., Lee, K., Cheon, J.H.: Logistic regression model training based on the approximate homomorphic encryption. BMC medical genomics  \textbf{11},  23--31 (2018)

\bibitem{kim2019secure}
Kim, M., Lee, J., Ohno-Machado, L., Jiang, X.: Secure and differentially private logistic regression for horizontally distributed data. IEEE Transactions on Information Forensics and Security  \textbf{15},  695--710 (2019)

\bibitem{kim2018secure}
Kim, M., Song, Y., Wang, S., Xia, Y., Jiang, X., et~al.: Secure logistic regression based on homomorphic encryption: Design and evaluation. JMIR medical informatics  \textbf{6}(2),  e8805 (2018)

\bibitem{ploss1}
Li, M., Tian, Y., Zhang, J., Fan, D., Zhao, D.: The trade-off between privacy and utility in local differential privacy. In: 2021 International Conference on Networking and Network Applications (NaNA). pp. 373--378 (2021). \doi{10.1109/NaNA53684.2021.00071}

\bibitem{li2020federated}
Li, T., Sahu, A.K., Talwalkar, A., Smith, V.: Federated learning: Challenges, methods, and future directions. IEEE signal processing magazine  \textbf{37}(3),  50--60 (2020)

\bibitem{AC:LibTit19}
Libert, B., Titiu, R.: Multi-client functional encryption for linear functions in the standard model from {LWE}. pp. 520--551 (2019). \doi{10.1007/978-3-030-34618-8_18}

\bibitem{ligier2017privacy}
Ligier, D., Carpov, S., Fontaine, C., Sirdey, R.: Privacy preserving data classification using inner-product functional encryption. In: International Conference on Information Systems Security and Privacy. vol.~2, pp. 423--430. SciTePress (2017)

\bibitem{LBWstudy}
LogisticDx: Diagnostic tests for models with a binomial response. lbw: Low birth weight study data (8 2024), \url{https://rdrr.io/rforge/LogisticDx/man/lbw.html}

\bibitem{nhanes}
LogisticDx: Diagnostic tests for models with a binomial response. nhanes3: Nhanes iii data (8 2024), \url{https://rdrr.io/rforge/LogisticDx/man/nhanes3.html}

\bibitem{pcsStudy}
LogisticDx: Diagnostic tests for models with a binomial response. pcs: Prostate cancer study data (8 2024), \url{https://rdrr.io/rforge/LogisticDx/man/pcs.html}

\bibitem{uisStudy}
LogisticDx: Diagnostic tests for models with a binomial response. uis: Umaru impatct study data (8 2024), \url{https://rdrr.io/rforge/LogisticDx/man/uis.html}

\bibitem{ldp}
Mahawaga~Arachchige, P.C., Bertok, P., Khalil, I., Liu, D., Camtepe, S., Atiquzzaman, M.: Local differential privacy for deep learning. IEEE Internet of Things Journal  \textbf{7}(7),  5827--5842 (2020). \doi{10.1109/JIOT.2019.2952146}

\bibitem{PKC:MKMS22}
Mera, J.M.B., Karmakar, A., Marc, T., Soleimanian, A.: Efficient lattice-based inner-product functional encryption. pp. 163--193 (2022). \doi{10.1007/978-3-030-97131-1_6}

\bibitem{ploss2}
Naseri, M., Hayes, J., Cristofaro, E.D.: Toward robustness and privacy in federated learning: Experimenting with local and central differential privacy. CoRR  \textbf{abs/2009.03561} (2020), \url{https://arxiv.org/abs/2009.03561}

\bibitem{panzade2023fenet}
Panzade, P., Takabi, D.: Fenet: Privacy-preserving neural network training with functional encryption. In: Proceedings of the 9th ACM International Workshop on Security and Privacy Analytics. pp. 33--43 (2023)

\bibitem{rao2017aes}
Rao, S., Mahto, D., Yadav, D.K., Khan, D.: The aes-256 cryptosystem resists quantum attacks. Int. J. Adv. Res. Comput. Sci  \textbf{8}(3),  404--408 (2017)

\bibitem{ryffel2019partially}
Ryffel, T., Dufour-Sans, E., Gay, R., Bach, F., Pointcheval, D.: Partially encrypted machine learning using functional encryption. arXiv preprint arXiv:1905.10214  (2019)

\bibitem{eHR}
Slawomirski, L., et~al.: Progress on implementing and using electronic health record systems: Developments in oecd countries as of 2021. OECD Health Working Papers  \textbf{No. 160} (2023). \doi{10.1787/4f4ce846-en}

\bibitem{nist}
of~Standards, N.I., (NIST), T., Dworkin, M.J., Turan, M.S., Mouha, N.: Advanced encryption standard (aes) (2023-05-09 04:05:00 2023). \doi{https://doi.org/10.6028/NIST.FIPS.197-upd1}, \url{https://tsapps.nist.gov/publication/get_pdf.cfm?pub_id=936594}

\bibitem{googleDP}
Team, D.P.: Differential privacy library (dp lib v2.0.0) (2024), \url{https://github.com/google/differential-privacy}

\bibitem{viand2023verifiablefullyhomomorphicencryption}
Viand, A., Knabenhans, C., Hithnawi, A.: Verifiable fully homomorphic encryption (2023), \url{https://arxiv.org/abs/2301.07041}

\bibitem{waters2015punctured}
Waters, B.: A punctured programming approach to adaptively secure functional encryption. In: Annual Cryptology Conference. pp. 678--697. Springer (2015)

\bibitem{xu2019cryptonn}
Xu, R., Joshi, J.B., Li, C.: Cryptonn: Training neural networks over encrypted data. In: 2019 IEEE 39th International Conference on Distributed Computing Systems (ICDCS). pp. 1199--1209. IEEE (2019)

\bibitem{ZASH24}
Zalonis, J., Armknecht, F., Scheu-Hachtel, L.: Differentially private functional encryption. Proceedings on Privacy Enhancing Technologies  \textbf{2},  509–530 (2024)

\bibitem{zhao2019secure}
Zhao, C., Zhao, S., Zhao, M., Chen, Z., Gao, C.Z., Li, H., Tan, Y.a.: Secure multi-party computation: theory, practice and applications. Information Sciences  \textbf{476},  357--372 (2019)

\bibitem{ziller2021pysyft}
Ziller, A., Trask, A., Lopardo, A., Szymkow, B., Wagner, B., Bluemke, E., Nounahon, J.M., Passerat-Palmbach, J., Prakash, K., Rose, N., et~al.: Pysyft: A library for easy federated learning. Federated Learning Systems: Towards Next-Generation AI pp. 111--139 (2021)

\end{thebibliography}

\appendix
\section{Broader definition of NMCFE}\label{sec:extended-def}

\begin{definition}[Noisy Multi-Client Functional Encryption]
    Let $\{\cF_{n}^{m}\}_{n \in \N}$ be a family of sets $\cF_n$ of functions $f: \mathcal{X}_{1} \times \dots \times \mathcal{X}_{n} \to \mathcal{Y}$.
    Let $\Labels = \{0,1\}^* \cup \{\bot\}$ be a set of labels.
    A noisy multi-client functional encryption scheme (NMCFE) for $\cF_n$ and \Labels is a tuple of four efficient algorithms $\mathsf{NMCFE}=(\Setup, \Enc, \KeyGen, \Dec)$  of the following form:
    \begin{description}
        \item[$\Setup(1^\lambda, m, n)$:]
        Takes as input the security parameter $\lambda$, the number of clients $n$ and vector length $m$.
        It outputs a set of public parameters \pp, a master secret key \msk as well as secret keys \ski, for each slot $i \in [n]$.
        All of the remaining algorithms implicitly take \pp.
        \item[$\Enc(\ski, x_i, \ell)$:] Takes as input the secret key \ski for a slot $i \in [n]$, a message $x_i \in \mathcal{X}_i$ and a label $\ell \in \Labels$.
        It outputs a ciphertext $\ct_{i,\ell}$.
        \item[$\KeyGen(\msk, f, \ell, \distr)$:]  Takes as input the master secret key \msk, a function $f\in \cF_{n}$, a label $\ell \in \Labels$ and a distribution $\distr$ over some $\Delta \subseteq \mathcal{Y}$ such that $\Pr\left[f + \Delta \allowbreak \in \allowbreak \cF_{n}\right]=1$.
        Sample $\noise \leftarrow \distr$ and output a decryption key $\dk_{f,\ell}$.
        \item[$\Dec(\dk_{f, \ell}, \ct_{1,\ell}, \dots, \ct_{n,\ell})$:] Takes as input the decryption key $\dk_{f,\ell}$ and $n$ ciphertexts $\ct_{1,\ell}, \dots, \ct_{n,\ell}$, all encrypted under the same label used for the decryption key.
        It outputs a function evaluation $f(x_1, \dots, x_n) + \noise \in \mathcal{Y}$.
    \end{description}
    \end{definition}

The following is a generalized security definition that covers existing definitions for (noisy) MIFE and MCFE. Hence, our notation includes several options.
We will briefly explain some of these to enhance the clarity of the technical definition.
If we set the noise to always be zero, we obtain the classical security definition of MCFE.
Additionally, if we set $\Labels = \{\bot\}$, we get the security definition for MIFE.
Another consideration is the number of required encryption queries per slot $i \in [n]$, i.e., exactly one (\pos) or at least one (\posp).


The specification for handling the labels in the decryption keys is another consideration when building an NMCFE scheme. The keys can either be valid for all ciphertexts  (\all) or only for those, which share the same label as the key (\lab).
This is directly related to~\cref{eq:admissibility}, as the decryption must only protect the trivial attack, i.e., the inputs of the ciphertext queries posed by the adversary yield different function evaluations, if the key and the ciphertexts are generated under the same label.
If the key is not tied to the label (\all), then \cref{eq:admissibility} should hold for all queried functions, as long as only ciphertexts under the same label are combined.

Finally, it makes a difference at what point clients are corrupted, i.e., whether they are corrupted from the start (static  (\sta) corruptions) or at a later point (adaptive (\ad) corruptions).

In the following, $\distr_{\noise}$ defines the distribution that outputs \noise with probability one.

\begin{definition}[Security of NMCFE] \label{def:securitymcfe-extend}
	Let \nmcfe be an NMCFE scheme for $n$ clients and \Labels be a label set.
	For specifications $\textnormal{ww} \in \{\textnormal{\sta, ad}\},$ $\textnormal{xx} \in \{\textnormal{nh, mh}\},$
	$\textnormal{yy} \in \{\pos, \posp\},$
	$\textnormal{zz} \in \{\lab, \all\}$,
	and any security parameter $\lambda$,
	consider the following game \ww-\xx-\yy-\zz-$\mathsf{IND}_{\beta}^{\nmcfe}$ between an adversary \adv and a challenger \challenger.
	The game involves a set \HS of honest clients, initialized to $ \HS := [n] $, and a set of corrupted clients \CS, initialized to $\CS := \emptyset$.
	\begin{description}

		\item[Initialization:] At the beginning of their interaction, \challenger runs $(\pp, \msk, \{\sk_i\}_{i \in [n]}) \leftarrow \Setup(1^\lambda, m, n)$.
		Then, it chooses a random bit $\beta \leftarrow \bin$ and gives $\pp$ to \adv.
		\item[Encryption queries:] \adv can adaptively make encryption queries $\QEnc(i, x_i, \ell)$, which \challenger replies with $\ct_{i,\ell} \gets \Enc(\ski,  x_i, \ell)$.
		Denote by \qcenc the total number of queries of the form $\QEnc(i, \cdot, \ell)$.
		\item[Challenge encryption queries:] \adv can adaptively transmit challenge encryption queries  $\CQEnc(i, x_i^0, x_i^1, \ell)$, which are answered with $\ct_{i,\ell} \gets \Enc(\ski, x_i^\beta, \ell)$.
		Denote by \qccenc the total number of queries $\CQEnc(i, \cdot, \cdot, \ell)$.
		\item[Decryption key queries:] The adversary \adv can adaptively obtain functional decryption keys via queries $\QKeyGen(f, \ell, \noise)$. In response, \challenger returns $\dk_{f, \ell} \gets \KeyGen(\msk, f, \ell, \distr_{\noise})$.
		Denote by $\qkkey$ the total number of queries $\CQEnc(\cdot, \ell, \cdot, \cdot)$.
		\item[Challenge decryption key queries:] The adversary can adaptively pose queries $\CQKeyGen(f, \ell, \noise^0, \noise^1)$ to receive functional decryption keys.
		In response, \challenger returns $\dk_{f, \ell} \gets \KeyGen(\msk, f, \ell, \distr_{\noise^\beta})$.
		Denote by $\qkckey$ the total number of queries $\CQEnc(\cdot, \ell, \cdot, \cdot)$.
		\item[Corruption queries:] \adv can query the corruption oracle  $\QCor(i)$ to obtain $\ski$.
		\challenger set $\CS := \CS \cup \{i\}$ and $\HS := \HS \setminus \{i\}$.
		\item[Finalize:] \adv outputs a bit $\beta' \in \bin$.
		\challenger checks, if \adv acted admissibly.
		If not, \challenger sets $\beta' =0$.
		\adv wins, if $\beta' = \beta$.
	\end{description}

	Depending on the specification of the game defined in advance, we call \adv admissible, if all of the following conditions hold.
	Denote by $\qcil$ the total number of encryption and challenge encryption queries to slot $i$ under label $\ell$, i.e., $\qcil = \qcenc + \qccenc$.
	Similarly, let $\qkl$ be the total number of decryption and challenge decryption key queries under label $\ell$, i.e., $\qkl = \qkkey + \qkckey$.
	\begin{itemize}
		\item If $i \in \CS$, then for any query $\CQEnc(i, x_i^0, x_i^1, \ell)$, $x_i^0 = x_i^1$ .
		\item For any family storing all honest and all challenge encryption queries, i.e., $\{\CQEnc(i, x_i^0, x_i^1, \ell)~ \text{or } \QEnc(i, x_i, \ell)\}_{i \in \HS}$,
		for any family of inputs $\{x_i \in \mathcal{X}_{i}\}_{i \in \CS}$,
		any label $\ell \in \Labels$,
		any query $\CQKeyGen(f_{\ell}, \ell, \noise^0, \noise^1)$,
		we define $x_i^0:= x_i$ and $x_i^1 := x_i$ for $i \in \CS$
		and any slot queried to $\QEnc(i, x_i, \ell)$
		and $\noise^0 := \noise$ and $\noise^1 := \noise$ for any query $\QKeyGen(f_{\ell}, \ell, \noise)$, and we require that:
		\begin{align}\label{eq:admissibility}
			f_\ell(x_1^0, \dots, x_n^0) + \noise_\ell^0 = f_\ell(x_1^1, \dots, x_n^1) + \noise_\ell^1.
		\end{align}
		If one slot $i \in \HS$ has not been queried, then there is no restriction.
		\item If $\ww = \sta$, the adversary should output \CS at the beginning of the game and does not have access to $\QCor(\cdot)$ afterwards.
		\item If $\xx = \mh$, then  for all queries $\CQKeyGen(f, \ell, \noise^0, \noise^1)$, $\noise^0 = \noise^1=0$.
		\item If $\yy=\anyot$, then for any slot $ i \in [n]$ and $ \ell \in \Labels$, $\qcil \leq 1$.
		In other words, the adversary can pose at most one (challenge) encryption query per slot.
		\item If $\yy = \pos$, then for any slot $ i \in \HS$ and $ \ell \in \Labels$, it holds that if $\qccenc > 0$ or $\qkckey > 0$, then $\qcjl = 1$ for any slot $j \in \HS$.
		In other words, for any label, if there is one honest (challenge) encryption query or challenge decryption key query, then all honest slots have to be queried exactly once.
		\item If $\yy = \posp$, then for any slot $ i \in \HS$ and $ \ell \in \Labels$, it holds that if $\qccenc > 0$ or $\qkckey>0$, then $\qcjl > 0$ for any slot $j \in \HS$.
		In other words, for any label, the adversary makes either at least one challenge query for each slot $i \in \HS$ or none.
		Moreover, for any label, if the adversary queries a challenge decryption key, they also have to pose at least one (challenge) encryption query per slot.
		\item If $\zz = \all$, then for any $f \in \cF_n$, $\dk_{f,\ell} = \dk_{f, \ell'}$ for all $\ell, \ell' \in \Labels$.
		In other words, the decryption keys for a function $f$ are the same for each label and can thus be seen as independent of the label.

	\end{itemize}
	The advantage of $\adv$ in this game is defined as
	\begin{align*}
		\advantage^{\ww\text{-}\xx\text{-}\yy\text{-}\zz\text{-}\mathsf{IND}}_{\nmcfe, \adv}(\lambda)
		&= \left| \Pr(\ww\text{-}\xx\text{-}\yy\text{-}\zz\text{-}\mathsf{IND}_0^{\nmcfe}(\lambda, \adv) =1) \right.\\
		&-\left.\Pr(\ww\text{-}\xx\text{-}\yy\text{-}\zz\text{-}\mathsf{IND}_1^{\nmcfe}(\lambda, \adv) =1) \right|
	\end{align*}
	We say that an NMCFE scheme \nmcfe provides \ww-$\xx$-\yy-\zz-$\mathsf{IND}$-security, if
	$\advantage^{\ww\text{-}\xx\text{-}\yy\text{-}\zz\text{-}\mathsf{IND}}_{\nmcfe, \adv}(\lambda) \leq \mathsf{negl}(\lambda).$
\end{definition}

Note that this security definition is the same as \cref{def:securitymcfe} for $\ww\text{-}\xx\text{-}\yy\text{-}\zz=\sta\text{-}\nh\text{-}\pos\text{-}\lab$ and if all encryption keys are generated during setup.
In fact, for our proposed scheme, \lnmcfe, we will show that thee non-dynamic version fulfills $\sta\text{-}\nh\text{-}\pos\text{-}\lab\text{-}\mathsf{IND}$-security and conclude that this suffices for the \IND-security of \lnmcfe.

In case only one label is used, i.e., $\Labels= \{\bot\}$, the declaration of $\zz$ is omitted.
Moreover, we omit the label index in the notation above, e.g., only write $\qk$ instead of $q_{k,\ell}$.
In a similar manner, if no corruptions are allowed, we omit $\ww$.

To simplify our security proofs, we introduce the notion of 1-label security, as it is common in the MCFE setting~\cite{AC:LibTit19}.
In this modified game, the adversary is only allowed to pose challenge queries under a single label $\ell^*$.
Although this may seem as a strong restriction, we show via a sequence of hybrid arguments that is equivalent to Definition~\ref{def:securitymcfe}.

\begin{definition}[1-label security of NMCFE]
	Let \nmcfe be a NMCFE scheme for $n$ clients and $\Labels$ be a label set. For \ww, \xx, \yy, $\zz$  defined as in Definition~\ref{def:securitymcfe}
	and $\beta \in \{0,1\}$, we define the game
	$1$-\ww-\xx-\yy-\zz-$\mathsf{IND}_{\beta}^{\nmcfe}$ as in Definition~\ref{def:securitymcfe}, except for the following changes in the oracle queries:
	\begin{description}
		\item[Encryption queries:]
		\adv can adaptively make encryption queries $\QEnc(i, x_i, \ell)$, which \challenger replies with $\ct_{i,\ell} \gets \Enc(\ski, x_i, \ell)$.
		If one label corresponds to $\ell^*$ queried to \CQEnc, the game ends and returns 0.
		\item[Challenge encryption queries:]
		\adv can adaptively transmit challenge encryption queries  $\CQEnc(i, x_i^0, x_i^1, \ell)$, which are answered with $\ct_{i,\ell} \gets \Enc(\ski, x_i^\beta, \ell)$.
		These queries can be made on at most one label $\ell^*$.
		Further queries with distinct labels will be ignored.
		\item[Decryption key queries:]
		\adv can adaptively pose queries $\QKeyGen(f, \ell, \noise)$ to obtain functional decryption keys.
		\challenger returns $\dk_{f, \ell}  \gets \KeyGen(\msk, f, \ell, \distr_{\noise})$.
		If one label corresponds to $\ell'$ queried to \CQKeyGen, the game ends and returns 0.
		\item[Challenge decryption key queries:]
		The adversary can adaptively receive functional decryption keys by posing  queries $\CQKeyGen(f, \ell, \noise^0, \noise^1)$.
		In response, \challenger returns $\dk_{f, \ell} \gets \KeyGen(\msk, f, \ell, \distr_{\noise^\beta})$.
		These queries can be made on at most one label $\ell'$.
		Further queries with distinct labels will be ignored.

	\end{description}
	The same conditions defining the admissibility of an adversary as in Definition~\ref{def:securitymcfe}
	apply to an adversary \adv playing the $1$-\ww-\xx-\yy-\zz-$\mathsf{IND}_{\beta}^{\nmcfe}$ game.
	The advantage of $\adv$ is defined as
	\begin{align*}
		\advantage^{1\text{-}\ww\text{-}\xx\text{-}\yy\text{-}\zz\text{-}\mathsf{IND}}_{\nmcfe, \adv}(\lambda)
		 &= \left| \Pr(1\text{-}\ww\text{-}\xx\text{-}\yy\text{-}\zz\text{-}\mathsf{IND}_0^{\nmcfe}(\lambda, \adv) =1) \right.\\
		&-\left.\Pr(1\text{-}\ww\text{-}\xx\text{-}\yy\text{-}\zz\text{-}\mathsf{IND}_1^{\nmcfe}(\lambda, \adv) =1) \right|
	\end{align*}
	We say that an NMCFE scheme \nmcfe provides $1$-\ww-$\xx$-\yy-\zz-$\mathsf{IND}$-security, if
	$$\advantage^{1\text{-}\ww\text{-}\xx\text{-}\yy\text{-}\zz\text{-}\mathsf{IND}}_{\nmcfe, \adv}(\lambda) \leq \negl(\lambda).$$
\end{definition}

This definition allows $\ell^* \neq \ell'$.
However, note that if this is the case, then $\noise_l^0 =\noise_l^1$ for all decryption key queries $l \in [q_{k,\ell'}]$ under the condition that \adv is an admissible adversary.
Moreover, $f(x_1^0, \dots, x_n^0) = f(x_1^1, \dots, x_n^1)$ for all $\{x_i^b\}_{i\in [n], b \in \bin}$.
Thus, the adversary would always play the game in the $\xx =\mh$ setting, which is the weaker security definition.
The similarity of both security definitions in that particular setting has been proven for example by Libert et al.~\cite{AC:LibTit19}.
We will focus on the case where $\ell^*=\ell'$.

\begin{lemma} \label{lem:oneLab}
	Let \mcfe be a scheme that is $1\text{-}\ww\text{-}\xx\text{-}\yy\text{-}\zz\text{-}\mathsf{IND}$-secure. Then it is also secure against a \PPT adversary \adv against its $\ww\text{-}\xx\text{-}\yy\text{-}\zz\text{-}\mathsf{IND}$ security.
	Namely, for any \PPT adversary \adv, there exists a \PPT adversary $\cB$ such that.
	\begin{align*}
		\advantage^{\ww\text{-}\xx\text{-}\yy\text{-}\zz\text{-}\mathsf{IND}}_{\mcfe, \adv}(\lambda) \leq \ql \cdot \advantage^{1\text{-}\ww\text{-}\xx\text{-}\yy\text{-}\zz\text{-}\mathsf{IND}}_{\mcfe, \cB}(\lambda),
	\end{align*}
	where \ql denotes the number of distinct labels queried by \adv to \CQEnc and \CQKeyGen in the original security game.
\end{lemma}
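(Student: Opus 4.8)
The plan is to prove the lemma by a hybrid argument over the distinct challenge labels, turning a full $\ww\text{-}\xx\text{-}\yy\text{-}\zz\text{-}\mathsf{IND}$ adversary into a single-challenge-label one. Write $\ell_1, \dots, \ell_{\ql}$ for the distinct labels that $\adv$ submits to $\CQEnc$ or $\CQKeyGen$, ordered by the time of their first challenge query; these are well defined and polynomially many since $\adv$ is \PPT. For $j \in [0;\ql]$ define the hybrid $\H_j$ exactly as the game of \cref{def:securitymcfe-extend}, except that a challenge query carrying one of $\ell_1, \dots, \ell_j$ is answered with the $\beta=1$ branch (message $x_i^1$, noise $\noise^1$) while a challenge query under $\ell_{j+1}, \dots, \ell_{\ql}$ uses the $\beta=0$ branch. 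Then $\H_0$ is $\ww\text{-}\xx\text{-}\yy\text{-}\zz\text{-}\mathsf{IND}_0^{\mcfe}$ and $\H_{\ql}$ is $\ww\text{-}\xx\text{-}\yy\text{-}\zz\text{-}\mathsf{IND}_1^{\mcfe}$, so by the triangle inequality $\advantage^{\ww\text{-}\xx\text{-}\yy\text{-}\zz\text{-}\mathsf{IND}}_{\mcfe, \adv}(\lambda) \le \sum_{j=1}^{\ql} \big|\Pr[\H_{j-1}=1] - \Pr[\H_j=1]\big|$.

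Next I would build $\cB$, which guesses $k \sampler [\ql]$, relays corruption queries verbatim, and simulates $\adv$'s view so that its own hidden bit interpolates between $\H_{k-1}$ and $\H_k$. It assigns each label its challenge index at that label's first challenge query. For the label $\ell_k$ receiving the $k$-th index, $\cB$ forwards all its challenge queries to its own $\CQEnc/\CQKeyGen$ oracles on a single label (so $\ell^*=\ell'$, matching the focus of the lemma) and turns any regular query on $\ell_k$ into the equal-input challenge query $\CQEnc(i,x_i,x_i,\cdot)$ resp.\ $\CQKeyGen(f,\cdot,\noise,\noise)$, whose output is independent of the challenger's bit and hence identically distributed to the honest answer; a regular query on $\ell_k$ that arrives before $\ell_k$ is identified is answered with the regular oracle, which returns the same ciphertext. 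For every other label $\cB$ uses the regular $\QEnc/\QKeyGen$ oracles, picking the $x^1/\noise^1$ branch when the label's challenge index is below $k$ and the $x^0/\noise^0$ branch when it is above $k$. Since challenge queries to its challenger touch only one label, $\cB$ respects the single-challenge-label restriction, and $\adv$'s view is distributed exactly as $\H_{k-1}$ when $\cB$'s bit is $0$ and as $\H_k$ when it is $1$.

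The crux, which I expect to be the main obstacle, is verifying that $\cB$ stays admissible whenever $\adv$ is, for every setting of $\ww,\xx,\yy,\zz$. On the single forwarded label the challenge queries are literally those of $\adv$, so \cref{eq:admissibility} (which $\adv$ must satisfy label by label), the equal-input condition on corrupted slots, and the per-slot counting conditions for $\yy\in\{\pos,\posp\}$ are inherited; here it is essential that those counting conditions refer to the \emph{total} number of encryption queries $\qcil=\qcenc+\qccenc$ per slot, so it is irrelevant whether $\cB$ realized a given query on $\ell_k$ as a regular or as an equal-input challenge query, which is exactly what lets the early regular queries on the boundary label be reconciled with the one-label game's restrictions. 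On all other labels $\cB$ issues only regular queries, which carry no challenge-side constraint, and the $\zz=\all$ case needs only that the key on $\ell_k$ agrees with the keys on the remaining labels, again inherited from $\adv$'s admissibility. This bookkeeping is the delicate part; the remainder is mechanical.

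Finally I would average over the guess. Writing $p_j := \Pr[\H_j=1]$, the simulation gives $\Pr[\cB=1\mid\beta=0]=\frac{1}{\ql}\sum_{j=0}^{\ql-1}p_j$ and $\Pr[\cB=1\mid\beta=1]=\frac{1}{\ql}\sum_{j=1}^{\ql}p_j$, so the sum telescopes and $\advantage^{1\text{-}\ww\text{-}\xx\text{-}\yy\text{-}\zz\text{-}\mathsf{IND}}_{\mcfe, \cB}(\lambda)=\frac{1}{\ql}\,|p_0-p_{\ql}|=\frac{1}{\ql}\,\advantage^{\ww\text{-}\xx\text{-}\yy\text{-}\zz\text{-}\mathsf{IND}}_{\mcfe, \adv}(\lambda)$, which rearranges to the claimed bound $\advantage^{\ww\text{-}\xx\text{-}\yy\text{-}\zz\text{-}\mathsf{IND}}_{\mcfe, \adv}(\lambda)\le \ql\cdot\advantage^{1\text{-}\ww\text{-}\xx\text{-}\yy\text{-}\zz\text{-}\mathsf{IND}}_{\mcfe, \cB}(\lambda)$.
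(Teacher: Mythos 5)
Your proposal is correct and follows essentially the same route as the paper's proof: a hybrid argument over the distinct challenge labels combined with a reduction in which the boundary label's queries are forwarded as the single challenge label of the $1$-label game and all other labels are handled via regular queries with the appropriate $x^0/x^1$ (resp.\ $\noise^0/\noise^1$) branch. Your version is slightly more explicit in two places where the paper is terse — sampling the hybrid index $k$ uniformly to obtain a single adversary $\cB$ with the averaged/telescoped advantage, and converting regular queries on the boundary label into equal-input challenge queries to respect the $1$-label game's restrictions — but these are refinements of the same argument, not a different one.
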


\begin{proof}
	Let \adv be an efficient adversary in the $\ww\text{-}\xx\text{-}\yy\text{-}\zz\text{-}\mathsf{IND}$-security game.
	We show that \adv implies an MCFE adversary in the $1\text{-}\ww\text{-}\xx\text{-}\yy\text{-}\zz\text{-}\mathsf{IND}$.
	Assume \adv makes encryption and decryption key queries for $q_{\ell}$ distinct labels.
	Denote by $\ell_j, j \in [q_\ell]$ the $j^{\text{th}}$ distinct label that was queried to one of the challenge oracle.
	We use a standard hybrid argument over the distinct labels that \adv queries throughout the attack.
	More precisely, let $\game{k}$ with $k \in \{0, \dots, q_\ell\}$ be the game in which the challenger responds in the following way to encryption and decryption key queries $\QEnc(i, x_i^0, x_i^1, \ell_j)$ and $\QKeyGen(f, \ell_j, \noise^0, \noise^1)$:
	\begin{itemize}
		\item If $j \leq k$, reply with $\Enc(\ski, x_i^0, \ell_j)$ and $\KeyGen(f, \ell_j, \noise^0)$, respectively.
		\item If $j > k$, reply with $\Enc(\ski, x_i^1, \ell_j)$ and $\KeyGen(f, \ell_j, \noise^1)$, respectively.
	\end{itemize}
	We assume that all of the restrictions imposed by the admissibility of the adversary remain the same.
	By construction, an adversary \adv against the \ww-$\xx$-\yy-\zz-$\mathsf{IND}$-security of Definition~\ref{def:securitymcfe}
	yields a distinguisher between $\game{0}$ and $\game{q_\ell}$.
	On the other hand, for any $k \in \{0,1, \dots, q_\ell\}$, an efficient distinguisher $\adv_k$ between $\game{k}$ and $\game{k+1}$ implies the existence of an efficient adversary $\cB_k$ in the $1$-\ww-$\xx$-\yy-\zz-$\mathsf{IND}_\beta^{\mcfe}$ game such that $\advantage_{\adv_k}^{k,k+1} (\lambda)= \advantage_{\mcfe, \cB_k}^{1\text{-}\ww\text{-}\xx\text{-}\yy\text{-}\zz\text{-}\mathsf{IND}}(\lambda)$.
	The reason is that $\cB_k$ can simulate $\adv_k$'s view as follows.
	\begin{description}
		\item[Initialization:] $\cB_k$ obtains \pp from its own challenger and hands it to $\adv_k$.
		\item[Encryption queries:] For each query $\QEnc(i, x_i^0, x_i^1, \ell_j)$ from $\adv_k$, $\cB_k$ acts as follows.
			\begin{itemize}
				\item If $j \leq k$, $\cB_k$ poses the encryption query $\QEnc(i, x_i^0, \ell_j)$ to its own oracle and forwards the response to $\adv_k$.
				\item If $j = k+1$,  $\cB_k$ poses the challenge encryption query $\CQEnc(i, x_i^0, x_i^1, \ell_j)$ to its own challenger and relays the response back to $\adv_k$.
				\item Otherwise, $\cB_k$ transmits the query $\QEnc(i, x_i^1, \ell_j)$ and returns the answer to $\adv_k$.
			\end{itemize}
		\item[Decryption key queries:] $\cB_k$ does the following for each decryption key query $\CQKeyGen(f, \ell_j, \noise^0, \noise^1)$ from $\adv_k$.
			\begin{itemize}
				\item If $j \leq k$, $\cB_k$ sends the query $\QKeyGen(f, \ell_j, \noise^0)$ to its own challenger and relays the answer back to $\adv_k$.
				\item If $j = k+1$,  $\cB_k$ poses the challenge encryption query $\CQKeyGen(f, \ell_j, \noise^0, \noise^1)$  and sends the response to $\adv_k$.
				\item Otherwise, $\cB_k$ queries $\QKeyGen(f, \ell_j, \noise^1)$ and transmits the answer to $\adv_k$.
			\end{itemize}
		\item[Corruption queries:] For each query $\QCor(i)$ from $\adv_k$, $\cB_k$ makes the same query to its own corruption oracle and transmits the answer to $\adv_k$.
		\item[Finalize:] $\cB_k$ outputs the same bit $\beta'$ as $\adv_k$.
	\end{description}
	Summing up the underlying probabilities concludes the proof.
\end{proof}

	\section{Security proofs}
Throughout this section, denote by \Win{i} the probability that an adversary \adv outputs the correct $\beta$ in \game{i}, i.e., that \adv wins \game{i}.
\subsection{Security of \nmifeot}\label{sec:security-nmife}
To prove the security of \nmifeot, we need the information-theoretical security of \mifeot.
Abdalla et al.\cite{PKC:ABKW19} showed:

\begin{theorem}\label{thm:secmife}(\cite{PKC:ABKW19})
The \mifeot scheme as described above fulfills \ad-\mh-\pos-$\mathsf{IND}$-security.
In other words, for any adversary \adv, $\advantage_{\mifeot, \adv}^{\ad\text{-}\nh\text{-}\pos\text{-}\mathsf{IND}}(\lambda) = 0$.
\end{theorem}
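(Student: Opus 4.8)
The plan is to prove \emph{perfect} security, i.e.\ that the adversary's entire view is identically distributed under $\beta=0$ and $\beta=1$, from which $\advantage_{\mifeot,\adv}^{\ad\text{-}\mh\text{-}\pos\text{-}\mathsf{IND}}(\lambda)=0$ follows at once. The whole argument rests on the one-time-pad structure of the ciphertexts: since each honest slot is encrypted exactly once (the \pos restriction) with a key $\ski \sampler \Z_q^{m}$ drawn independently and uniformly, the masks act as perfect one-time pads, and the only correlated leakage about the honest keys comes from the linear decryption values $z=\sum_i\langle\ski,\yb_i\rangle$. I would show that this leakage is fully reconciled by the admissibility constraint, by exhibiting a measure-preserving bijection on the key space that carries the $\beta=1$ execution onto the $\beta=0$ execution.

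Concretely, I would first discharge the corrupted slots: for $i\in\CS$ the key $\ski$ is revealed, but admissibility forces the challenge plaintexts to satisfy $\xb_i^0=\xb_i^1$, so these slots contribute identically in both worlds. For the honest slots set $\boldsymbol{\delta}_i := \xb_i^1-\xb_i^0$ (which is $\mathbf{0}$ on every corrupted slot). Reading \cref{eq:admissibility} in the \mh setting, where $\noise^0=\noise^1=0$, and cancelling the corrupted contributions yields, for every queried function $f_{\yb}$, the orthogonality relation $\sum_{i\in\HS}\langle\boldsymbol{\delta}_i,\yb_i\rangle=0$. Now in the $\beta=1$ world substitute $\ski=\hat{\ski}-\boldsymbol{\delta}_i$ for every slot (this is the identity on corrupted slots); since a shift is a bijection on $\Z_q^{m}$, each $\hat{\ski}$ is again uniform. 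Under this substitution every honest challenge ciphertext becomes $\mat c_i=\xb_i^1+\ski=\xb_i^0+\hat{\ski}$, exactly the $\beta=0$ ciphertext under key $\hat{\ski}$, while every decryption value becomes $z=\sum_i\langle\hat{\ski},\yb_i\rangle-\sum_{i\in\HS}\langle\boldsymbol{\delta}_i,\yb_i\rangle=\sum_i\langle\hat{\ski},\yb_i\rangle$, where the subtracted term vanishes by the orthogonality relation. Hence the two transcripts coincide, and the substitution is a uniform-measure-preserving bijection between the randomness of the two games, so the views are identical.

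I expect the main obstacle to be the \emph{adaptive} corruption model (\ad): the shift $\boldsymbol{\delta}_i$ depends on both the honest/corrupted partition and the challenge plaintexts, none of which are fixed in advance, so there is no static reduction. I would resolve this by phrasing the claim information-theoretically rather than as a reduction. Fixing the adversary's random tape makes the transcript a deterministic function of the keys, and because the ciphertexts pin down the key of every encrypted slot, each admissible transcript is realised by a \emph{unique} key vector in each world, with probability $(1/q^{m})^{n}$. The finalization-time admissibility checks guarantee $\boldsymbol{\delta}_i=\mathbf{0}$ for every slot that ends up corrupted and that the orthogonality relation holds for the final partition, so the bijection above is well defined on the realised transcript and preserves its probability of occurring. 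Matching probabilities transcript-by-transcript then gives identical distributions, hence advantage exactly $0$; this reproduces the argument of Abdalla et al.~\cite{PKC:ABKW19}.
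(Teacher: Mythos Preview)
The paper does not actually prove this theorem; it is imported verbatim from Abdalla et al.~\cite{PKC:ABKW19} and used only as a black-box ingredient in the proof of \cref{thm:sec-nmifeot}. Your argument is correct and is precisely the standard proof from~\cite{PKC:ABKW19}: the one-time-pad bijection $\ski\mapsto\ski+(\xb_i^{1}-\xb_i^{0})$ on the honest keys, with the admissibility constraint in the \mh setting supplying exactly the orthogonality relation $\sum_{i\in\HS}\langle\boldsymbol{\delta}_i,\yb_i\rangle=0$ needed for the decryption-key values $z$ to coincide, and the information-theoretic transcript-by-transcript matching handling adaptive corruptions.
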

Using this theorem, we can now show the security of \nmifeot.
\begin{proof}[Proof of Theorem~\ref{thm:sec-nmifeot}]
	Consider the adversary \adv against the $\ad\text{-}\nh\text{-}\pos\text{-}\mathsf{IND}$-security of \nmifeot.
	We prove the theorem using two reductions.
	Let \sel-\sta-\nh-\pos-$\mathsf{IND}_{\beta}^{\mifeot}(\lambda)$ be a variant of the \sta-\nh-\pos-$\mathsf{IND}_{\beta}^{\mifeot}(\lambda)$ game, where the selective adversary has to additionally specify the encryption challenges $\{\xb_i^b\}_{i \in [n], b \in \{0,1\}}$ together with the corrupted set \CS at the beginning of its interaction with the challenger.
	First, consider an adversary \(\cB_1\) against the \sel-$\sta\text{-}\nh\text{-}\pos\text{-}\mathsf{IND}$ security of \nmifeot.
	\(\cB_1\) can simulate \adv's view in the following manner.

		Upon interaction with \adv, \(\cB_1\) guesses \adv's ciphertext queries \((\xb_i^0, \xb_i^1)\) as well as the set of corrupted clients \CS and commits them along with all key queries \((f_{\yb }^l, \noise_l^0, \noise_l^1)\), which are admissible with respect to \(\{(\xb_i^0, \xb_i^1)_{i\in[n]}\}\)  to their own selective oracle.

		Whenever \adv poses a ciphertext query \((\xb_i^0, \xb_i^1)\), \(\cB_1\) checks if their guess was correct. If so, they query their own oracle and send the result to \adv. Otherwise, they output 0.

		Whenever \adv poses a corruption query for $i \in [n]$, $\cB_1$ checks if $i \in \CS$. If so, they give $\sk_i$ to \adv. Otherwise, they output 0.

		Whenever \adv poses a key query \((f_{\yb}^l, \noise_l^0, \noise_l^1)\), \(\cB_1\) checks if it belongs to its set of key queries. If so, they query their own oracle and send the result to \adv. Otherwise, they output 0.

		If all guesses were correct, $\cB_1$ outputs the same bit as \adv.

	Note that if all queries were guessed successfully in advance, \(\cB_1\) perfectly simulates \adv's view.
	The probability that all ciphertext queries are guessed correctly is exactly \(q^{-2nm}\).
	On the other hand, the probability that $\cB_1$ guesses \CS correctly is $2^{-n}$.
	If \(\cB_1\)'s guess coincides with \adv's initial, then all of the key queries \adv poses are covered by the set of key queries \(\cB_1\) committed to their oracle as it contains all admissible key queries with respect to the same ciphertexts. Thus, it holds that
	\begin{align*}
		\advantage_{\nmifeot, \adv}^{\ad\text{-}\nh\text{-}\pos\text{-}\mathsf{IND}}(\lambda) \leq 2^{-n} q^{-2nm} \advantage_{\nmifeot, \cB_1}^{\sta\text{-}\nh\text{-}\pos\text{-}\sel\text{-}\mathsf{IND}}(\lambda).
	\end{align*}
	Next, let \(\cB_2\) be an adversary against the $\ad\text{-}\mh\text{-}\pos\text{-}\mathsf{IND}$-security of \mifeot for the family \(\cF_{q,n+\lceil \qk/m \rceil}\), where \qk is the amount of key queries \(\cB_1\) poses.
	They simulate the view of \(\cB_1\) as follows.

		 \(\cB_2\) obtains the set of ciphertext queries \(\{\xb_i^b\}_{i \in [n], b \in \{0,1\}}\) and function queries \(\{(f_{\yb}^l, \noise_l^0, \noise_l^1)\}_{l\in [\qk]}\) from \(\cB_1\).
		For \(j \in [\lceil \qk/m \rceil]\), \(\cB_2\) sets \(\xb_{n+i}^{\beta} = (\noise_{1+(i-1)m}^{\beta}, \dots,\allowbreak \noise_{m+(i-1)m}^{\beta})\), where \(\noise_{l}^{\beta}=0\) for all \(l > \qk\).
		 $\cB_2$ also receives the set of corrupted clients, \CS. They query their own corruption oracle on any index $i \in \CS$ to obtain \ski.
		$\{\sk_i\}_{i \in \CS}$ is handed to $\cB_1$.

		 \(\cB_2\) answers \(\cB_1\)'s encryption queries for \(i \in [n]\) by invoking their own oracle to obtain \(\ct_i\). This is directly sent to \(\cB_1\).
		Further, \(\cB_2\) also queries their oracle on \((\xb_i^0, \xb_i^1)\) for \(i \in [\lceil \qk/m \rceil]\).

		 To answer the \(l^{th}\) decryption key query of \((f_{\yb}^l, \noise_l^0, \noise_l^1)\) of $\cB_1$, \(\cB_2\) sets $\widetilde{\yb} = (\yb^\top, \mathbf{e}_l^\top)^\top$, where $\mathbf{e}_l$ is the vector of unity of dimension \(dm\) with \(d \in \N\) being the smallest number such that \(\qk \leq dm\).
		Parsing $\widetilde{\yb} = (\widetilde{\yb}_1, \dots, \widetilde{\yb}_{n+[\lceil \qk/m \rceil]}) \in \Z^{m(n+[\lceil \qk/m \rceil])}$, they query their oracle for $f_{\widetilde{\yb}}^l$
		to receive \(z\).
		Finally, $\cB_2$ calculates \(\widetilde{z} = z - \sum_{i \in [\lceil \qk/m \rceil]} \langle \ct_{n+i}, \widetilde{\yb}_{n+i}\rangle\) and sends it to \(\cB_1\).

		At the end of their interaction, \(\cB_2\) outputs the same bit as \(\cB_1\).

	Clearly, the ciphertext queries are equivalent in both cases. \(\cB_2\)'s key queries are admissible with respect to~(\ref{eq:admissibility}) as by construction,
	\begin{align*}
		\sum_{i \in [n+\lceil \qk/m \rceil]} \langle\ct_i, \widetilde{\yb}^l_{i}\rangle -z = \sum_{i \in [n]} \langle \xb_i^\beta, \yb^l_i \rangle + \noise_{l}^\beta
		= \sum_{i \in [n]} \langle \xb_i^0, \yb^l_i \rangle + \noise_{l}^0
	\end{align*}
	for all \(\beta \in \{0,1\}\) by the restrictions imposed on \(\cB_1\).

	Moreover, it holds that
	\begin{align*}
		\widetilde{z} &=\sum_{i \in [n+\lceil \qk/m \rceil]} \langle\sk_i, \widetilde{\yb}^l_{i}\rangle - \sum_{i \in [\lceil \qk/m \rceil]} \langle \xb_{n+i} + \sk_{n+i}, \widetilde{\yb}^l_{n+i}\rangle\\
		&= \sum_{i \in [n]} \langle\sk_i, \yb^l_{i}\rangle - \noise_l^\beta.
	\end{align*}
	Thus, \(\cB_2\) perfectly simulates \(\cB_1\)'s view.
	But as \(\advantage_{\mife^{ot}, \cB_2}^{\ad\text{-}\mh\text{-}\pos\text{-}\mathsf{IND}}(\lambda) = 0\) by Theorem~\ref{thm:secmife}, the claim follows.
\end{proof}

\subsection{Security of \lnmcfeot}\label{sec:sec-lnmcfeot}

First, let us recall the formal definition of a PRF.

\begin{definition}\label{def:PRF}
	For any PRF \PRF from $\D$ to $\cR$ and any security parameter $\lambda$ we define the experiment $\mathsf{IND}_\beta^{\PRF}$ between an adversary \adv and a challenger \challenger in the following way.
	\begin{description}
		\item[Initialization:] \challenger samples $\mathsf{K} \sampler \{0,1\}^\lambda$ and $\beta \sampler \bin$.
		\item[Challenge phase:] For a query $\ell \in \D$ posed by \adv, \challenger returns $\PRF(\mathsf{K}, \ell)$ if $\beta = 0$ and $\RF(\ell)$ otherwise.
		Here, $\RF$ denotes a function computed on the fly.
		\item[Finalize:] \adv outputs a bit $\beta' \in \bin$.
		If $\beta' = \beta$, \adv wins.
	\end{description}
	The advantage of an adversary \adv is defined as
	\begin{align*}
		\advantage_{\PRF, \adv}(\lambda) =\left| \Pr(\mathsf{IND}_0^{\PRF}(\lambda, \adv) =1)\right.
		- \left.\Pr(\mathsf{IND}_1^{\PRF}(\lambda, \adv) =1) \right|.
	\end{align*}
	\PRF is called secure, if for any \PPT adversary \adv, there exists a negligible function $\negl$ such that $\advantage_{\PRF, \adv}(\lambda) \leq \negl(\lambda)$.
\end{definition}

For our security proof, we need the non-dynamic version of \lnmcfeot, which is defined as follows.

\begin{construction}[Non-dynamic \lnmcfeot] \label{def:sta-lnmcfeot}
	Let \(\cF_{q,n}^m\) be the class of multi-input inner products over \(\Z_q\). The non-dynamic \lnmcfeot scheme for \(\cF_{q,n}^m\) consists of the following algorithms:
	\begin{description}
		\item[\(\Setup(1^\lambda, m, n)\):]
		On input of the security parameter $\lambda$, number of clients $n$ and vector length $m$,
		sample \(\ski\sampler \{0,1\}^\lambda\)  for all slots \(i \in [n]\).
		Set \(\msk := (\{\sk_i\}_{i \in [n]})\) and output both as well as the public parameters $\pp$.
		\item[\(\Enc(\sk_i, \xb_i, \ell)\):]
		On input of the secret key $\ski$, a plaintext $\xb_i \in \Z_q^m$  for slot $i \in [n]$ and a label $\ell \in \Labels$,
		calculate $\zeta = \PRF(\ski, \ell)$ and return \(\ct_{i,\ell} = (i, \ell, \mat c_i := \xb_i + \zeta \mod q)\).

		\item[\(\KeyGen(\msk, f_{\yb}, \ell, \distr)\):]
		On input of the master secret key \msk, a function $f_{\yb}$, a label $\ell \in \Labels$ and a noise distribution $\distr$,
		sample \(\noise \leftarrow \distr\) and calculate $\zeta_i = \PRF(\ski, \ell)$ for all \(i \in [n]\).
		Return \(\dk_{\yb,\ell} =(\ell, \yb, z := \sum_{i \in [n]} \langle \zeta_i, \yb_i \rangle - \noise \mod q)\).
		\item[\(\Dec(\dk_{\yb, \ell}, \{\ct_{i,\ell}\}_{i \in [n]})\):]
		On input of the decryption key \(\dk_{\yb, \ell}\) and ciphertexts \(\ct_{i,\ell}= (i, \ell, \mat c_i)\), parse \(\dk_{\yb, \ell} =(\ell, \{\yb_i\}_{i \in [n]}, z)\) and return \(\sum_{i \in [n]} \langle \mat c_i, \yb_i \rangle -z \mod q\).
	\end{description}
\end{construction}

\begin{theorem}\label{thm:nmcfe-dyno}
	Let \nmcfe be as in Construction \ref{def:sta-lnmcfeot} and \lnmcfe as in Construction \ref{def:lnmcfeot}.
	Then, for any \ppt adversary \adv against the \IND-security of \lnmcfe, there exists a \ppt adversary $\cB$ such that
	\begin{align*}
		\advantage^{\mathsf{IND}}_{\lnmcfeot, \adv}(\lambda) \leq \advantage^{\sta\text{-}\nh\text{-}\pos\text{-}\lab\text{-}\IND}_{\nmcfe, \cB}(\lambda).
	\end{align*}
\end{theorem}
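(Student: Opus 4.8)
The plan is to build a \ppt adversary $\cB$ that plays the non-dynamic $\sta\text{-}\nh\text{-}\pos\text{-}\lab\text{-}\IND$ game for \nmcfe with $n := \nmax$ clients and internally simulates the dynamic $\IND$ game for $\adv$. The only three places where Construction~\ref{def:lnmcfeot} and Construction~\ref{def:sta-lnmcfeot} differ are: (i) encryption keys are produced on demand by \EKeyGen rather than all at setup, (ii) functions (hence decryption keys) carry an index set $S \subseteq [\nmax]$, and (iii) the static game demands the stronger \pos coverage condition. I would handle these one at a time, arguing that $\cB$ reproduces $\adv$'s view exactly and is admissible whenever $\adv$ is, so that the two advantages coincide.

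Concretely, $\cB$ forwards the public parameters to $\adv$. Corruption queries $\QCor(i)$—which by definition precede all other queries and return independent uniform keys $\ski \sampler \{0,1\}^\lambda$—determine the set $\CS$; since these strings are independent of the challenge bit and of the honest keys, the adaptive corruption phase is equivalent to the static commitment of $\CS$ demanded by the $\sta$ game, so $\cB$ may commit $\CS$ and relay the challenger's corrupted keys to $\adv$. Each $\QEKeyGen(i)$ query returns nothing to $\adv$, so $\cB$ merely records that slot $i$ is registered (all static keys already exist); this bookkeeping lets $\cB$ output $\bot$ in exactly the cases where the dynamic \KeyGen would. An encryption query $\QEnc(i, x_i^0, x_i^1, \ell)$ is relayed as $\CQEnc(i, x_i^0, x_i^1, \ell)$—up to zero-padding vectors shorter than $\mmax$ and truncating the returned ciphertext back to length $m_\ell$, which is sound because $\PRF_{m_\ell}$ is the truncation of $\PRF$ and \Enc is otherwise syntactically identical.

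The substantive translation is for decryption keys. Given $\QKeyGen(S, f_{\yb}, \ell, \noise^0, \noise^1)$, $\cB$ forms the full-arity vector $\yb'$ with $\yb'_i := \yb_i$ for $i \in S$ and $\yb'_i := \mathbf{0}$ otherwise, queries $\CQKeyGen(f_{\yb'}, \ell, \noise^0, \noise^1)$, and repackages the returned scalar $z$ as $\dk_{\yb, \ell, S}$. This is a perfect simulation because the static challenger returns $z = \sum_{i \in [\nmax]} \langle \PRF(\ski, \ell), \yb'_i \rangle - \noise = \sum_{i \in S} \langle \PRF(\ski, \ell), \yb_i \rangle - \noise$, which is exactly the dynamic value. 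To meet \pos, just before halting $\cB$ issues, for every label $\ell$ used in a challenge query and every honest slot $i \in \HS$ not yet encrypted under $\ell$ by $\adv$, a dummy query $\CQEnc(i, \mathbf{0}, \mathbf{0}, \ell)$; these are invisible to $\adv$, and since such an $i$ lies outside every set $S$ queried under $\ell$ (by admissibility condition ii), it carries coefficient $\mathbf{0}$ in every function and never affects a key value. Finally $\cB$ echoes $\adv$'s output bit.

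It remains to check admissibility. Condition i) maps directly to the static condition on corrupted slots; \pos holds by construction, using the dummy encryptions together with the fact that $\adv$ makes at most one encryption per $(i,\ell)$; and the static equation~\eqref{eq:admissibility} reduces—after the corrupted-slot terms cancel and the zero-coefficient dummy slots drop out—precisely to the dynamic condition~\eqref{eq:admissibility-set} over $S$. Hence $\cB$ is admissible exactly when $\adv$ is, the simulation is perfect, and $\advantage^{\mathsf{IND}}_{\lnmcfeot, \adv}(\lambda) = \advantage^{\sta\text{-}\nh\text{-}\pos\text{-}\lab\text{-}\IND}_{\nmcfe, \cB}(\lambda)$, which yields the claimed bound. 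I expect the main obstacle to be bridging the weaker dynamic coverage requirement (condition ii, over $\HS \cap S$ only) with the stronger static \pos condition (over all of $\HS$): arranging the dummy encryptions so that they restore \pos without disturbing either $\adv$'s view or the admissibility equations is the delicate step, whereas the on-demand key generation is pure bookkeeping and the corruption ordering is a routine consequence of the keys being independent and uniform.
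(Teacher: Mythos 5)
Your reduction is essentially the paper's: set $n:=\nmax$, $m:=\mmax$, track registered slots, forward corruptions, zero-pad plaintexts to length $\mmax$ and truncate the returned ciphertexts, and zero-pad the function vectors over $[\nmax]\setminus S$ before relaying key queries. The one place you genuinely add something is the handling of the \pos{} condition: the paper's proof closes with the bare assertion that $\cB$ is admissible whenever $\adv$ is, but as you observe, a dynamic adversary need only encrypt for honest slots in $\HS\cap S$ under a challenged label, whereas the static \pos{} game demands exactly one challenge encryption for \emph{every} honest slot under that label; without your dummy queries $\CQEnc(i,\mathbf{0},\mathbf{0},\ell)$ for the uncovered honest slots (which are harmless because those slots carry coefficient $\mathbf{0}$ in every padded function and satisfy the one-query-per-$(i,\ell)$ rule), $\cB$ would be declared inadmissible and the reduction would collapse. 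So your write-up is correct and, on this point, more careful than the paper's.
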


\begin{proof}
	We denote by $\mat 0_k$ the all-zero vector of length $k$.
	$\cB$ simulates \adv's view as follows.
	Upon interaction, $\cB$ sets $n := \nmax$ and $m := \mmax$, which works as both are at most polynomial in $\lambda$.
	Moreover, $\cB$ sets $\cE := \emptyset$ as the set of indices for which an encryption key has been queried.

	If \adv poses a corruption query for slot $i \in [n]$, $\cB$ queries their own corruption oracle to obtain $\sk_i$, which is forwarded to \adv.
	$\cB$ updates $\cE := \cE \cup \{i\}$.

	Whenever $\adv$ queries its encryption key oracle on index $i$, $\cB$ sets $\cE := \cE \cup \{i\}$.

	Whenever \adv poses an encryption query $(i, \xb_i^0, \xb_i^1, \ell)$, where $\xb_i^\beta$ is of length $m_{\ell}$ $\cB$ checks if $i \in \cE$.
	If so, they query there own oracle $\CQEnc(i, \widetilde{\xb_i}^0, \widetilde{\xb_i}^1, \ell)$, where $\widetilde{\xb_i}^{\beta} = (\xb_i^\beta, \mat 0_{m-m_{\ell}}) \in \Z_q^{m}$ to obtain $\ctil$.
	They hand $\ctil' = (\ctil[1], \dots, \ctil[m_{\ell}])$ to \adv.
	Otherwise, they return nothing.

	Whenever \adv poses a decryption key query $(S, f, \ell, \noise^0, \noise^1)$, $\cB$ checks if $S \subseteq \cE$. If not, return $\bot$. Otherwise, $\cB$ interprets $f = \{\yb_i\}_{i \in S}$.
	They set $\widetilde{\yb_i} = \yb_i$ for $i \in S$, $\widetilde{\yb_i} = \mat 0_{m_{\ell}}$ for $i \in [n]\setminus S$ and interpret $f' = \{\widetilde{\yb_i}\}_{i \in [n]}$.
	$\cB$ queries their own oracle on $(f', \ell, \noise^0, \noise^1)$ and forwards the result.

	In the end, $\cB$ outputs the same bit as $\adv$.
	This perfectly simulates \adv's view.
	Note that $\cB$ is admissible by the admissibility of $\adv$.
\end{proof}

Now, in order to prove the security of Construction \ref{def:lnmcfeot}, i.e., \cref{thm:sec-nmifeot}, we simply have to prove the following theorem.

\begin{theorem}
	If \nmifeot from Construction~\ref{def:nmifeot} is \sta-\nh-\pos-$\mathsf{IND}$ secure and \PRF is secure, then \nmcfe from Construction \ref{def:sta-lnmcfeot} is \sta-\nh-\pos-\lab-$\mathsf{IND}$-secure. In particular, for any $\PPT$ adversary \adv,
 there exist $\PPT$ adversaries $\cB_1$ and $\cB_2$ such that
\begin{align*}
	\advantage^{\sta\text{-}\nh\text{-}\pos\text{-}\lab\text{-}\mathsf{IND}}_{\nmcfe, \adv}(\lambda) &\leq 2 n \ql \cdot \advantage_{\PRF, \cB_1}(\lambda)\\
	&+ \ql \cdot \advantage^{\sta\text{-}\nh\text{-}\pos\text{-}\mathsf{IND}}_{\nmifeot, \cB_2}(\lambda),
\end{align*}
where \ql denotes the number of distinct labels queried to \CQEnc and \CQKeyGen.
\end{theorem}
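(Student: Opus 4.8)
The plan is to combine the label-reduction lemma with a two-stage hybrid: first replace the PRF outputs of the honest clients by a truly random function, and then exploit the information-theoretic one-time-pad structure already captured by \nmifeot. By Lemma~\ref{lem:oneLab} it suffices to bound the $1$-\sta-\nh-\pos-\lab-\IND\ advantage, since that lemma contributes the outer factor \ql\ and reduces the analysis to a game in which all challenge encryption and challenge key queries live under a single label $\ell^*$. I therefore fix $\ell^*$ and work with an adversary \adv whose challenge queries all use it; by the rules of the $1$-label game, $\ell^*$ is touched exclusively by the challenge oracles, whereas every other label is touched only by the (input-independent) non-challenge oracles.

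Next I would introduce an intermediate game $\tilde{G}_\beta$, identical to the $1$-label game except that for every honest client $i \in \HS$ the map $\PRF(\ski, \cdot)$ is replaced by an independently sampled random function $\RF_i(\cdot)$, evaluated lazily. The passage from the real $1$-label game to $\tilde{G}_\beta$ is a standard hybrid over the at most $n$ honest clients: switching one honest client at a time yields a PRF distinguisher $\cB_1$ in the sense of Definition~\ref{def:PRF}, since the tested client is honest and hence its key \ski\ is never revealed, so oracle access to $\PRF(\ski,\cdot)$ suffices to simulate the remaining clients. Because corruptions are static, $\cB_1$ knows \CS\ at the start and simply samples the corrupted keys itself, answering all corruption and encryption queries for $i \in \CS$ directly. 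Performing this for $\beta = 0$ and $\beta = 1$ costs $2 n \cdot \advantage_{\PRF, \cB_1}(\lambda)$, which is the source of the factor $2n$.

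Finally, in $\tilde{G}_\beta$ the bit $\beta$ enters only through the challenge queries under $\ell^*$, and there each honest contribution is masked by $\RF_i(\ell^*)$, a fresh uniform element of $\Z_q^{\mmax}$ that appears nowhere else. This is precisely the one-time pad used by \nmifeot, so $|\Pr[\tilde{G}_0=1]-\Pr[\tilde{G}_1=1]|$ is bounded by the advantage of an \nmifeot\ adversary $\cB_2$: it maps the honest \nmcfe\ clients to its own \nmifeot\ clients, answers the $\ell^*$-challenge encryption and noise-hiding key queries through its \nmifeot\ challenger, and handles every corrupted client and every label $\ell\neq\ell^*$ locally using the self-chosen corrupted keys and lazily sampled values $\RF_i(\ell)$. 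For a key query under $\ell^*$, $\cB_2$ zeroes out the corrupted coordinates of the forwarded function, obtains the honest partial value $\sum_{i\in\HS}\langle \RF_i(\ell^*),\yb_i\rangle-\noise^\beta$ from its oracle, and adds back the known corrupted term $\sum_{i\in\CS}\langle\PRF(\ski,\ell^*),\yb_i\rangle$. Admissibility of \adv\ translates to admissibility of $\cB_2$ because corrupted slots satisfy $\xb_i^0=\xb_i^1$ and thus cancel, and the \pos\ one-query restriction is preserved slot by slot. Hence the $1$-label advantage is at most $2n\cdot\advantage_{\PRF,\cB_1}(\lambda)+\advantage^{\sta\text{-}\nh\text{-}\pos\text{-}\mathsf{IND}}_{\nmifeot,\cB_2}(\lambda)$, and multiplying by the Lemma's factor \ql\ gives the stated bound.

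The main obstacle I anticipate is the bookkeeping in this last step: a single random function per honest client is defined over \emph{all} labels, yet the \nmifeot\ challenger only supplies one set of pads (for $\ell^*$). I must argue that $\cB_2$ faithfully realizes $\RF_i(\ell^*)$ implicitly through the challenger while simulating $\RF_i(\ell)$ for $\ell\neq\ell^*$ itself, and that corruption, noise-hiding, and the one-query constraint are simultaneously respected on both sides of the reduction; everything else is routine. Combining the bound with the information-theoretic security of \nmifeot\ from Theorem~\ref{thm:sec-nmifeot} then makes the overall advantage negligible.
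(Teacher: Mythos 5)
Your proposal is correct and follows essentially the same route as the paper: Lemma~\ref{lem:oneLab} supplies the outer factor $\ql$, a per-honest-client hybrid replacing $\PRF(\ski,\cdot)$ by a lazily sampled random function (applied on both the $\beta=0$ and $\beta=1$ sides) yields the $2n\cdot\advantage_{\PRF,\cB_1}$ term, and the middle transition is a reduction to \nmifeot in which the corrupted slots and non-challenge labels are simulated locally, the corrupted coordinates of each key query are zeroed out before forwarding, and the corrupted partial sum is added back afterwards. The bookkeeping concern you flag at the end is resolved exactly as you suggest, and as in the paper's Lemma~\ref{lem:g1-g2}: the \nmifeot pads implicitly play the role of $\RF_i(\ell^*)$ for honest clients while the reduction samples $\RF_i(\ell)$ for $\ell\neq\ell^*$ itself.
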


\begin{proof}
For simplicity, we consider the case where \adv only queries \CQEnc and \CQKeyGen on label $\ell^*$ and never on any other label.
We show that for $\PPT$ adversaries $\cB$ and $\cB'$, it holds that
\begin{align*}
	\advantage^{1\text{-}\sta\text{-}\nh\text{-}\pos\text{-}\lab\text{-}\mathsf{IND}}_{\nmcfe, \adv}(\lambda)
	&\leq 2n \cdot \advantage_{\PRF, \cB}(\lambda)\\
	&+ \advantage^{\sta\text{-}\nh\text{-}\pos\text{-}\mathsf{IND}}_{\nmifeot, \cB'}(\lambda).
\end{align*}
By Lemma~\ref{lem:oneLab}, this concludes the theorem.

Assume that there are $h$ honest clients, which are by definition of the security game known by the challenger before \adv can make any queries and let $\HS = \{i_1, \dots, i_h\}$.
For our proof, we utilize a sequence of hybrid games.
Denote by \Win{i} the probability that \adv outputs the correct $\beta$ in game \game{i}, i.e., that \adv wins \game{i}.
The games are as follows:
\begin{description}
	\item[$\game{0}$:] This game corresponds to the case  $\beta =0$.
	\item[\game{1}:] This game is the same as \game{0}, except to answer any encryption or decryption key queries for label $\ell \in \Labels$, \challenger sets $\zeta_{i_\eta} = \RF(i_{\eta}, \ell)$ for all $\eta \in [h]$ with \RF being a random function computed on the fly.
	In other words, the pseudorandom values of the honest clients are changed to truly random values.
	In Lemma \ref{lem:g0g1}, we exhibit a $\PPT$ adversary $\cB_0$ such that:
	\begin{align*}
		|\Win{0}-\Win{1}| \leq h \cdot \advantage_{\PRF, \cB_0}(\lambda).
	\end{align*}
	\item[\game{2}:] This game is the same as \game{1}, except that \challenger answers all challenge queries with $\beta =1$, i.e.,
		$\ct_{i, \ell} = \zeta_i + \xb_i^1$
	as the output of our encryption for all $i \in [n], \ell \in \Labels$ and
	$z = \sum_{i \in [n]} \langle \zeta_i, \yb_i \rangle - \noise^1$
	as the output of the key generation algorithm.
	In Lemma \ref{lem:g1-g2}, we exhibit a $\PPT$ adversary $\cB_1$ such that:
	\begin{align*}
		|\Win{0}-\Win{1}| \leq \advantage^{\sta\text{-}\nh\text{-}\pos\text{-}\mathsf{IND}}_{\nmifeot, \cB_1}(\lambda).
	\end{align*}
	\item[\game{3}:] This game corresponds to the case where \challenger sets $\beta=1$ in the original game, i.e., it is the same as \game{1} except that $\zeta_i$ are  generated by \PRF instead of \RF for all $i \in [n]$.
	By the same argument as in Lemma \ref{lem:g0g1}, it holds that there exists a $\PPT$ adversary $\cB_2$ such that:
	\begin{align*}
		|\Win{2}-\Win{3}| \leq h \cdot \advantage_{\PRF, \cB_2}(\lambda).
	\end{align*}
\end{description}
Putting everything together with the fact that there can be at most $n$ honest clients, we obtain the theorem.
\end{proof}

\begin{lemma}\label{lem:g0g1}
	There exists a $\PPT$ adversary $\cB_0$ such that
	$$|\Win{0}-\Win{1}| \leq h \cdot \advantage_{\PRF, \cB_0}(\lambda),$$
	where $h$ denotes the number of honest clients.
\end{lemma}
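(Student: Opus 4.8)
The plan is to interpolate between $\game{0}$ and $\game{1}$ by a hybrid argument that swaps the pseudorandom masks of the honest clients to truly random ones \emph{one client at a time}, and then to collapse the $h$ hybrid steps into a single PRF adversary $\cB_0$ by letting it guess at which step to plant its challenge oracle. Concretely, write $\HS = \{i_1, \dots, i_h\}$ and define intermediate games $\mathsf{H}_0, \dots, \mathsf{H}_h$, where in $\mathsf{H}_\eta$ the challenger answers every encryption and decryption-key query by computing the masks of clients $i_1, \dots, i_\eta$ through a single, lazily sampled random function $\RF(\cdot,\cdot)$ of the client index and the label, and the masks of clients $i_{\eta+1}, \dots, i_h$ through $\PRF(\ski, \cdot)$; corrupted clients always keep their honestly sampled keys. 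By construction $\mathsf{H}_0 = \game{0}$ and $\mathsf{H}_h = \game{1}$, and both endpoints keep $\beta = 0$, so the challenger always encrypts $\xb_i^0$ and subtracts $\noise^0$.

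For the reduction, $\cB_0$ first samples a uniform index $\eta^* \sampler [h]$. Since corruptions are static, $\cB_0$ learns $\HS$ and $\CS$ at the outset and can fix this partition before any queries. It then simulates $\adv$'s game as follows: for clients $i_1, \dots, i_{\eta^*-1}$ it answers masks with a freshly sampled, memoized random function, so that repeated labels yield consistent values; for clients $i_{\eta^*+1}, \dots, i_h$ and for every corrupted client, it samples fresh keys $\ski \sampler \{0,1\}^\lambda$ itself and evaluates $\PRF(\ski, \cdot)$; and for the single client $i_{\eta^*}$ it forwards the queried label to its own PRF challenge oracle to obtain $\zeta_{i_{\eta^*}}$. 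With all masks in hand, $\cB_0$ assembles each ciphertext as $\ct_{i,\ell} = \xb_i^0 + \zeta_i \bmod q$ and each decryption key as $z = \sum_i \langle \zeta_i, \yb_i\rangle - \noise^0 \bmod q$, which it can always form because it possesses or can query every mask in the sum; crucially it never needs $\sk_{i_{\eta^*}}$ itself, which is consistent since $i_{\eta^*}$ is honest and hence never corrupted. Corruption queries $\QCor(i)$ for $i \in \CS$ are answered with the self-sampled $\ski$. Finally $\cB_0$ outputs $1$ if and only if $\adv$'s final bit equals $0$.

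When $\cB_0$'s challenge oracle is the real $\PRF$, client $i_{\eta^*}$ behaves pseudorandomly and the view is exactly $\mathsf{H}_{\eta^*-1}$; when it is a random function, client $i_{\eta^*}$ behaves truly randomly and the view is exactly $\mathsf{H}_{\eta^*}$. Writing $p_\eta := \Pr[\adv \text{ outputs } 0 \text{ in } \mathsf{H}_\eta]$, so that $p_0 = \Win{0}$ and $p_h = \Win{1}$, and averaging over the uniform choice of $\eta^*$, the two conditional output probabilities of $\cB_0$ telescope to $\frac1h(p_0 - p_h)$, giving $\advantage_{\PRF, \cB_0}(\lambda) = \frac1h|\Win{0} - \Win{1}|$, i.e. the claimed bound. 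The main point to verify — and the only genuine obstacle — is that $\cB_0$ is a faithful \emph{and} admissible simulator: faithfulness rests on memoizing the random function so that repeated labels agree and on the static-corruption assumption pinning down $\HS$ before any queries, while efficiency and admissibility follow since $h \le \nmax = \poly(\lambda)$ and each distinct label triggers at most one oracle call for $i_{\eta^*}$.
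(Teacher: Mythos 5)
Your proof is correct and follows essentially the same route as the paper: a hybrid argument over the $h$ honest clients that replaces $\PRF(\ski,\cdot)$ by a lazily sampled random function one client at a time, with the reduction embedding its PRF challenge at the switched client and simulating everything else using self-sampled keys (possible because corruptions are static). The only cosmetic difference is that you collapse the $h$ hybrid steps into a single adversary $\cB_0$ by guessing the hybrid index uniformly, whereas the paper constructs $h$ adversaries $\cB_{0,\eta}$ and sums their advantages; both yield the stated bound.
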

\begin{proof}
	We prove the lemma by using a hybrid argument over the $h$ honest clients, relying on the security of the \PRF.
	Let $\HS = \{i_1, \dots, i_h\}$.
	Without loss of generality, we assume  $i_1 < \dots < i_h$.
	Define \game{0, \eta}, $\eta \in [h] $, to be the same game as  $\game{0, \eta-1}$ where $\game{0,0} = \game{0}$, except to answer any encryption or decryption key queries for label $\ell \in \Labels$, \challenger sets $\zeta_{i_\eta} = \RF(i_{\eta}, \ell)$ with \RF being a random function computed on the fly.

	It is clear that
	\begin{align}\label{eq:hybridEta}
		|\Win{0}&-\Win{1}|
		= \sum_{\eta \in [h]} |\Win{0, \eta-1} - \Win{0, \eta}|
	\end{align}
	 as $\game{0,0} = \game{0}$ and $\game{1} = \game{0, h}$.

	We show that for every $\eta \in [h]$, there exists $\PPT$ adversary $\cB_{0, \eta}$ against the security of the \PRF such that
	\begin{align*}
		|\Win{0, \eta-1} - \Win{0, \eta}| \leq \advantage_{\PRF, \cB_{0, \eta}}(\lambda).
	\end{align*}

	We build $\cB_{0, \eta}$ so that it simulates \game{0, \eta -1 + \beta} to \adv when interacting with the experiment as described in Definition~\ref{def:PRF}.
	Given \CS sent by \adv, run $(\pp, \msk =  \{\ski\}_{i\in [n]}) \leftarrow \Setup(1^\lambda, m,n)$ and give $\pp$ and $\{\ski\}_{i \in \CS}$ to \adv.
	For all $i \in \CS$, $\cB_{0, \eta}$ can answer queries $\QEnc(i, \xb_i, \ell)$ and $\CQEnc(i, \xb_i^0, \xb_i^1, \ell^*)$ using $\sk_i$ to calculate $\ctil=\Enc(\ski, \xb_i, \ell)$ and $\ctil=\Enc(\ski, \xb_i^0, \ell^*)$, respectively.
	The ciphertext $\ctil$ is returned to \adv.

	In order to answer (challenge) encryption queries for $i \in \HS$ or (challenge) decryption key queries, $\cB_{0, \eta}$ sets $\zeta_i$ in the following manner:
	\begin{itemize}
		\item $i < i_\eta$: $\zeta_i = \RF(i, \ell)$,
		\item $i = i_\eta$: Query own oracle on input $(i_\eta, \ell)$ to obtain $\zeta_i = \zeta_{i_\eta}$,
		\item $i > i_\eta$: $\zeta_i = \PRF(\ski, \ell)$.
	\end{itemize}
	Any $\QEnc(i, \xb_i, \ell)$ and $\CQEnc(i, \xb_i^0, \xb_i^1, \ell^*)$ queries are answered using this $\zeta_i$, i.e., $\cB_{0, \eta}$ returns $\ctil= (i, \ell, \xb_i + \zeta_i \mod q$) and $\ctil= (i, \ell, \xb_i^0 + \zeta_i \mod q)$, respectively.
	To answer queries to $\QKeyGen(f_{\yb}, \noise, \ell)$ and $\CQKeyGen(f_{\yb}, \noise^0, \noise^1, \ell^*)$, first, $\cB_{0, \eta}$ calculates $\zeta_i = \PRF(\ski, \ell)$ for all $i \in \CS$.
	Then, for all $i \in \HS$, $\cB_{0, \eta}$ calculates $\zeta_i$ as above.
	Identify $f_{\yb}$ as $\yb = (\yb_1, \dots, \yb_n)$ and compute $z= \sum_{i \in [n]} \langle \zeta_i, \yb_i \rangle - \noise$, where $\noise:= \noise^0$ in challenge decryption key queries.
	Finally, send $\dk_{\yb,\ell} = (\ell, \yb, z)$ to \adv.

	In the end, $\cB_{0,\eta}$ outputs the same bit as \adv.
	Note that $\cB_{0,\eta}$ perfectly simulates \adv's view and is admissible with respect to $\adv$'s restrictions.
	Thus, by \cref{eq:hybridEta}, the claim follows.
\end{proof}
\begin{lemma}\label{lem:g1-g2}
	There exists a $\PPT$ adversary $\cB_1$ such that
	\begin{align*}
		|\Win{0}-\Win{1}| \leq \advantage^{\sta\text{-}\nh\text{-}\pos\text{-}\mathsf{IND}}_{\nmifeot, \cB_1}(\lambda).
	\end{align*}
\end{lemma}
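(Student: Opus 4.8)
The plan is to build a single \PPT adversary $\cB_1$ against the $\sta\text{-}\nh\text{-}\pos\text{-}\mathsf{IND}$-security of \nmifeot whose internal challenge bit interpolates between the two games compared in this lemma: $\game{0}$, in which the honest clients' keys are produced by the random function $\RF$ and every challenge query on the single challenge label $\ell^*$ is answered with the $\beta=0$ value, and $\game{1}$, which is identical except that challenge queries are answered with the $\beta=1$ value. The key observation is that, after the preceding $\PRF \to \RF$ switch, the honest keys $\zeta_{i_\eta}=\RF(i_\eta,\ell^*)$ used at $\ell^*$ are fresh, independent, uniformly random vectors in $\Z_q^m$ — precisely the distribution of the one-time-pad keys in \nmifeot. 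Hence the honest slots at $\ell^*$ can be planted directly inside an \nmifeot instance, and the only remaining difference between $\game{0}$ and $\game{1}$ is the switch of the embedded value from $\beta=0$ to $\beta=1$, which is exactly what \nmifeot's indistinguishability hides.

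Concretely, since corruptions are static, $\cB_1$ knows $\HS=\{i_1,\dots,i_h\}$ and $\CS$ before any query. It runs an \nmifeot game for $\F_{q,h}^{m}$, identifying \nmifeot slot $\eta$ with honest client $i_\eta$, and samples the keys $\ski\sampler\{0,1\}^\lambda$ of the corrupted clients itself so it can serve corruption queries and evaluate $\zeta_i=\PRF(\ski,\cdot)$ on its own. A challenge encryption query $\CQEnc(i_\eta,\xb^0,\xb^1,\ell^*)$ is relayed as an \nmifeot challenge encryption query on slot $\eta$; the returned ciphertext is distributed exactly as $\xb^\beta+\zeta_{i_\eta}$ for a uniform $\zeta_{i_\eta}$. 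A challenge key query $\CQKeyGen(f_{\yb},\ell^*,\noise^0,\noise^1)$ is answered by posing the \nmifeot challenge key query with coefficients $\{\yb_{i_\eta}\}_{\eta\in[h]}$ and noises $(\noise^0,\noise^1)$, obtaining $z_\HS=\sum_{\eta}\langle\zeta_{i_\eta},\yb_{i_\eta}\rangle-\noise^\beta$, and then adding the corrupted contribution $\sum_{i\in\CS}\langle\zeta_i,\yb_i\rangle$ computed by $\cB_1$ itself; this reconstructs the correct scalar $z=\sum_i\langle\zeta_i,\yb_i\rangle-\noise^\beta$. Every query on a non-challenge label $\ell\neq\ell^*$ is answered without the \nmifeot oracle, sampling a fresh uniform $\zeta_i$ per honest client on the fly and using $\PRF$ for corrupted ones, which is faithful because $\RF$ is independent across labels.

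Finally I would verify two points. For fidelity, $\cB_1$ reproduces $\game{0}$ when its \nmifeot bit is $0$ and $\game{1}$ when it is $1$, so that $|\Win{0}-\Win{1}|=\advantage^{\sta\text{-}\nh\text{-}\pos\text{-}\mathsf{IND}}_{\nmifeot,\cB_1}(\lambda)$, which is the claimed bound. For admissibility, the $\pos$ condition forces each honest slot to be encrypted exactly once at $\ell^*$ (otherwise $\game{0}$ and $\game{1}$ coincide and there is nothing to prove), matching \nmifeot's one-time restriction. The delicate step — and the one I expect to demand the most care — is checking that \adv's admissibility condition \eqref{eq:admissibility} at $\ell^*$ translates into \nmifeot admissibility for $\cB_1$: because corrupted clients satisfy $\xb_i^0=\xb_i^1$, their terms are identical under both bits and cancel, leaving exactly the honest-slot relation $\sum_{\eta}\langle\xb_{i_\eta}^0,\yb_{i_\eta}\rangle+\noise^0=\sum_{\eta}\langle\xb_{i_\eta}^1,\yb_{i_\eta}\rangle+\noise^1$ required of $\cB_1$'s embedded challenge key queries.
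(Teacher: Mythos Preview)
Your proposal is correct and follows essentially the same approach as the paper. The only cosmetic difference is that you instantiate the \nmifeot reduction on $h$ slots (one per honest client), whereas the paper runs it on all $n$ slots, queries the encryption oracle also on corrupted indices (ignoring the answer there), and zeroes out the corrupted coordinates in the key query; both packagings yield the identical simulation and the same admissibility check you describe.
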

\begin{proof}
	We build an adversary $\cB_1$ against the $\sta\text{-}\nh\text{-}\pos\text{-}\mathsf{IND}$ security of $\nmifeot$ as follows.

	Given \CS sent by \adv, $\cB_1$ samples $\mathsf{K}_i \leftarrow \{0,1\}^\lambda$ and hands $\{\ski\}_{i \in \CS}$ to \adv.
	Any encryption queries $\QEnc(i, \xb_i, \ell)$ from \adv are answered with $\ctil = (i, \ell, \xb_i + \zeta_i \mod q)$, where $\zeta_i = \PRF(\sk_i, \ell)$ if $i \in \CS$ and $\zeta_i = \RF(i, \ell)$ else, where \RF denotes a random function.
	Similarly, all decryption key queries $\QKeyGen(f_{\yb}, \noise, \ell)$ are answered by first calculating $\zeta_i = \PRF(\sk_i, \ell)$ if $i \in \CS$ and $\zeta_i = \RF(i, \ell)$ otherwise.
	Then, $\cB_1$ sets $z= \sum_{i \in [n]} \langle \zeta_i, \yb_i \rangle - \noise$ by identifying $f_{\yb}$ as $\yb = (\yb_1, \dots, \yb_n)$ and returns $\dk_{i, \ell}$ to \adv.

	Whenever $\cB_1$ obtains a  query $\CQEnc(i, \xb_i^0, \xb_i^1, \ell^*)$, they query their own encryption oracle on input $(i, \xb_i^0, \xb_i^1)$.
	If $i \in \HS$, the received ciphertext is directly forwarded to \adv.
	Otherwise, $\cB_1$ ignores the output of its oracle and calculates $\mat c_i = \PRF(\ski, \ell^*) + \xb_i^0 \mod q$ instead.
	The ciphertext $\ct_{i, \ell^*} = (i, \ell^*, \mat c_i)$ is then given to \adv.
	Note that in this case, it has to hold that $\xb_i^0 = \xb_i^1$ in order for \adv to be admissible.
	Thus,  $\cB_1$ does not make any assumptions about the value of $\beta$ by choosing to encrypt $x_i^0$.

	Upon receiving a challenge decryption key query $\CQKeyGen(f_{\yb}, \noise^0, \noise^1, \ell^*)$, $\cB_1$  views  $f_{\yb}$ as $\yb = (\yb_1, \dots, \yb_n)$ and calculates $z_1 = \sum_{i \in \CS} \langle \zeta_i, \yb_i \rangle$.
	Then, they queries their own decryption key oracle on input $(f_{\widetilde{\yb}}, \noise^0, \noise^1)$, where $\widetilde{\yb}_i = \yb_i$ if $i \in \HS$ and $0$ else, to obtain $\dk_{\widetilde{\yb}} = (\widetilde{\yb}, z_2)$.
	Send $\dk_{\yb, \ell^*} := (\ell^*, \yb, z_1+z_2)$ to \adv.

	At the end of their interaction,  $\cB_1$ outputs the same bit as \adv.
	If \adv is admissible, then so is $\cB_1$.
	First, $\cB_1$ queries each slot $i \in [n]$ of \nmifeot exactly once, if \adv poses exactly one challenge query $\CQEnc(i, \xb_i^0, \xb_i^1, \ell^*)$ and uses only one label $\ell^*$.
	Second, note that for $\cB_1$'s decryption key queries, it holds that
	\begin{align*}
		&\sum_{i \in [n]} \langle \xb_i^0, \widetilde{\yb}_i \rangle + \noise^0 = \sum_{i \in \HS} \langle \xb_i^0, \yb_i \rangle + \noise^0\\
		=&\sum_{i \in \HS} \langle \xb_i^1, \yb_i \rangle + \noise^1
		= \sum_{i \in [n]} \langle \xb_i^1, \widetilde{\yb}_i \rangle + \noise^1
	\end{align*}
	because $\sum_{i \in \CS} \langle \xb_i^0, \yb_i \rangle = \sum_{i \in \CS} \langle \xb_i^1, \yb_i \rangle$ by admissibility of \adv.

	Further, $\cB_1$ perfectly simulates \adv's view.
	This is clear for any (challenge) encryption and regular decryption key queries.
	In case of challenge decryption key queries, note that
	\begin{align*}
		z_1+z_2 = \sum_{i \in \CS} \langle \PRF(\ski, \ell^*), \yb_i \rangle + \sum_{i \in \HS} \langle \zeta_i, \yb_i \rangle - \noise^\beta,
	\end{align*}
	where $\zeta_i$ is a random value.
Thus, the lemma follows.
\end{proof}
	\section{Derivation of Formulas}\label{ap:logreg}


To implement the logistic regression, we have to know what the coefficients of $\yb$ are for each element in $\xb_i$.
For brevity, let  $\xb_i[m+1] = y_i$.
Writing~\cref{eq:theta-j} for $j \in [0;m]$ as a linear function for fixed coefficients up to degree 4 yields
\begin{align*}
    \theta&[j] := \theta[j] + \sum_{i \in [n]} \frac{\alpha}{n}\left(\left(- \frac{1}{2} +(a_1 \theta[0]^3 -a_2 \theta[0]) + y_i \right)\xb_i[j]\right.\\
    &+\sum_{k \in [m]}(-a_2 \theta[k] + 3a_1 \theta[0]^2)\xb_i[k] \xb_i[j]\\
    &+\sum_{\substack{k_1, .., k_m \geq 0\\ k_1 + .. +k_m =2} }3 a_1\theta[0]\binom{2}{k_1, .., k_m} \prod_{l \in [m]} \theta[l]^{k_l} \xb_i[l]^{k_l} \xb_i[j]\\
    &+\sum_{\substack{k_1, .., k_m \geq 0\\ k_1 + .. +k_m =3} }3 a_1 \binom{3}{k_1, .., k_m} \prod_{l \in [m]} \theta[l]^{k_l} \xb_i[l]^{k_l}  \xb_i[j]\Bigg),
\end{align*}
where $$\binom{n}{k_1, k_2, .., k_m} = \frac{n!}{k_1! k_2! \cdots k_m!}$$ is the multinomial coefficient.
This way, we have a formula for $\yb$ for each entry in the extended $\widetilde{\xb}_i$.

\end{document}